\def\rem{$\mathit{REM}$}
\def\sra{$\mathit{SRA}$}
\def\srem{$\mathit{SREM}$}
\def\ssra{$\mathit{sSRA}$}
\def\ssrem{$\mathit{sSREM}$}
\def\wsrem{$\mathit{wSREM}$}
\def\gsra{$\mathit{gSRA}$}
\def\dsra{$\mathit{dSRA}$}
\def\nsra{$\mathit{nSRA}$}
\def\usra{$\mathit{uSRA}$}
\def\pst{$\mathit{PST}$}
\newcommand*{\FI}{}%
\begin{document}

%
\setcounter{page}{1}
\publyear{2021}
\papernumber{0001}
\volume{178}
\issue{1}
%

\title{Symbolic Register Automata for Complex Event Recognition and Forecasting}


\author{Elias Alevizos\\
Department of Informatics,\\ 
National and Kapodistrian University of Athens, Greece\\ 
Institute of Informatics \& Telecommunications,\\ 
National Center for Scientific Research ``Demokritos'', Greece\\
ilalev{@}di.uoa.gr,alevizos.elias{@}iit.demokritos.gr
\and 
Alexander Artikis\\
Department of Maritime Studies, \\
University of Piraeus, Greece \\ 
Institute of Informatics \& Telecommunications,\\ 
National Center for Scientific Research ``Demokritos'', Greece \\
a.artikis{@}unipi.gr
\and 
Georgios Paliouras\\
Institute of Informatics \& Telecommunications,\\ 
National Center for Scientific Research ``Demokritos'', Greece \\
paliourg{@}iit.demokritos.gr} 

\maketitle

\runninghead{E. Alevizos, A. Artikis, G. Paliouras}{Symbolic Automata with Memory for CER/F}

\begin{abstract}
We propose an automaton model which is a combination of symbolic and register automata, i.e.,
we enrich symbolic automata with memory.
We call such automata Symbolic Register Automata (\sra).
\sra\ extend the expressive power of symbolic automata,
by allowing Boolean formulas to be applied not only to the last element read from the input string,
but to multiple elements, 
stored in their registers. 
\sra\ also extend register automata, 
by allowing arbitrary Boolean formulas, 
besides equality predicates.
We study the closure properties of \sra\ under union, intersection, concatenation, Kleene closure, complement and determinization and show that \sra,
contrary to symbolic automata,
are not in general closed under complement and they are not determinizable.
However,
they are closed under these operations when a window operator, 
quintessential in Complex Event Recognition, 
is used.
We show how \sra\ can be used in Complex Event Recognition in order to detect patterns upon streams of events,
using our framework that provides declarative and compositional semantics, 
and that allows for a systematic treatment of such automata.
We also show how the behavior of \sra,
as they consume streams of events, 
can be given a probabilistic description with the help of prediction suffix trees.
This allows us to go one step beyond Complex Event Recognition to Complex Event Forecasting,
where,
besides detecting complex patterns,
we can also efficiently forecast their occurrence. 
\end{abstract}

\begin{keywords}
Finite Automata, Regular Expressions, Complex Event Processing, Symbolic Automata, Register Automata, Variable-order Markov Models
\end{keywords}

\section{Introduction}
\label{sec:intro}

A Complex Event Recognition (CER) system takes as input a stream of events,
along with a set of patterns,
defining relations among the input events,
and detects instances of pattern satisfaction,
thus producing an output stream of complex events \cite{DBLP:journals/vldb/GiatrakosAADG20,DBLP:books/daglib/0017658,DBLP:journals/csur/CugolaM12}.
Typically, an event has the structure of a tuple of values which might be numerical or categorical.
Since time is of critical importance for CEP,
a temporal formalism is used in order to define the patterns to be detected.
Such a pattern imposes temporal (and possibly atemporal) constraints on the input events,
which, if satisfied, lead to the detection of a complex event.
Atemporal constraints may be ``local'',
applying only to the last event read from the input stream.
For example, 
in streams from temperature sensors,
the constraint that the temperature of the last event is higher than some constant threshold would constitute such a local constraint.
Alternatively, these constraints might involve multiple events of the pattern,
e.g., 
the constraint that the temperature of the last event is higher than that of the previous event.
The input to a CER system thus consists of two main components: 
a stream of events, 
also called simple derived events (SDEs);
and a set of patterns that define relations among the SDEs.
Instances of pattern satisfaction are called Complex Events (CEs).
The output of the system is another stream, 
composed of the detected CEs.
CEs must often be detected with very low latency,
which, in certain cases, 
may even be in the order of a few milliseconds \cite{DBLP:books/daglib/0017658,DBLP:books/daglib/0024062,hedtstuck_complex_2017}.

Automata are of particular interest for the field of CER,
because they provide a natural way of handling sequences.
As a result, 
the usual operators of regular expressions, 
concatenation, union and Kleene-star,
have often been given an implicit temporal interpretation in CER.
For example, 
the concatenation of two events is said to occur whenever the second event is read by an automaton after the first one,
i.e.,
whenever the timestamp of the second event is greater than the timestamp of the first 
(assuming the input events are temporally ordered).
On the other hand,
atemporal constraints are not easy to define using classical automata,
since they either work without memory or, 
even if they do include a memory structure,
e.g., as with push-down automata,
they can only work with a finite alphabet of input symbols.
For this reason,
the CER community has proposed several extensions of classical automata.
These extended automata have the ability to store input events and later retrieve them in order to evaluate whether a constraint is satisfied \cite{DBLP:conf/cidr/DemersGPRSW07,DBLP:conf/sigmod/AgrawalDGI08,DBLP:journals/csur/CugolaM12}.
They resemble both register automata \cite{DBLP:journals/tcs/KaminskiF94},
through their ability to store events,
and symbolic automata \cite{DBLP:conf/cav/DAntoniV17},
through the use of predicates on their transitions.
They differ from symbolic automata in that predicates apply to multiple events, 
retrieved from the memory structure that holds previous events.
They differ from register automata in that predicates may be more complex than that of (in)equality.

One issue with these automata is that their properties have not been systematically investigated,
as is the case with models derived directly from the field of languages and automata
(see \cite{DBLP:conf/icdt/GrezRU19} for a discussion about the weaknesses of automaton models in CER). 
Moreover, they sometimes need to impose restrictions on the use of regular expression operators in a pattern, 
e.g., nesting of Kleene-star operators is not allowed.
A recently proposed formal framework for CER attempts to address these issues \cite{DBLP:conf/icdt/GrezRU19}.
Its advantage is that it provides a logic for CER patterns, 
with simple denotational and compositional semantics, 
but without imposing severe restrictions on the use of operators.
An automaton model is also proposed which may be conceived as a variation of symbolic transducers \cite{DBLP:conf/cav/DAntoniV17}. 
However, this automaton model can only handle ``local'' constraints,
i.e.,
the formulas on their transitions are unary and thus are applied only to the last event read.

We propose an automaton model that is a combination of symbolic and register automata.
It has the ability to store events and its transitions have guards in the form of $n$-ary conditions.
These conditions may be applied both to the last event and to past events that have been stored. 
Conditions on multiple events are crucial in CER because they allow us to express many patterns of interest,
e.g., an increasing trend in the speed of a vehicle.
We call such automata \emph{Symbolic Register Automata} (\sra).
\sra\ extend the expressive power of symbolic and register automata, 
by allowing for more complex patterns to be defined and detected on a stream of events.
We also present a language with which we can define patterns for complex events that can then be translated to \sra.
We call such patterns \emph{Symbolic Regular Expressions with Memory} (\srem),
as an extension of the work presented in \cite{DBLP:journals/jcss/LibkinTV15},
where \emph{Regular Expression with Memory} (\rem) are defined and investigated.
\rem\ are extensions of classical regular expressions with which we allow some of the terminal symbols of an expression to be stored and later be compared for (in)equality.
\srem\ allow for more complex conditions to be used,
besides those of (in)equality.

We then show how \srem\ and \sra\ may be used in order to perform Complex Event Forecasting (CEF).
Our solution allows a user to define a pattern for a complex event in the form of a \srem.
It then constructs a probabilistic model for such a pattern in order to forecast,
on the basis of an event stream,
if and when a complex event is expected to occur.
We use prediction suffix trees \cite{DBLP:journals/ml/RonST96,DBLP:conf/nips/RonST93} to learn a probabilistic model for the pattern and the \sra\ corresponding to this pattern.    
We have already presented how symbolic automata without registers may be combined with prediction suffix trees for the purpose of CEF
\cite{DBLP:journals/vldb/AlevizosAP21}.
We show here when and how symbolic automata with memory can be combined with prediction suffix trees for the same purpose.
Prediction suffix trees fall under the class of the so-called variable-order Markov models.
They are Markov models whose order (how deep into the past they can look for dependencies) can be increased beyond what is computationally possible with full-order models. 
They can do this by avoiding a full enumeration of every possible dependency and focusing only on ``meaningful'' dependencies.
Efficient and early CEF would thus allow analysts to take proactive action when critical situations are expected to happen,
e.g., to alert maritime authorities for the possible collision of vessels at sea. 

The contributions of the paper may be summarized as follows:
\begin{itemize}
	\item We present a language for CER, 
	Symbolic Regular Expressions with Memory (\srem).
	\item We present a computational model for patterns written in \srem, 
	Symbolic Register Automata (\sra),
	whose main feature is that it allows for relating multiple events in a pattern.
	Constraints with multiple events are essential in CER,
	since they are required in order to capture many patterns of interest,
	e.g., an increasing or decreasing trend in stock prices.
	\item We show that \sra\ and \srem\ are equivalent,
	i.e., they accept the same set of languages.
	\item We study the closure properties of \sra\ (and \srem). 
	We show that, 
	in the general case, 
	they are closed under the most usual operators (union, intersection, concatenation and Kleene-star), 
	but not under complement and determinization.
	Failure of closure under complement implies that negation cannot be arbitrarily (i.e., in a compositional manner) used in CER patterns.
	The negative result about determinization implies that certain techniques requiring deterministic automata,
	like the ones we will describe later for event forecasting,
	are not applicable.
	\item We show that, 
	by using windows, 
	\sra\ are able to retain their nice closure properties,
	i.e., they remain closed under complement and determinization.
	Windows are an indispensable operator in CER because, among others, they limit the search space when attempting to find matches for a pattern.
	\item We show how \sra\ with windows can be combined with Prediction Suffix Trees in order to perform CEF,
	thus extending our previous work from symbolic automata to symbolic register automata \cite{DBLP:journals/vldb/AlevizosAP21}. 
\end{itemize}
All proofs and complete algorithms may be found in the Appendix,
Section \ref{section:appendix}. 
Please, note that the results of this paper are presented with CER in mind.
However, we need to stress that they are not restricted to CER.
They are general results,
applicable to any strings and not just to streams of events.
In fact,
one may treat CER as a special case of string processing.
Thus, our contributions lie both in the more specific field of CER and in the more general one of formal languages and automata theory.

\begin{example}
\label{example:sensors}
\begin{table}[t]
\centering
\caption{Example stream.}
\begin{tabular}{cccccccc} 
\toprule
type & T & T & T & H & H & T & ... \\ 
\midrule
id & 1 & 1 & 2 & 1 & 1 & 2 & ... \\
\midrule
value & 22 & 24 & 32 & 70 & 68 & 33 & ... \\
\midrule
index & 1 & 2 & 3 & 4 & 5 & 6 & ... \\
\bottomrule
\end{tabular}
\label{table:example_stream}
\end{table}
We now introduce an example which will be used throughout the paper to provide intuition
(borrowed from \cite{DBLP:conf/icdt/GrezRU19}).
The example is that of a set of sensors taking temperature and humidity measurements,
monitoring an area for the possible eruption of fires.
A stream is a sequence of input events,
where each such event is a tuple of the form $(\mathit{type},\mathit{id},\mathit{value})$.
The first attribute ($\mathit{type}$) is the type of measurement: 
$H$ for humidity and $T$ for temperature.
The second one ($\mathit{id}$) is an integer identifier, 
unique for each sensor.
It has a finite set of possible values.
Finally, the third one ($\mathit{value}$) is the real-valued measurement from a possibly infinite set of values.
Table \ref{table:example_stream} shows an example of such a stream.
We assume that events are temporally ordered and their order is implicitly provided through the index. $\diamond$
\end{example}

The structure of the paper is as follows.
In Section \ref{sec:related} we present the state-of-the-art in automata theory with respect to automaton models that can store elements and models that can use more complex conditions on their transitions than simple equality.
In Section \ref{sec:grammar} we discuss extensively the grammar and the semantics of \srem.
Next,
in Section \ref{sec:sra} we define \sra\ and the languages that they recognize.
In Section \ref{sec:sra_properties} we show that \sra\ and \srem\ are equivalent.
We show that \sra\ and \srem\ are closed under union, intersection, concatenation and Kleene-star,
but not under complement and determinization.
We also define windowed \srem\ and \sra\ and show that windows make \sra\ and \srem\ closed under complement and determinization.
In Section \ref{sec:streaming_sra},
we discuss how \sra\ can be used for CER,
whereas Section \ref{sec:complexity} briefly discusses some complexity issues.
Subsequently,
Section \ref{sec:markov} discusses how \sra\ can be given a probabilistic description through the use of variable-order Markov models in order to perform CEF.
We conclude with Section \ref{sec:outro} where we summarize our contributions and discuss possible avenues for future work.

\section{Related Work}
\label{sec:related}

Because of their ability to naturally handle sequences of characters,
automata have been extensively adopted in CER,
where they are adapted in order to handle streams composed of tuples.
Typical cases of CER systems that employ automata are
the Chronicle Recognition System \cite{DBLP:conf/kr/Ghallab96,DBLP:conf/ijcai/DoussonM07},
Cayuga \cite{DBLP:conf/cidr/DemersGPRSW07},
TESLA \cite{DBLP:conf/debs/CugolaM10} and
SASE \cite{DBLP:conf/sigmod/AgrawalDGI08,DBLP:conf/sigmod/ZhangDI14}.
There also exist systems that do not employ automata as their computational model,
e.g., 
there are logic-based systems \cite{DBLP:journals/tkde/ArtikisSP15} or systems that use trees \cite{DBLP:conf/sigmod/MeiM09},
but the standard operators of concatenation, union and Kleene-star are quite common and they may be considered as a reasonable set of core operators for CER.
For an overview of CER languages, 
see \cite{DBLP:journals/vldb/GiatrakosAADG20},
and for a general review of CER systems, 
see \cite{DBLP:journals/csur/CugolaM12}.

However, 
current CER systems do not have the full expressive power of regular expressions,
e.g.,
SASE does not allow for nesting Kleene-star operators. 
Moreover, 
due to the various approaches implementing the basic operators and extensions in their own way,
there is a lack of a common ground that could act as a basis for systematically understanding the properties of these automaton models.
The abundance of different CER systems,
employing various computational models and using various formalisms 
has recently led to some attempts at providing a unifying framework 
\cite{DBLP:conf/icdt/GrezRU19,DBLP:journals/corr/Halle17}.
Specifically, 
in \cite{DBLP:conf/icdt/GrezRU19},
a set of core CER operators is identified,
a formal framework is proposed that provides denotational semantics for CER patterns, 
and a computational model is described for capturing such patterns.

Outside the field of CER,
research on automata has evolved towards various directions.
Besides the well-known push-down automata that can store elements from a finite set to a stack,
there have appeared other automaton models with memory,
such as register automata, 
pebble automata and 
data automata \cite{DBLP:journals/tcs/KaminskiF94,DBLP:journals/tocl/NevenSV04,DBLP:journals/tocl/BojanczykDMSS11}.
For a review, 
see \cite{DBLP:conf/csl/Segoufin06}.
Such models are especially useful when the input alphabet cannot be assumed to be finite,
as is often the case with CER.
Register automata (initially called finite-memory automata) constitute one of the earliest such proposals \cite{DBLP:journals/tcs/KaminskiF94}.
At each transition,
a register automaton may choose to store its current input 
(more precisely, the current input's data payload)
to one of a finite set of registers.
A transition is followed if the current input is equal with the contents of some register.
With register automata,
it is possible to recognize strings constructed from an infinite alphabet,
through the use of (in)equality comparisons among the data carried by the current input and the data stored in the registers.
However,
register automata do not always have nice closure properties,
e.g.,
they are not closed under determinization
For an extensive study of register automata, 
see \cite{DBLP:journals/jcss/LibkinTV15,DBLP:conf/lpar/LibkinV12}.
We build on the framework presented in \cite{DBLP:journals/jcss/LibkinTV15,DBLP:conf/lpar/LibkinV12} in order to construct register automata with the ability to handle ``arbitrary'' structures,
besides those containing only (in)equality relations.

Another model that is of interest for CER is the symbolic automaton,
which allows CER patterns to apply constraints on the attributes of events.
Automata that have predicates on their transitions were already proposed in \cite{DBLP:journals/grammars/NoordG01}.
This initial idea has recently been expanded and more fully investigated in symbolic automata 
\cite{DBLP:conf/lpar/VeanesBM10,DBLP:conf/wia/Veanes13,DBLP:conf/cav/DAntoniV17}.
In this automaton model,
transitions are equipped with formulas constructed from a Boolean algebra.
A transition is followed
if its formula,
applied to the current input,
evaluates to true.
The work presented in \cite{DBLP:conf/icdt/GrezRU19,DBLP:conf/icdt/GrezRUV20} may also be categorized under this class of ``unary'' symbolic automata 
(or transducers, to be more precise).
Contrary to register automata,
symbolic automata have nice closure properties,
but their formulas are unary and thus can only be applied to a single element from the input string.

This is one limitation that we address in this paper.
We propose an automaton model, 
called \textit{Symbolic Register Automata} (\sra), 
whose transitions can apply $n$-ary formulas/conditions (with $n{>}1$) on multiple elements.
\sra\ are thus more expressive than symbolic and register automata,
thus being suitable for practical CER applications,
while, at the same time,
their properties can be systematically investigated,
as in standard automata theory.
In fact, our model subsumes these two automaton models as special cases. 

We also show how this new automaton model can be given a probabilistic description in order to perform forecasting,
i.e., predict the occurrence of a complex event before it is actually detected by the automaton.
However, 
forecasting has not received much attention in the field of CER,
despite the fact that it is an active research topic in various related research areas,
such as time-series forecasting \cite{montgomery2015introduction},
sequence prediction \cite{DBLP:journals/jair/BegleiterEY04,DBLP:journals/ml/RonST96,DBLP:journals/tcom/ClearyW84,DBLP:journals/tit/WillemsST95}, 
temporal mining \cite{DBLP:conf/icdm/VilaltaM02,DBLP:conf/kdd/LaxmanTW08,DBLP:journals/eswa/ZhouCG15,DBLP:journals/vldb/ChoWYZC11} and
event sequence prediction and point-of-interest recommendations through neural networks \cite{DBLP:conf/ijcai/LiDL18,DBLP:conf/ijcai/ChangPPKK18}.
These methods are powerful in predicting the next numerical value(s) in a time-series or the next input event(s) in a sequence of events,
but they suffer from limitations that render them unsuitable for CEF.
In CEF we are interested in both numerical and categorical values, 
related through complex patterns and involving multiple variables.
Such patterns require a language to be defined,
much like SQL in databases.
Our goal is to forecast the occurrence of such complex events defined via patterns and not input events.
Input event forecasting is actually not very useful for CER,
since the majority of input events are ignored, 
without contributing to the detection of complex events.
The number of complex events is typically orders of magnitude lower than that of input events.

Some conceptual proposals have acknowledged the need for CEF though \cite{DBLP:conf/bci/FulopBTDVF12,DBLP:conf/debs/EngelE11,DBLP:conf/edoc/ChristKK16}.
In what follows,
we briefly present the relatively few previous concrete attempts at CEF.
The first such attempt at CEF was presented in \cite{DBLP:conf/debs/MuthusamyLJ10},
where a variant of regular expressions and automata was used to define complex event patterns,
along with Markov chains.
Each automaton state was mapped to a Markov chain state.
Symbolic automata and Markov chains were again used in \cite{DBLP:conf/debs/AlevizosAP17,DBLP:conf/lpar/AlevizosAP18}.
The problem with these approaches is that they are essentially unable to encode higher-order dependencies,
since high-order Markov chains may lead to a combinatorial explosion of the number of states. 
In \cite{DBLP:conf/colcom/PandeyNC11},
complex events were defined through transitions systems and Hidden Markov Models (HMM) were used to construct a probabilistic model.
The observable variable of the HMM corresponded to the states of the transition system.
HMMs are in general more powerful than Markov chains,
but,
in practice, the may be hard to train (\cite{DBLP:journals/jair/BegleiterEY04,DBLP:journals/ml/AbeW92}) and require elaborate domain modeling,
since mapping a pattern to a HMM is not straightforward.
In contrast, 
our approach constructs seamlessly a probabilistic model from a given CE pattern (declaratively defined).
Knowledge graphs were used in in \cite{DBLP:journals/pvldb/LiGC20} to encode events and their timing relationships. 
Stochastic gradient descent was employed to learn the weights of the graph's edges that determine how important an event is with respect to another target event.
However, this approach falls in the category of input event forecasting, 
as it does not target complex events.

\section{A Grammar for Symbolic Regular Expressions with Memory}
\label{sec:grammar}

Before presenting \sra,
we first present a high-level formalism for defining CER patterns.
We extend the work presented in \cite{DBLP:journals/jcss/LibkinTV15},
where the notion of regular expressions with memory (\rem) was introduced.
These regular expressions can store some terminal symbols in order to compare them later against a new input element for (in)equality.
One important limitation of \rem\ with respect to CER is that they can handle only (in)equality relations.
In this section,
we extend \rem\ so as to endow them with the capacity to use relations from ``arbitrary'' structures.
We call these extended \rem\ \emph{Symbolic Regular Expressions with Memory} (\srem).

First, 
in Section \ref{sec:formulas_models} we repeat some basic definitions from logic theory.
We also describe how we can adapt them and simplify them to suit our needs.
Next,
in Section \ref{sec:conditions} we precisely define the notion of conditions.
In \srem,
conditions will act in a manner equivalent to that of terminal symbols in classical regular expressions.
The difference is of course that conditions are essentially logic formulas that can reference both the current element read from a string/stream and possibly some past elements.
Finally,
in Section \ref{sec:srem} we provide a precise definition for \srem\ and their semantics.

\subsection{Formulas and Models}
\label{sec:formulas_models}

In this section,
we follow the notation and notions presented in \cite{hedman2004first}.
The first notion that we need is that of a $\mathcal{V}$-structure.
A $\mathcal{V}$-structure essentially describes a domain along with the operations that can be performed on the elements of this domain and their interpretation. 
\begin{definition}[$\mathcal{V}$-structure \cite{hedman2004first}]
A vocabulary $V$ is a set of function, relation and constant symbols.
A $\mathcal{V}$-structure is an underlying set $\mathcal{U}$, called a universe, and an interpretation of $\mathcal{V}$.
An interpretation assigns an element of $\mathcal{U}$ to each constant in $\mathcal{V}$, a function from $\mathcal{U}^{n}$ to $\mathcal{U}$ to each $n$-ary function in $\mathcal{V}$ and a subset of $\mathcal{U}^{n}$ to each $n$-ary relation in $\mathcal{V}$.
$\blacktriangleleft$
\end{definition}

\begin{example}
Using Example \ref{example:sensors},
we can define the following vocabulary
\begin{equation*}
\mathcal{V} = \{R,c_{1},c_{2},c_{3},c_{4},c_{5},c_{6}\}
\end{equation*}
and the universe 
\begin{equation*}
\mathcal{U} = \{(T,1,22),(T,1,24),(T,2,32),(H,1,70),(H,1,68),(T,2,33)\}
\end{equation*}
We can also define an interpretation of $V$ by assigning each $c_{i}$ to an element of $\mathcal{U}$,
e.g., $c_{1}$ to $(T,1,22)$, $c_{2}$ to $(T,1,24)$, etc.
$R$ may also be interpreted as $R(x,y) := x.\mathit{id} = y.\mathit{id}$,
i.e., this binary relation contains all pairs of $\mathcal{U}$ which have the same $\mathit{id}$.
For example, $((T,1,22),(H,1,70)) \in R$ and  $((T,1,22),(T,2,33)) \notin R$.
If there are more (even infinite) tuples in a stream/string,
then we would also need more constants (even infinite). $\diamond$
\end{example}

We extend the terminology from classical regular expressions to define characters, strings and languages.
Elements of $\mathcal{U}$ are called \emph{characters} and finite sequences of characters are called \emph{strings}. 
A set of strings $\mathcal{L}$ constructed from elements of $\mathcal{U}$ ($\mathcal{L} \subseteq \mathcal{U}^{*}$, 
where $^{*}$ denotes Kleene-star) is called a language over $\mathcal{U}$.
We can also define streams as follows.
A stream $S$ is an infinite sequence $S=t_{1},t_{2},\cdots$, 
where each $t_{i}$ is a character ($t_{i} \in \mathcal{U}$).
By $S_{1..k}$ we denote the sub-string of $S$ composed of the first $k$ elements of $S$.
$S_{m..k}$ denotes the slice of $S$ starting from the $m^{th}$ and ending at the $k^{th}$ element.

We now define the syntax and semantics of formulas that can be constructed from the constants, relations and functions of a $\mathcal{V}$-structure.
We begin with the definition of terms.
\begin{definition}[Term \cite{hedman2004first}]
A term is defined inductively as follows:
\begin{itemize}
	\item Every constant is a term.
	\item If $f$ is an $m$-ary function and $t_{1}, \cdots, t_{m}$ are terms,
	then $f(t_{1}, \cdots, t_{m})$ is also a term. $\blacktriangleleft$
\end{itemize}
\end{definition}

Using terms, relations and the usual Boolean constructs of conjunction, disjunction and negation,
we can define formulas.
\begin{definition}[Formula \cite{hedman2004first}]
\label{definition:formula}
Let $t_{i}$ be terms.
A formula is defined as follows:
\begin{itemize}
	\item If $P$ is an $n$-ary relation, then $P(t_{1}, \cdots, t_{n})$ is a formula (an atomic formula).
	\item If $\phi$ is a formula, $\neg \phi$ is also a formula.
	\item If $\phi_{1}$ and $\phi_{2}$ are formulas, $\phi_{1} \wedge \phi_{2}$ is also a formula.
	\item If $\phi_{1}$ and $\phi_{2}$ are formulas, $\phi_{1} \vee \phi_{2}$ is also a formula. $\blacktriangleleft$
\end{itemize}
\end{definition}
\begin{definition}[$\mathcal{V}$-formula \cite{hedman2004first}]
If $\mathcal{V}$ is a vocabulary, 
then a formula in which every function, relation and constant is in $\mathcal{V}$ is called a $\mathcal{V}$-formula.
$\blacktriangleleft$
\end{definition}

\begin{example}
Continuing with our example,
$R(c_{1},c_{4})$ is an atomic $\mathcal{V}$-formula. 
$R(c_{1},c_{4}) \wedge \neg R(c_{1},c_{3})$ is also a (complex) $\mathcal{V}$-formula,
where $\mathcal{V} = \{R,c_{1},c_{2},c_{3},c_{4},c_{5},c_{6}\}$. $\diamond$
\end{example}

Notice that in typical definitions of terms and formulas
(as found in \cite{hedman2004first})
variables are also present.
A variable is also a term.
Variables are also used in existential and universal quantifiers to construct formulas.
In our case, 
we will not be using variables in the above sense
(instead, as explained below, we will use variables to refer to registers).
Thus, existential and universal formulas will not be used.
In principle, 
they could be used, 
but their use would be counter-intuitive.
At every new event,
we need to check whether this event satisfies some properties,
possibly in relation to previous events.
A universal or existential formula would need to check every event
(variables would refer to events),
both past and future,
to see if all of them or at least one of them (from the universe $\mathcal{U}$) satisfy a given property.
Since we will not be using variables,
there is also no notion of free variables in formulas
(variables occurring in formulas that are not quantified).
Thus, every formula is also a sentence,
since sentences are formulas without free variables.
In what follows,
we will thus not differentiate between formulas and sentences.

We can now define the semantics of a formula with respect to a $\mathcal{V}$-structure.
\begin{definition}[Model of $\mathcal{V}$-formulas \cite{hedman2004first}]
\label{definition:models_formulas}
Let $\mathcal{M}$ be a $\mathcal{V}$-structure and $\phi$ a $\mathcal{V}$-formula.
We define $\mathcal{M} \models \phi$ ($\mathcal{M}$ models $\phi$) as follows:
\begin{itemize}
	\item If $\phi$ is atomic, i.e. $\phi = P(t_{1}, \cdots, t_{m})$, then $\mathcal{M} \models P(t_{1}, \cdots, t_{m})$ iff the tuple $(a_{1}, \cdots, a_{m})$ is in the subset of $\mathcal{U}^{m}$ assigned to $P$,
	where $a_{i}$ are the elements of $\mathcal{U}$ assigned to the terms $t_{i}$.
	\item If $\phi := \neg \psi$, then $\mathcal{M} \models \phi$ iff $\mathcal{M} \nvDash \psi$. 
	\item If $\phi := \phi_{1} \wedge \phi_{2}$, then $\mathcal{M} \models \phi$ iff $\mathcal{M} \models \phi_{1}$ and $\mathcal{M} \models \phi_{2}$. 
	\item If $\phi := \phi_{1} \vee \phi_{2}$, then $\mathcal{M} \models \phi$ iff $\mathcal{M} \models \phi_{1}$ or $\mathcal{M} \models \phi_{2}$. $\blacktriangleleft$
\end{itemize}
\end{definition}

\begin{example}
If $\mathcal{M}$ is the $\mathcal{V}$-structure of our example,
then $\mathcal{M} \models R(c_{1},c_{4})$,
since $c_{1} \rightarrow (T,1,22)$, $c_{1} \rightarrow (H,1,70)$ and $((T,1,22),(H,1,70)) \in R$.
We can also see that $\mathcal{M} \models R(c_{1},c_{4}) \wedge \neg R(c_{1},c_{3})$,
since $c_{3} \rightarrow (T,2,32)$ and $((T,1,22),(T,2,32)) \notin R$. $\diamond$
\end{example}

\subsection{Conditions}
\label{sec:conditions}

Based on the above definitions, 
we will now define conditions over registers. 
These will essentially be the $n$-ary guards on the transitions of \sra.

\begin{definition}[Condition]
\label{definition:condition}
Let $\mathcal{M}$ be a $\mathcal{V}$-structure always equipped with the unary relation $\top$ for which it holds that $u \in \top$, ${\forall u \in \mathcal{U}}$,
i.e., this relation holds for all elements of the universe $\mathcal{U}$.
Let $R = \{r_{1}, \cdots, r_{k}\}$ be variables denoting the registers and $\sim$ a special variable denoting an automaton's head which reads new elements. 
The ``contents'' of the head  always correspond to the most recent element.
We call them register variables.
A condition is essentially a $\mathcal{V}$-formula,
as defined above (Definition \ref{definition:formula}),
where,
instead of terms,
we use register variables.
A condition is defined by the following grammar:
\begin{itemize}
	\item $\top$ is a condition.
	\item $P(r_{1}, \cdots, r_{n})$, where $r_{i} \in R \cup \{ \sim \}$ and $P$ an $n$-ary relation, is a condition.
	\item $\neg \phi$ is a condition, if $\phi$ is a condition.
	\item $\phi_{1} \wedge \phi_{2}$ is a condition if $\phi_{1}$ and $\phi_{2}$ are conditions.
	\item $\phi_{1} \vee \phi_{2}$ is a condition if $\phi_{1}$ and $\phi_{2}$ are conditions. $\blacktriangleleft$
\end{itemize}
\end{definition}

Since terms now refer to registers,
we need a way to access the contents of these registers.
We will assume that each register has the capacity to store exactly one element from $\mathcal{U}$.
The notion of valuations provides us with a way to access the contents of registers.
\begin{definition}[Valuation]
A valuation on $R = \{r_{1}, \cdots, r_{k}\}$ is a partial function $v: R \hookrightarrow \mathcal{U}$.
The set of all valuations on $R$ is denoted by $F(r_{1}, \cdots, r_{k})$.
$v[r_{i} \leftarrow u]$ denotes the valuation where we replace the content of $r_{i}$ with a new element $u$:
\begin{equation}
v'(r_{j}) = v[r_{i} \leftarrow u] = 
  \begin{cases}
    u & \quad \text{if } r_{j} = r_{i}   \\
    v(r_{j}) & \quad \text{otherwise} \\
  \end{cases}
\end{equation}
$v[W \leftarrow u]$, 
where $W \subseteq R$,
denotes the valuation obtained by replacing the contents of all registers in $W$ with $u$.
We say that a valuation $v$ is compatible with a condition $\phi$ if, 
for every register variable $r_{i}$ that appears in $\phi$,
$v(r_{i})$ is defined.
$\blacktriangleleft$
\end{definition}
A valuation $v$ is essentially a function with which we can retrieve the contents of any register.
We will also use the notation $v(r_{i}) = \sharp$ to denote the fact that register $r_{i}$ is empty,
i.e., we extend the range of $v$ to $\mathcal{U} \cup \{ \sharp \}$.
We also extend the domain of $v$ to $R \cup \{ \sim \}$.
By $v(\sim)$ we will denote the ``contents'' of the automaton's head,
i.e., the last element read from the string.

We can now define the semantics of conditions,
similarly to the way we defined models of $\mathcal{V}$-formulas in Definition \ref{definition:models_formulas}.
The difference is that the arguments to relations are no longer elements assigned to terms but elements stored in registers,
as retrieved by a given valuation.
\begin{definition}[Semantics of conditions]
\label{definition:condition_semantics}
Let $\mathcal{M}$ be a $\mathcal{V}$-structure, 
$u \in \mathcal{U}$ an element of the universe of $\mathcal{M}$ 
and $v \in F(r_{1}, \cdots, r_{k})$ a valuation.
We say that a condition $\phi$ is satisfied by $(u,v)$,
denoted by $(u,v) \models \phi$, 
iff one of the following holds:
\begin{itemize}
	\item $\phi := \top$, i.e., $(u,v) \models \top$ for every element and valuation.
	\item $\phi := P(x_{1}, \cdots, x_{n})$, $x_{i} \in R \cup \{ \sim \}$, $v(x_{i})$ is defined for all $x_{i}$ and $u \in P(v(x_{1}), \cdots, v(x_{n}))$.
	\item $\phi := \neg \psi$ and $(u,v) \nvDash \psi$.
	\item $\phi := \phi_{1} \wedge \phi_{2}$, $(u,v) \models \phi_{1}$ and $(u,v) \models \phi_{2}$. 
	\item $\phi := \phi_{1} \vee \phi_{2}$, $(u,v) \models \phi_{1}$ or $(u,v) \models \phi_{2}$. $\blacktriangleleft$
\end{itemize}
\end{definition}

\subsection{Symbolic Regular Expressions with Memory}
\label{sec:srem}

We are now in a position to define Symbolic Regular Expressions with Memory \srem.
We achieve this by combining conditions via the standard regular operators.
Conditions act as terminal ``symbols'',
as the base case from which we construct more complex expressions.
\begin{definition}[Symbolic regular expression with memory (\srem)]
\label{definition:srem}
A symbolic regular expression with memory over a $\mathcal{V}$-structure $\mathcal{M}$ and a set of register variables $R = \{r_{1}, \cdots, r_{k}\}$ is inductively defined as follows:
\begin{enumerate}
	\item $\epsilon$ and $\emptyset$ are \srem.
	\item If $\phi$ is a condition (as in Definition \ref{definition:condition}), then $\phi$ is a \srem.
	\item If $\phi$ is a condition, then $\phi \downarrow r_{i}$ is a \srem. 
	This is the case where we need to store the current element read from the automaton's head to register $r_{i}$.
	\item If $e_{1}$ and $e_{2}$ are \srem, then $e_{1} + e_{2}$ is also a \srem. This corresponds to disjunction. 
	\item If $e_{1}$ and $e_{2}$ are \srem, then $e_{1} \cdot e_{2}$ is also a \srem. This corresponds to concatenation. 
	\item If $e$ is a \srem, then $e^{*}$ is also a \srem. This corresponds to Kleene-star. $\blacktriangleleft$
\end{enumerate} 
\end{definition}

In order to define the semantics of \srem,
we need to define precisely how the contents of the registers may change.
We thus need to define how a \srem,
starting from a given valuation $v$ and reading a given string $S$,
reaches another valuation $v'$.
\begin{definition}[Semantics of \srem]
\label{definition:srem_semantics}
Let $e$ be a \srem\ over a $\mathcal{V}$-structure $\mathcal{M}$ and a set of register variables $R = \{r_{1}, \cdots, r_{k}\}$, 
$S$ a string constructed from elements of the universe of $\mathcal{M}$ 
and $v,v' \in F(r_{1}, \cdots, r_{k})$.
We define the relation $(e,S,v) \vdash v'$ as follows (a textual explanation is provided after the formal definition):
\begin{enumerate}
	\item $(\epsilon,S,v) \vdash v'$ iff $S = \epsilon$ and $v=v'$.
	\item $(\phi, S, v) \vdash v'$ iff $\phi \neq \epsilon$, $S=u$, $(u,v) \models \phi$ and $v' = v$. 
	\item $(\phi \downarrow r_{i}, S, v) \vdash v'$ iff $S=u$, $(u,v) \models \phi$ and $v' = v[r_{i} \leftarrow u]$. 
	\item $(e_{1} \cdot e_{2}, S, v) \vdash v'$ iff $S=S_{1} \cdot S_{2}$: $(e_{1},S_{1},v) \vdash v''$ and $(e_{2},S_{2},v'') \vdash v'$.
	\item $(e_{1} + e_{2}, S, v) \vdash v'$ iff $(e_{1},S,v) \vdash v'$ or $(e_{2},S,v) \vdash v'$.
	\item $(e^{*}, S, v) \vdash v'$ iff 
	\begin{equation*}
  \begin{cases}
    S=\epsilon\ \text{and } v'=v & \quad \text{or }   \\
    S=S_{1} \cdot S_{2}: (e,S_{1},v) \vdash v''\ \text{and }  (e^{*},S_{1},v'') \vdash v' & \quad \text{} \\
  \end{cases}
	\end{equation*} $\blacktriangleleft$
\end{enumerate}
\end{definition}
In the first case,
we have an $\epsilon$ \srem.
It may reach another valuation only if it reads an $\epsilon$ string and this new valuation is the same as the initial one, i.e., the registers do not change.
In the second case where we have a condition $\phi \neq \epsilon$,
we move to a new valuation only if the condition is satisfied with the current element and the given register contents.
Again, the registers do not change.
The third case is similar to the second,
with the important difference that the register $r_{i}$ needs to change and to store the current element.
For the fourth case (concatenation),
we need to be able to break the initial string into two sub-strings such that the first one reaches a certain valuation and the second one can start from this new valuation and reach another one.
The fifth case is a disjunction.
Finally, 
the sixth case implies that we must be able to break the initial string into multiple sub-strings such that each one of these substring can reach a valuation and the next one can start from this valuation and reach another one.

Based on the above definition,
we may now define the language that a \srem\ accepts (as in \cite{DBLP:journals/jcss/LibkinTV15}).
The language of a \srem\ contains all the strings with which we can reach a valuation,
starting from the empty valuation,
where all registers are empty.
\begin{definition}[Language accepted by a \srem]
\label{definition:language_srem}
We say that $(e,S,v)$ infers $v'$ if $(e,S,v) \vdash v'$.
We say that $e$ induces $v$ on a string $S$ if $(e,S,\sharp) \vdash v$,
where $\sharp$ denotes the valuation in which no $v(r_{i})$ is defined,
i.e., all registers are empty.
The language accepted by a \srem\ $e$ is defined as $\mathcal{L}(e) = \{S \mid (e,S,\sharp) \vdash v\}$ for some valuation $v$.
$\blacktriangleleft$
\end{definition}

\begin{example}
As an example,
consider the following \srem\
\begin{equation}
\label{srem:t_seq_h_filter_eq_id}
e_{1} := (\mathit{TypeIsT}(\sim) \downarrow r_{1}) \cdot (\top)^{*} \cdot (\mathit{TypeIsH}(\sim) \wedge \mathit{EqualId}(\sim,r_{1}))
\end{equation}
where we assume that a) $\mathit{TypeIsT}(x) := x.\mathit{type} = T$, b) $\mathit{TypeIsH}(x) := x.\mathit{type} = H$ and c) $\mathit{EqualId}(x,y) := x.\mathit{id} = y.\mathit{id}$.
If we feed the string/stream of Table \ref{table:example_stream} to $e_{1}$,
then we will have the following.
We will initially read the first element $(T,1,22)$.
Since its type is $T$,
we will move on and store $(T,1,22)$ to register $r_{1}$,
i.e., we will move from the empty valuation where $v(r_{1}) = \sharp$ to $v'$,
where $v'(r_{1}) = (T,1,22)$.
Then, the sub-expression $(\top)^{*}$ lets us skip any number of elements.
We can thus skip the second and third elements without changing the register contents.
Now, upon reading the fourth element $(H,1,70)$,
there are two options.
Either skip it again to read the fifth element or try to move on by checking the sub-expression $(\mathit{TypeIsH}(\sim) \wedge \mathit{EqualId}(\sim,r_{1}))$.
This condition is actually satisfied,
since the type of this element is indeed $H$ and its $\mathit{id}$ is equal to the $\mathit{id}$ of the element store in $r_{1}$.
Thus, $S_{1..4}$ is indeed accepted by $e_{1}$.
With a similar reasoning we can see that the same is also true for $S_{1..5}$. $\diamond$
\end{example}

\section{Symbolic Register Automata}
\label{sec:sra}

In order to capture \srem, 
we propose Symbolic Register Automata (\sra), 
an automaton model equipped  with memory and logical conditions on its transitions.
The basic idea is the following.
We add a set of registers $R$ to an automaton in order to be able to store elements from the string/stream that will be used later in $n$-ary conditions. 
Each register can store at most one element.
In order to evaluate whether to follow a transition or not,
each transition is equipped with a guard, 
in the form of a condition.
If the condition evaluates to true, 
then the transition is followed.
Since a condition might be $n$-ary, 
with $n{>}1$,
the values passed to its arguments during evaluation may be either the current element
or the contents of some registers,
i.e.,
some past elements.
In other words, 
the transition is also equipped with a \textit{register selection},
i.e., a tuple of registers.
Before evaluation, 
the automaton reads the contents of those registers,
passes them as arguments to the condition and the condition is evaluated.
Additionally, if, during a run of the automaton, 
a transition is followed,
then the transition has the option to write the element that triggered it
to some of the automaton's registers.
These are called its \textit{write registers},
i.e.,
the registers whose contents may be changed by the transition.
We also allow for $\epsilon$-transitions, 
as in classical automata,
i.e., 
transitions that are followed without consuming any elements and without altering the contents of the registers.

We now formally define \sra.
To aid understanding,
we present three separate definitions:
one for the automaton itself,
one for its configurations and one for its runs.

\begin{definition}[Symbolic Register Automaton]
\label{definition:sra}
A symbolic register automaton (\sra) with $k$ registers over a $\mathcal{V}$-structure $\mathcal{M}$ is a tuple ($Q$, $q_{s}$, $Q_{f}$, $R$, $\Delta$)
where 
\begin{itemize}
	\item $Q$ is a finite set of states,
	\item $q_{s} \in Q$ the start state, 
	\item $Q_{f}\subseteq Q$ the set of final states, 
	\item $R = (r_{1}, \cdots, r_{k})$ a finite set of registers and
	\item $\Delta$ the set of transitions.
\end{itemize}
A transition $\delta \in \Delta$ is a tuple $(q,\phi,W,q')$, 
also written as $q,\phi \downarrow W  \rightarrow q'$,
where
\begin{itemize}
	\item $q,q' \in Q$, 
	\item $\phi$ is a condition, as defined in Definition \ref{definition:condition} or $\phi = \epsilon$ and
	\item $W \in 2^{R}$ are the write registers. $\blacktriangleleft$
\end{itemize}
\end{definition}

We will use the dot notation to refer to elements of tuples.
For example, 
if $A$ is a \sra, 
then $A.Q$ is the set of its states.
For a transition $\delta$,
we will also use the notation $\delta.\mathit{source}$ and $\delta.\mathit{target}$ to refer to its source and target states respectively.

\begin{figure}[t]
\begin{centering}
\includegraphics[width=0.55\linewidth]{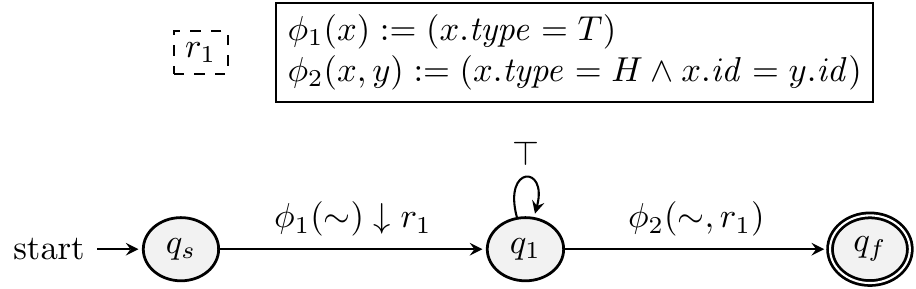}
\caption{\sra\ corresponding to Expression \eqref{srem:t_seq_h_filter_eq_id}.}
\label{fig:example1}
\end{centering}
\end{figure}

\begin{example}
As an example,
consider the \sra\ of Figure \ref{fig:example1}.
Each transition is represented as $\phi \downarrow W$,
where $\phi$ is its condition and $W$ its set of write registers 
(or simply $r_{i}$ if only a single register is written).
$W$ may also be an empty set,
implying that no register is written.
In this case, 
we avoid writing $W$ on the transition
(see, for example, the transition from $q_{1}$ to $q_{f}$ in Figure \ref{fig:example1}).
The definitions for the conditions of the transitions are presented in a separate box,
above the \sra.
Note that the arguments of the conditions correspond to registers, through the register selection.
Take the transition from $q_{s}$ to $q_{1}$ as an example.
It takes the last event consumed from the stream ($\sim$) and
passes it as argument to the unary formula $\phi_{1}$.
If $\phi_{1}$ evaluates to true,
it writes this last event to register $r_{1}$,
displayed as a dashed square in Figure \ref{fig:example1}.
On the other hand, 
the transition from $q_{1}$ to $q_{f}$ uses both the current event and the event stored in $r_{1}$ ($(\sim,r_{1})$) and passes them to the binary formula $\phi_{2}$.
The condition $\top$ (in the self-loop of $q_{1}$) is a unary condition that always evaluates to true and allows us to skip any number of events.
The \sra\ of Figure \ref{fig:example1} captures \srem\ \eqref{srem:t_seq_h_filter_eq_id}. $\diamond$
\end{example}

We can describe formally the rules for the behavior of a \sra\ through the notion of configuration:
\begin{definition}[Configuration of \sra]
\label{definition:configuration}
Assume a string
$S=t_{1},t_{2},\cdots,t_{l}$ 
and a \sra\ $A$ consuming $S$.
A configuration of $A$ is a triple
$c=[j,q,v] \in \mathbb{N} \times Q \times F(r_{1}, \cdots, r_{k})$, 
where
\begin{itemize}
	\item $j$ is the index of the next event/character to be consumed,
	\item $q$ is the current state of $A$ and
	\item $v$ the current valuation, i.e., the current contents of $A$'s registers.
\end{itemize}
We say that $c'=[j',q',v']$ is a \emph{successor} of $c$ iff one of the following holds:
\begin{itemize}
	\item $\exists \delta: \delta.\mathit{source} = q,\ \delta.\mathit{target}=q',\ \delta.\phi = \epsilon,\ j'=j,\ v'=v$, i.e., if this is an $\epsilon$ transition, we move to the target state without changing the index or the registers' contents.
	\item $\exists \delta: \delta.\mathit{source} = q,\ \delta.\mathit{target}=q',\ \delta.W = \emptyset,\ (t_{j},v) \models \delta.\phi,\ j'=j+1,\ v'=v$, i.e., if the condition is satisfied according to the current event and the registers' contents and there are no write registers, we move to the target state, we increase the index by 1 and we leave the registers untouched.
	\item $\exists \delta: \delta.\mathit{source} = q,\ \delta.\mathit{target}=q',\ \delta.W \neq \emptyset,\ (t_{j},v) \models \delta.\phi,\ j'=j+1,\ v'=v[W \leftarrow t_{j}]$, i.e., if the condition is satisfied according to the current event and the registers' contents and there are write registers, we move to the target state, we increase the index by 1 and we replace the contents of all write registers (all $r_{i} \in W$) with the current element from the string. $\blacktriangleleft$
\end{itemize}
\end{definition}
We denote a succession by $[j,q,v] \rightarrow [j',q',v']$,
or $[j,q,v] \overset{\delta}{\rightarrow} [j',q',v']$ if we need to refer to the transition as well.
For the initial configuration,
before any elements have been consumed,
we assume that
$j=1$, $q=q_{s}$ and $v(r_{i}) = \sharp,\ \forall r_{i} \in R$.
In order to move to a successor configuration,
we need a transition whose condition evaluates to true,
when applied to $\sim$, 
if it is unary, 
or to $\sim$ and the contents of its register selection, 
if it is $n$-ary.
If this is the case, 
we move one position ahead in the stream and update the contents of this transition's write registers,
if any, 
with the event that was read. 
If the transition is an $\epsilon$-transition, 
we do not move the stream pointer and do not update the registers,
but only move to the next state.

The actual behavior of a \sra\ upon reading a stream is captured by the notion of the run:
\begin{definition}[Run of \sra\ over string/stream]
A run $\varrho$ of a \sra\ $A$ over a stream $S=t_{1},\cdots,t_{n}$ is a sequence of successor configurations
$[1,q_{1},v_{1}] \overset{\delta_{1}}{\rightarrow} [2,q_{2},v_{2}] \overset{\delta_{2}}{\rightarrow} \cdots \overset{\delta_{n}}{\rightarrow} [n+1,q_{n+1},v_{n+1}]$.
A run is called accepting iff $q_{n+1} \in A.Q_{f}$.
$\blacktriangleleft$
\end{definition}

\begin{example}
A run of the \sra\ of Figure \ref{fig:example1}, 
while consuming the first four events from the stream of Table \ref{table:example_stream}, 
is the following:
\begin{equation}
\label{run:example}
[1,q_{s},\sharp] \overset{\delta_{s,1}}{\rightarrow} [2,q_{1},(T,1,22)] \overset{\delta_{1,1}}{\rightarrow} [3,q_{1},(T,1,22)] \overset{\delta_{1,1}}{\rightarrow}
[4,q_{1},(T,1,22)] \overset{\delta_{1,f}}{\rightarrow} [5,q_{f},(T,1,22)]
\end{equation}
Transition subscripts in this example refer to states of the \sra,
e.g.,
$\delta_{s,s}$ is the transition from the start state to itself,
$\delta_{s,1}$ is the transition from the start state to $q_{1}$, etc.
See also Figure \ref{fig:example_run}.
Run \eqref{run:example} is not the only run,
since the \sra\ could have followed other transitions with the same input,
e.g.,
moving directly from $q_{s}$ to $q_{1}$.
Another possible (and non-accepting) run would be the one where the \sra\ always remains in $q_{1}$ after its first transition.
\begin{figure}
\centering
\begin{subfigure}[t]{0.49\textwidth}
	\includegraphics[width=0.99\textwidth]{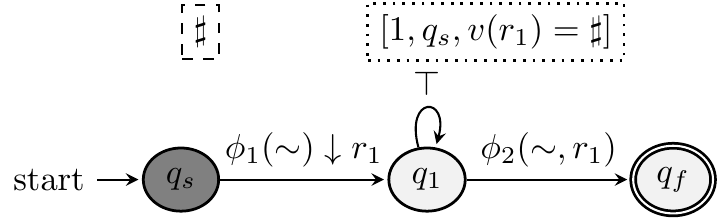}
	\caption{Initial configuration.}
\end{subfigure}
\begin{subfigure}[t]{0.49\textwidth}
	\includegraphics[width=0.99\textwidth]{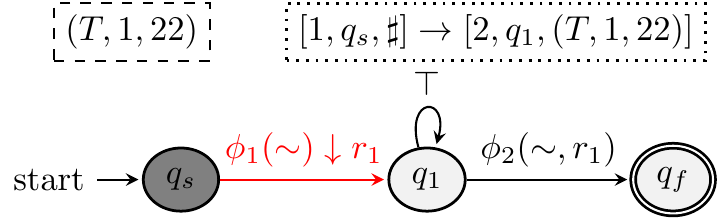}
	\caption{Configuration after reading $t_{1}$.}
\end{subfigure}\\
\begin{subfigure}[t]{0.49\textwidth}
	\includegraphics[width=0.99\textwidth]{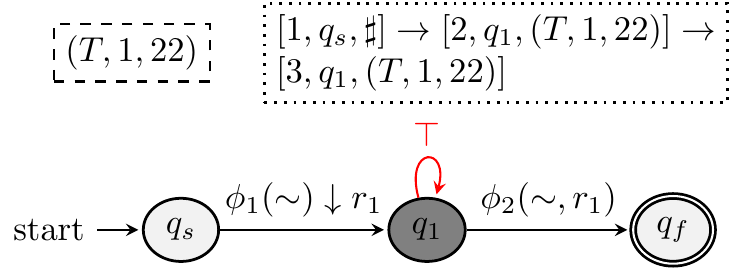}
	\caption{Configuration after reading $t_{2}$.}
\end{subfigure}
\begin{subfigure}[t]{0.49\textwidth}
	\includegraphics[width=0.99\textwidth]{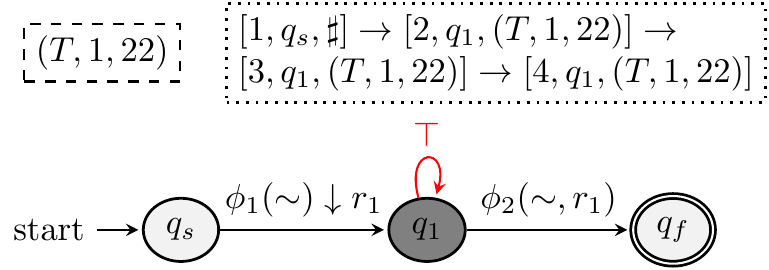}
	\caption{Configuration after reading $t_{3}$.}
\end{subfigure}\\
\begin{subfigure}[t]{0.5\textwidth}
	\includegraphics[width=0.99\textwidth]{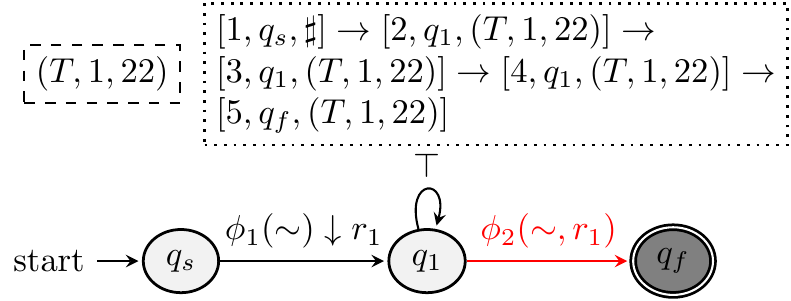}
	\caption{Configuration after reading $t_{4}$.}
\end{subfigure}
\caption{A run of the \sra\ of Figure \ref{fig:example1}, while consuming the first four events from the stream of Table \ref{table:example_stream}. Triggered transitions are shown in red and the current state of the \sra\ in dark gray.
The dashed box represents a register.
The contents of the register at each configuration are shown inside the dashed box.
Inside the dotted boxes, the run is shown.}
\label{fig:example_run}
\end{figure}
$\diamond$
\end{example}

Finally,
we can define the language of a \sra\ as the set of strings for which the \sra\ has an accepting run,
starting from an empty configuration.
\begin{definition}[Language recognized by \sra]
\label{definition:sra_language}
We say that a \sra\ $A$ accepts a string $S$ iff there exists an accepting run $\varrho=[1,q_{1},v_{1}] \overset{\delta_{1}}{\rightarrow} [2,q_{2},v_{2}] \overset{\delta_{2}}{\rightarrow} \cdots \overset{\delta_{n}}{\rightarrow} [n+1,q_{n+1},v_{n+1}]$ of $A$ over $S$,
where $q_{1} = A.q_{s}$ and $v_{1} = \sharp$.
The set of all strings accepted by $A$ is called the language recognized by $A$ and is denoted by $\mathcal{L}(A)$.
$\blacktriangleleft$
\end{definition}

\section{Properties of Symbolic Register Automata}
\label{sec:sra_properties}
 
We now study the properties of \sra.
First, we prove the equivalence of \sra\ and \srem.
We then show that \sra\ and \srem\ are closed under union, intersection, concatenation and Kleene-start but not under complement and determinization.
We can thus construct \srem\ and \sra\ by using arbitrarily (in whatever order and depth is required) the four basic operators of union, intersection, concatenation and Kleene-star.
However, 
the negative result about complement suggests that the use of \emph{negation} in CER patterns cannot be equally arbitrary.
Moreover, 
deterministic \sra\ cannot be used in cases where this might be required,
as in CEF.
We will discuss in Section \ref{sec:markov} why determinization is important for CEF.
If, however, we use an extra window operator,
effectively limiting the length of strings accepted by a \sra,
we can then show that closure under complement and determinization is also possible. 

\subsection{Equivalence of \srem\ and \sra}

We first prove that,
for every \srem\ there exists an equivalent \sra.
The proof is constructive,
similar to that for classical automata.
For the inverse direction,
i.e. converting a \sra\ to an equivalent \srem,
we use the notion of generalized \sra.
These are \sra\ which have complete \srem\ on their transitions.
By incrementally removing states from the \sra,
we are finally left with two states and the \srem\ which connects them is the \srem\ we are looking for.

We now show how,
for each \srem\,
we can construct an equivalent \sra.
Equivalence between an expression $e$ and a \sra\ $A$ means that they recognize the same language,
i.e., $\mathcal{L}(e) = \mathcal{L}(A)$.
See Definitions \ref{definition:language_srem} and \ref{definition:sra_language}.
\begin{theorem}
\label{theorem:srem2sra}
For every \srem\ $e$ there exists an equivalent \sra\ $A$, i.e., a \sra\ such that $\mathcal{L}(e) = \mathcal{L}(A)$.
\end{theorem}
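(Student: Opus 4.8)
The plan is to prove the theorem by structural induction on the \srem\ $e$, following the standard inductive (Thompson-style) construction for regular expressions but threading register valuations through it. To keep the gluing steps clean I would strengthen the induction hypothesis: for every \srem\ $e$ over a $\mathcal{V}$-structure and register set $R=\{r_1,\dots,r_k\}$ there is an \sra\ $A$ over the same $\mathcal{V}$-structure and $k$ registers with $\mathcal{L}(A)=\mathcal{L}(e)$ that is in \emph{normal form}: it has a single final state $q_f\neq q_s$, the start state $q_s$ has no incoming transitions, and $q_f$ has no outgoing transitions. Every construction step below preserves normal form, and this is precisely what prevents runs from ``re-entering'' a sub-automaton through its start state or ``leaking out'' of it after reaching its final state.

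For the base cases: for $\emptyset$ take $Q=\{q_s,q_f\}$ with no transitions, so no run can reach $q_f$ and $\mathcal{L}(A)=\emptyset$; for the \srem\ $\epsilon$ take the single transition $q_s,\epsilon\rightarrow q_f$, whose unique run consuming all of $S$ exists exactly when $S=\epsilon$; for a condition $\phi$ take $q_s,\phi\downarrow\emptyset\rightarrow q_f$; and for $\phi\downarrow r_i$ take $q_s,\phi\downarrow\{r_i\}\rightarrow q_f$. In the last two cases the single successor step of Definition~\ref{definition:configuration} mirrors clauses~2 and~3 of Definition~\ref{definition:srem_semantics} verbatim: reading a single character $u$ with $(u,v)\models\phi$ and leaving $v$ unchanged, resp.\ updating it to $v[r_i\leftarrow u]$. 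For the inductive step, let $A_1,A_2$ in normal form realize $e_1,e_2$, with $A_1.Q\cap A_2.Q=\emptyset$ after renaming. For $e_1+e_2$ add a fresh start $q_s$, a fresh final $q_f$, and $\epsilon$-transitions $q_s,\epsilon\rightarrow A_1.q_s$, $q_s,\epsilon\rightarrow A_2.q_s$, $A_1.q_f,\epsilon\rightarrow q_f$, $A_2.q_f,\epsilon\rightarrow q_f$. For $e_1\cdot e_2$ keep $A_1.q_s$ as start, $A_2.q_f$ as final, and add $A_1.q_f,\epsilon\rightarrow A_2.q_s$. For $e^*$ add fresh $q_s,q_f$ and $\epsilon$-transitions $q_s,\epsilon\rightarrow A.q_s$, $q_s,\epsilon\rightarrow q_f$, $A.q_f,\epsilon\rightarrow A.q_s$, $A.q_f,\epsilon\rightarrow q_f$. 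Each of these is again in normal form.

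The real work — and the step I expect to be the main, though routine, obstacle — is the correctness argument attached to each construction, i.e.\ showing $\mathcal{L}(A)=\mathcal{L}(e)$. The right bridging statement to carry through the induction is: for every string $S$ and every valuation $v$, one has $(e,S,v)\vdash v'$ if and only if $A$ has a run from $[1,A.q_s,v]$ to $[|S|+1,A.q_f,v']$. The key observation that makes this go through is that every $\epsilon$-transition introduced by the gluing leaves both the index and the valuation untouched (Definition~\ref{definition:configuration}), so the scaffolding never touches the registers and register updates happen only at the base-case transitions, exactly as the \srem\ semantics prescribes; and normal form guarantees that a run of a composite automaton decomposes uniquely (up to inserted $\epsilon$-steps) into runs of its components. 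Concatenation then follows by splitting a run at its unique visit to $A_1.q_f$, which induces the split $S=S_1\cdot S_2$ together with the intermediate valuation $v''$. Union is immediate once one notes that the existential quantifier over $v'$ in the definition of $\mathcal{L}(\cdot)$ absorbs the fact that the two branches may terminate at different valuations, so $\mathcal{L}(A)=\mathcal{L}(A_1)\cup\mathcal{L}(A_2)=\mathcal{L}(e_1)\cup\mathcal{L}(e_2)$.

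Kleene star is the delicate case: it needs a short inner induction on the number of iterations, i.e.\ on the length of the decomposition $S=S_1\cdots S_m$, in both directions, and here one must be careful that the valuations $v_0=v,v_1,\dots,v_m=v'$ are threaded consistently through successive copies of $e$ on both sides (in particular, the star's language is \emph{not} simply $\mathcal{L}(e)^{*}$, since register contents carry over between iterations; this is why the bridging statement is parametrised by an arbitrary starting valuation $v$ rather than just $\sharp$). Finally, instantiating the bridging statement at $v=\sharp$ gives $\mathcal{L}(e)=\{S\mid (e,S,\sharp)\vdash v'\ \text{for some}\ v'\}=\{S\mid A\ \text{has an accepting run from}\ [1,q_s,\sharp]\}=\mathcal{L}(A)$, which is the claim.
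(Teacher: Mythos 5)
Your proposal is correct and follows essentially the same route as the paper: a Thompson-style compositional construction together with the strengthened, valuation-parametrised bridging statement $(e,S,v)\vdash v'$ iff $A$ has a run from $[1,A.q_{s},v]$ to $[\lvert S\rvert+1,A.q_{f},v']$, which is exactly the paper's $\mathcal{L}(e,v,v')=\mathcal{L}(A,v,v')$ invariant, instantiated at $v=\sharp$ at the end. Your explicit normal-form invariant and the remark that register contents carry over between Kleene-star iterations make precise points the paper handles implicitly (the latter via threading $v_{1},\dots,v_{n}$, the register-compatibility issue via fixing $R_{top}=\mathit{reg}(e)$ up front), but these are refinements of the same argument rather than a different one.
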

\begin{proof}
The complete \sra\ construction process and proof may be found in Appendix \ref{sec:proof:srem2sra}.
\end{proof}

\begin{figure}[!ht]
\centering
\begin{subfigure}[t]{0.76\textwidth}
	\includegraphics[width=0.99\textwidth]{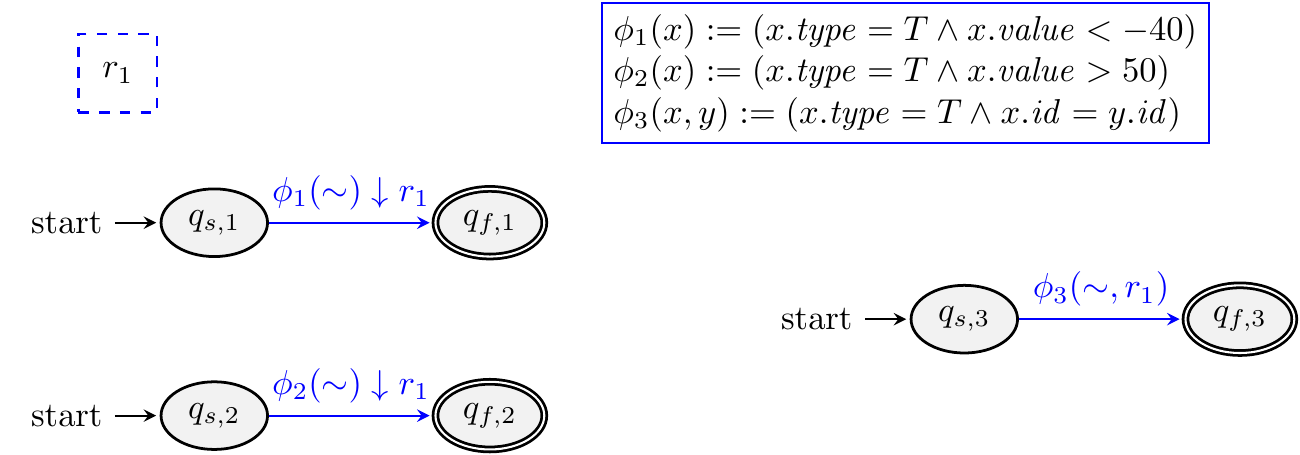}
	\caption{Constructing \sra\ for terminal sub-expressions.}
	\label{fig:srem2sra:example1}
\end{subfigure}\\
\begin{subfigure}[t]{0.76\textwidth}
	\includegraphics[width=0.99\textwidth]{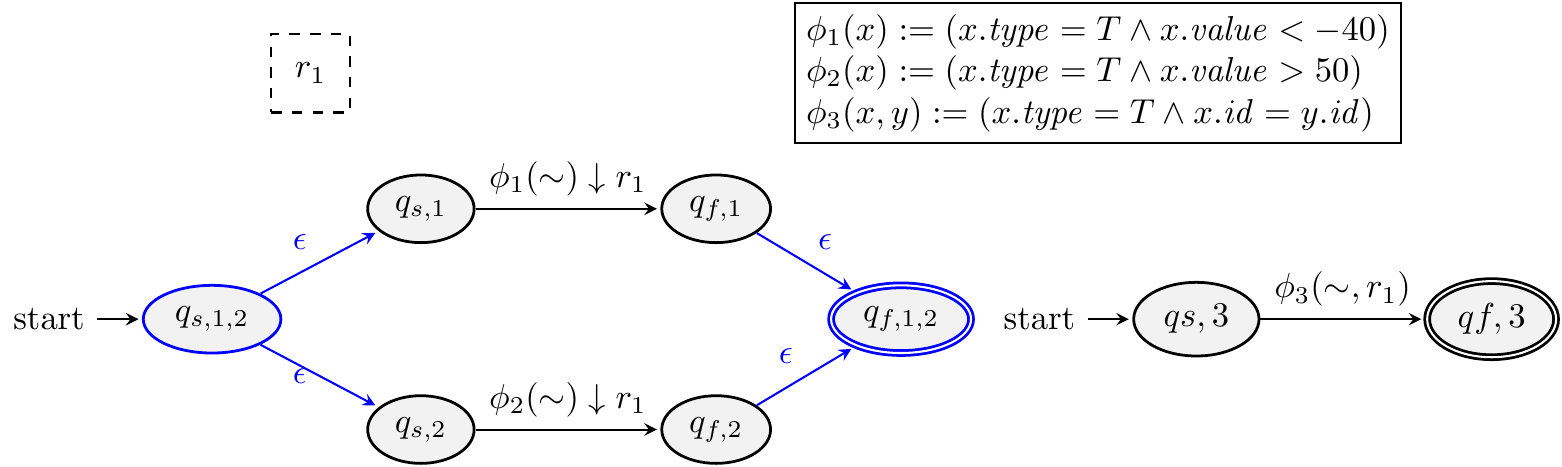}
	\caption{Connecting \sra\ via disjunction.}
	\label{fig:srem2sra:example2}
\end{subfigure}\\
\begin{subfigure}[t]{0.76\textwidth}
	\includegraphics[width=0.99\textwidth]{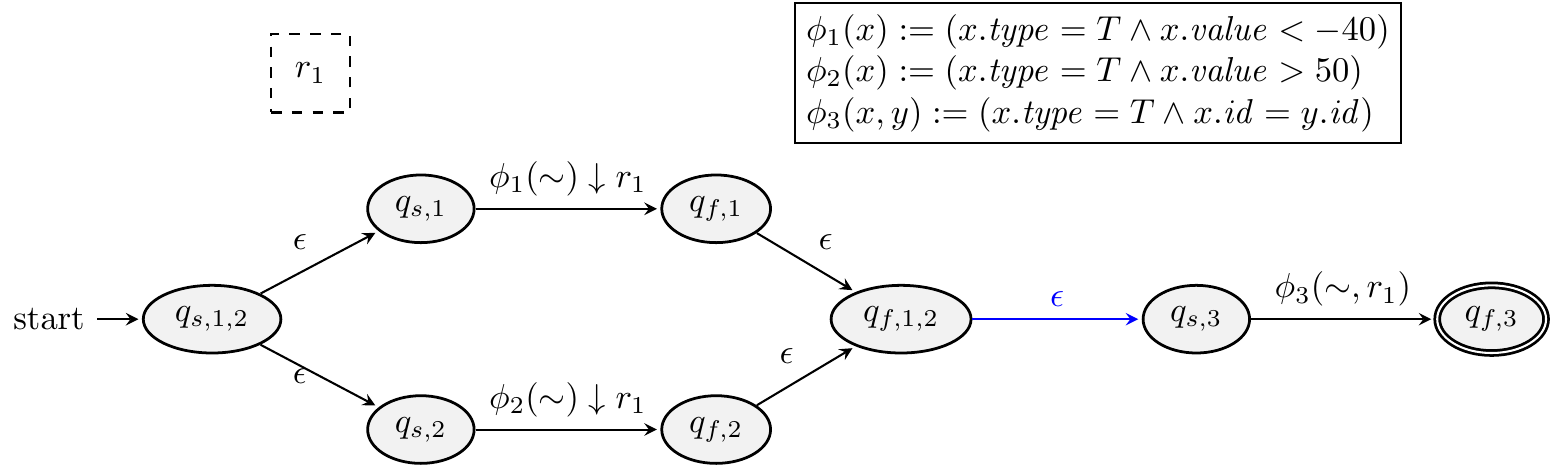}
	\caption{Connecting \sra\ via concatenation.}
	\label{fig:srem2sra:example3}
\end{subfigure}
\caption{Constructing \sra\ from \srem\ \eqref{srem:srem2sra_example}. New elements added at every step are shown in blue.}
\label{fig:srem2sra:example}
\end{figure}

\begin{example}
Here, 
we present an example, 
to give the intuition.
Let 
\begin{equation}
\label{srem:srem2sra_example}
\begin{aligned}
e_{2} := & ((\phi_{1}(\sim) \downarrow r_{1})\ + (\phi_{2}(\sim) \downarrow r_{1})) \cdot \\
		& (\phi_{3}(\sim,r_{1}))
\end{aligned}
\end{equation}
be a \srem,
where 
\begin{equation*}
\label{srem:srem2sra_example_conditions}
\begin{aligned}
\phi_{1}(x) := & (x.\mathit{type}=T \wedge x.\mathit{value} < -40) \\
\phi_{2}(x) :=	& (x.\mathit{type}=T \wedge x.\mathit{value} > 50) \\
\phi_{3}(x,y) :=	& (x.\mathit{type}=T \wedge x.\mathit{id} = y.\mathit{id})
\end{aligned}
\end{equation*}
With this expression,
we want to monitor sensors for possible failures.
We want to detect cases where a sensor records temperatures outside some range of values
(first line of \srem\ \eqref{srem:srem2sra_example})
and continues to transmit measurements (second line),
so that we are alerted to the fact that new measurements might not be trustworthy.
The last condition is a binary formula,
applied to both $\sim$ and $r_{1}$.
Figure \ref{fig:srem2sra:example} shows the process for constructing the \sra\ which is equivalent to \srem\ \eqref{srem:srem2sra_example}.

The algorithm is compositional,
starting from the base cases $e {:=} \phi$ or $e {:=} \phi \downarrow W$.
The three regular expression operators (concatenation, disjunction, Kleene-star) are handled in a manner almost identical as for classical automata.
The subtlety here concerns the handling of registers.
The simplest solution is to gather from the very start all registers mentioned in any sub-expressions of the original \srem\ $e$,
i.e. any registers in the register selection of any transitions and any write registers.
We first create those registers and then start the construction of the sub-automata.
Note that some registers may be mentioned in multiple sub-expressions (e.g., in one that writes to it and then in one that reads its contents).
We only add such registers once.
We treat the registers as a set with no repetitions.

For the example of Figure \ref{fig:srem2sra:example},
only one register is mentioned, 
$r_{1}$.
We start by creating this register.
Then, we move on to the terminal sub-expressions.
There are three basic sub-expressions and three basic automata are constructed:
from $q_{s,1}$ to $q_{f,1}$,
from $q_{s,2}$ to $q_{f,2}$ and
from $q_{s,3}$ to $q_{f,3}$.
See Figure \ref{fig:srem2sra:example1}.
To the first two transitions,
we add the relevant \emph{unary} conditions,
e.g.,
we add $\phi_{1}(x){:=} (x.\mathit{type}{=}T\ {\wedge}\ x.\mathit{value}{<}-40)$ to $q_{s,1} {\rightarrow} q_{f,1}$.
To the third transition,
we add the relevant \emph{binary} condition
$\phi_{3}(x,y) := (x.\mathit{type}=T \wedge x.\mathit{id} = y.\mathit{id})$.
The $+$ operator is handled by joining the \sra\ of the disjuncts through new states and $\epsilon$-transitions.
See Figure \ref{fig:srem2sra:example2}. 
The concatenation operator is handled by connecting the \sra\ of its sub-expressions through an $\epsilon$-transition,
without adding any new states.
See Figure \ref{fig:srem2sra:example3}. 
Iteration,
not applicable in this example,
is handled by joining the final state of the original automaton to its start state through an $\epsilon$-transition. $\diamond$
\end{example}

We can also prove the inverse theorem,
i.e., that every \sra\ can be converted to a \srem.
To do so, however, we will need two lemmas.
The first is the standard lemma about $\epsilon$ elimination,
stating that we can always eliminate all $\epsilon$ transitions from a \sra\ to get an equivalent \sra\ with no $\epsilon$ transitions.
\begin{lemma}
\label{lemma:epsilon}
For every \sra\ $A_{\epsilon}$ with $\epsilon$ transitions there exists an equivalent  \sra\ $A_{\notin}$ without $\epsilon$ transitions, i.e., a \sra\ such that $\mathcal{L}(A_{\epsilon}) = \mathcal{L}(A_{\notin})$.
\end{lemma}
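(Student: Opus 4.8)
The plan is to mimic the classical $\epsilon$-elimination construction for NFAs, adapted to carry register side-effects along $\epsilon$-closures. The only subtlety, compared with the textbook argument, is that an $\epsilon$-transition in an \sra\ is guaranteed (by Definition~\ref{definition:configuration}) to leave both the stream index and the valuation untouched; hence an $\epsilon$-path changes neither the position in the input nor the register contents, and a run of $A_{\epsilon}$ can be reorganized so that maximal blocks of consecutive $\epsilon$-moves are collapsed into the single consuming (or write) transition that follows them. First I would define, for each state $q$, its $\epsilon$-closure $E(q) = \{q' \mid q \text{ reaches } q' \text{ using only } \epsilon\text{-transitions}\}$, computed purely on the transition graph of $A_{\epsilon}$ and independent of any valuation, precisely because $\epsilon$-moves do not inspect or modify registers.

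The construction of $A_{\notin}$ is then: keep $A_{\epsilon}.Q$, $A_{\epsilon}.q_{s}$, $A_{\epsilon}.R$; for the new transition set $\Delta'$, for every state $q$, every $q'' \in E(q)$, and every non-$\epsilon$ transition $(q'',\phi,W,q''')$ of $A_{\epsilon}$, add $(q,\phi,W,q''')$ to $\Delta'$; and set the new final states to $Q_{f}' = \{q \mid E(q) \cap A_{\epsilon}.Q_{f} \neq \emptyset\}$ (adding $A_{\epsilon}.q_{s}$ to $Q_{f}'$ exactly when $\epsilon \in \mathcal{L}(A_{\epsilon})$, which the closure condition on the start state already handles). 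Clearly $A_{\notin}$ has no $\epsilon$-transitions since every transition in $\Delta'$ inherits the non-$\epsilon$ condition $\phi$ of an original non-$\epsilon$ transition.

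For correctness I would prove $\mathcal{L}(A_{\epsilon}) = \mathcal{L}(A_{\notin})$ by showing both inclusions via a simulation argument on runs. For $\mathcal{L}(A_{\epsilon}) \subseteq \mathcal{L}(A_{\notin})$: given an accepting run of $A_{\epsilon}$ on $S = t_{1}\cdots t_{n}$, split it into maximal segments each consisting of zero or more $\epsilon$-moves followed by one consuming move; since the $\epsilon$-moves preserve the index $j$ and the valuation $v$, each such segment from configuration $[j,q,v]$ to $[j+1,q''',v']$ is exactly realized by a single new transition $(q,\phi,W,q''') \in \Delta'$ with the same $(t_{j},v)\models\phi$ and the same register update; any trailing block of $\epsilon$-moves ending in a final state of $A_{\epsilon}$ is absorbed into the new final-state condition. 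For the reverse inclusion, each new transition $(q,\phi,W,q''')$ unfolds back into an $\epsilon$-path $q \twoheadrightarrow q''$ followed by $(q'',\phi,W,q''')$ in $A_{\epsilon}$, and acceptance in $Q_{f}'$ unfolds into an $\epsilon$-path to a state in $A_{\epsilon}.Q_{f}$; stitching these together yields an accepting run of $A_{\epsilon}$ on the same string. I would carry out both directions by induction on the number of consuming transitions in the run.

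The main obstacle is essentially bookkeeping rather than conceptual: one must argue carefully that reordering a run so that $\epsilon$-moves are pushed forward onto the next consuming transition is legitimate, i.e., that no information flows through an $\epsilon$-move that could invalidate the condition $\phi$ evaluated later. This is where the restriction in Definition~\ref{definition:configuration} that $\epsilon$-transitions satisfy $j'=j$ and $v'=v$ is used in an essential way, and I would state that invariant explicitly as the crux of the argument. A secondary point to treat with care is $\epsilon$-loops (cycles of $\epsilon$-transitions): these are harmless because traversing such a loop changes nothing in the configuration, so $E(q)$ remains a finite set and the closure is well defined; I would note this so that the construction is manifestly terminating. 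The full construction and detailed proof I would defer to the Appendix, in line with the paper's convention.
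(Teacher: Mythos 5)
Your proposal is correct and follows essentially the same route as the paper's proof: both adapt classical $\epsilon$-elimination to \sra\ by observing that $\epsilon$-transitions leave the stream index and the valuation untouched, and both establish $\mathcal{L}(A_{\epsilon}) = \mathcal{L}(A_{\notin})$ by splitting runs into blocks of $\epsilon$-moves followed by a consuming move and inducting along the consuming transitions. The only (cosmetic) difference is that the paper replaces states by their $\epsilon$-enclosures and applies the closure at the target of each new transition, whereas you keep the original state set and apply the closure at the source, compensating in the definition of the final states; both variants are standard and interchangeable here.
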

\begin{proof}
See Appendix \ref{sec:proof:epsilon}.
\end{proof}

The next lemma that we will require concerns the ability of \sra\ to write to multiple registers at the same time. 
The write registers of a transition $\delta$ in Definition \ref{definition:sra},
$\delta.W$,
might not be a singleton.
On the other hand,
according to Definition \ref{definition:srem},
each terminal sub-expression in a \srem\ may write to at most one register.
We can prove though that being able to write to multiple registers at the same time does not add any expressive power to \sra.
Every \sra\ which can write to multiple registers can be converted to a \sra\ whose transitions can write to at most one register.
\begin{definition}
A \sra\ $A$ is called a multi-register \sra\ if there exists a transition $\delta \in A.\Delta$ such that $\lvert \delta.W \rvert > 1$, 
i.e., if there exists a transition that can write to multiple registers.
A \sra\ $A$ is called a single-register \sra\ if for all transitions $\delta \in A.\Delta$ it holds that $\lvert \delta.W \rvert \leq 1$, 
i.e., if each transition can write to at most one register.
$\blacktriangleleft$
\end{definition}

\begin{lemma}
\label{lemma:multi2single}
For every multi-register \sra\ $A_{mr}$ there exists an equivalent single-register \sra\ $A_{sr}$, i.e., a single-register \sra\ such that $\mathcal{L}(A_{mr}) = \mathcal{L}(A_{sr})$.
\end{lemma}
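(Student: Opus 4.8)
The plan is to simulate a single multi-register write by a chain of single-register writes that each re-read the \emph{same} input element. The obstacle is that in an \sra\ a transition consumes an input character, so naively splitting $q,\phi \downarrow W \rightarrow q'$ with $W=\{r_{i_1},\dots,r_{i_m}\}$ into $m$ consecutive consuming transitions would advance the stream pointer $m$ times instead of once, changing the accepted language. The fix is to consume the element only on the \emph{first} of these transitions and to carry the element along on the remaining ones without consuming; the cleanest way to do this is to route the remaining $m-1$ writes through $\epsilon$-free auxiliary structure by first storing the element in one register and then copying. Concretely, I would proceed as follows.

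\textbf{Step 1 (Reduce to $\epsilon$-free form, or allow $\epsilon$).} By Lemma~\ref{lemma:epsilon} I may, if convenient, start from an $\epsilon$-free $A_{mr}$; but since the construction I have in mind introduces $\epsilon$-transitions anyway, it is simpler to allow $\epsilon$-transitions in the intermediate automaton and invoke Lemma~\ref{lemma:epsilon} once more at the end to remove them. So the target is: produce an equivalent \sra\ (possibly with $\epsilon$-transitions) all of whose transitions have $\lvert W\rvert\le 1$, then clean up.

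\textbf{Step 2 (Gadget for one multi-write transition).} Fix a transition $\delta = (q,\phi,W,q')$ with $W=\{r_{i_1},\dots,r_{i_m}\}$, $m\ge 2$. Introduce fresh intermediate states $p_1,\dots,p_{m-1}$ and replace $\delta$ by the chain
\begin{equation*}
q \xrightarrow{\ \phi \downarrow \{r_{i_1}\}\ } p_1 \xrightarrow{\ \epsilon \downarrow \{r_{i_2}\}\ } p_2 \xrightarrow{\ \epsilon \downarrow \{r_{i_3}\}\ } \cdots \xrightarrow{\ \epsilon \downarrow \{r_{i_m}\}\ } q'.
\end{equation*}
Here only the first transition consumes the current character $t_j$ (and checks $\phi$); the subsequent transitions are $\epsilon$-transitions that nonetheless write to a single register. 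This requires a mild extension of Definition~\ref{definition:configuration}: an $\epsilon$-transition with a write register. If one prefers to stay strictly within Definition~\ref{definition:sra} (where $\epsilon$-transitions have no $W$), one instead uses a fresh register $r_{\mathrm{cp}}$: on entry store $t_j$ into $r_{\mathrm{cp}}$, then use $m$ single-register transitions guarded by $\mathit{Eq}(\sim,r_{\mathrm{cp}})$ — but that re-introduces a consuming transition. Cleaner still, and the version I would actually write, is to thread the needed value through the register $r_{i_1}$ itself: after the consuming transition $q\xrightarrow{\phi\downarrow r_{i_1}}p_1$, the value $t_j$ sits in $r_{i_1}$, so each later link $p_{\ell-1}\xrightarrow{\epsilon \downarrow r_{i_\ell}}p_\ell$ really is an $\epsilon$-step that just needs to copy $v(r_{i_1})$ into $r_{i_\ell}$; I would formalize "copy" as a one-off primitive or, staying literal, absorb it by observing that the \emph{next} genuine consuming transition out of $q'$ can equally well be split to perform the copies, but the simplest exposition is the $\epsilon$-with-write version above, noting explicitly that it does not increase expressive power.

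\textbf{Step 3 (Correctness and iteration).} I would show that for every accepting run of $A_{mr}$ there is an accepting run of the modified automaton and conversely, by a straightforward bijection that replaces each use of $\delta$ by a traversal of its gadget: the index $j$ advances exactly once (on the first link), and after the gadget the valuation agrees with $v[W\leftarrow t_j]$ because the $m$ writes, applied in sequence to the pairwise-distinct registers $r_{i_1},\dots,r_{i_m}$, all deposit the same value $t_j$. Applying Step 2 to every multi-write transition of $A_{mr}$ (finitely many, and the gadgets are on disjoint fresh states so they do not interfere) yields an equivalent \sra\ with $\lvert W\rvert\le 1$ on every transition except for the $\epsilon$-with-write links; a final pass pushes each such write onto an adjacent consuming transition or, more uniformly, I invoke the observation that $\epsilon$-transitions carrying a single write can themselves be eliminated exactly as in the proof of Lemma~\ref{lemma:epsilon}, augmented to propagate the stored value. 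The resulting automaton $A_{sr}$ is single-register and $\mathcal{L}(A_{sr})=\mathcal{L}(A_{mr})$. The main obstacle, as flagged, is the bookkeeping that guarantees the gadget consumes exactly one character while performing all $m$ writes of that character; everything else is the usual mechanical verification.
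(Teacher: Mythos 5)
There is a genuine gap in your Step 2, and it is not just bookkeeping. In the model as defined, an $\epsilon$-transition \emph{cannot} write: the successor rule for $\epsilon$-transitions in Definition~\ref{definition:configuration} forces $v'=v$, and the only write semantics available is $v'=v[W\leftarrow t_j]$, which deposits the \emph{currently consumed} element --- there is no register-to-register copy primitive. So the links $p_{\ell-1}\xrightarrow{\epsilon\downarrow r_{i_\ell}}p_\ell$ of your gadget are not \sra\ transitions, and the lemma you are proving is precisely the claim that the model does not need them. Each of your proposed repairs fails or is circular: (a) declaring ``$\epsilon$-with-write'' harmless is exactly what needs proof, and your own suggested elimination procedure --- pushing the pending writes onto an adjacent consuming transition --- reconstitutes a transition with $\lvert W\rvert>1$, i.e.\ you are back where you started; (b) the $\mathit{Eq}(\sim,r_{\mathrm{cp}})$ variant consumes an extra character per copy (changing the language) and assumes an equality relation is present in the vocabulary, which the $\mathcal{V}$-structure need not provide. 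So the chain gadget cannot be made to work inside the given model without first solving the problem it is meant to solve.

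The paper takes a different and gap-free route, the classical one from register automata: it keeps the \emph{same} registers but never physically duplicates a value. Instead it enlarges the state space to $Q_{mr}\times P_{R_{mr}}$, where $P_{R_{mr}}$ is the set of partitions of the register set; the partition component of the current state records which registers of $A_{mr}$ would currently hold identical contents, and only one physical register of $A_{sr}$ (the representative of the block) actually stores that value. A multi-write $\downarrow W$ is then simulated by a \emph{single} write to the representative register together with a deterministic update of the partition (merge $W$ into one block, remove its members from the others), and every condition is rewritten to reference representatives instead of the original registers. Correctness is an induction maintaining the invariant that $v_{sr}(r_{sr,k})$ equals $v_{mr}(r)$ for every $r$ in the $k$-th block. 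If you want to salvage your write-up, this aliasing-in-the-finite-control idea is the missing ingredient: track equality of register contents symbolically in the states rather than trying to materialize the duplicate writes.
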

\begin{proof}
See Appendix \ref{sec:proof:multi2single}.
\end{proof}

We are now in a position to prove that every \sra\ can be converted to a \srem.
\begin{theorem}
\label{theorem:sra2srem}
For every \sra\ $A$ there exists an equivalent \srem\ $e$, i.e., a \srem\ such that $\mathcal{L}(A) = \mathcal{L}(e)$.
\end{theorem}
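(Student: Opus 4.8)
The plan is to mimic the classical state-elimination argument that converts a finite automaton to a regular expression, adapted to carry register operations along transition labels. First I would reduce to a convenient normal form: by Lemma~\ref{lemma:epsilon} I may assume $A$ has no $\epsilon$-transitions, and by Lemma~\ref{lemma:multi2single} I may further assume $A$ is a single-register \sra, so every transition writes to at most one register and is labelled by a pair $(\phi, r_i)$ or by a bare condition $\phi$ with no write. I would also add a fresh start state $q_s'$ and a fresh single final state $q_f'$, connected to the old start state and old final states by $\epsilon$-transitions, so that $q_s'$ has no incoming and $q_f'$ no outgoing edges; since $\epsilon$-transitions that are not yet eliminated are harmless here (they will be absorbed by the generalized labels), this is legitimate as long as I extend the label algebra to include $\epsilon$.

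Next I would introduce \emph{generalized \sra}, i.e.\ \sra\ whose transitions are labelled not by single conditions but by full \srem. The elimination step works as follows: pick a state $q$ other than $q_s'$ and $q_f'$; for every pair of transitions $q_1 \xrightarrow{e_1} q$ and $q \xrightarrow{e_2} q_2$, and with $e_{\mathit{loop}}$ the \srem\ labelling the self-loop at $q$ (taking $e_{\mathit{loop}} = \emptyset$ if there is none, in which case $e_{\mathit{loop}}^{*}$ contributes only $\epsilon$), replace these by a single transition $q_1 \xrightarrow{e_1 \cdot e_{\mathit{loop}}^{*} \cdot e_2} q_2$, combining parallel edges between the same pair of states by the $+$ operator. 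Removing $q$ together with all its incident edges and iterating until only $q_s'$ and $q_f'$ remain leaves a single edge $q_s' \xrightarrow{e} q_f'$, and $e$ is the desired \srem.

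The correctness argument has two parts: showing the elimination step preserves the recognized language, and showing the final single-edge machine's label $e$ satisfies $\mathcal{L}(e) = \mathcal{L}(A)$ in the sense of Definitions~\ref{definition:language_srem} and~\ref{definition:sra_language}. The first part is a routine induction on runs, but it is precisely the part where the register contents must be handled with care, and I expect this to be the main obstacle. In the classical (finite-alphabet) setting the argument is purely about path labels in a graph; here each label is a \srem\ whose semantics $(e, S, v) \vdash v'$ threads a valuation through the string, so I must check that the translation is faithful to the semantics of Definition~\ref{definition:srem_semantics}: a run of $A$ from $q_1$ to $q_2$ over a substring $S'$ taking valuation $v$ to $v'$ must correspond exactly to $(e_1 \cdot e_{\mathit{loop}}^{*} \cdot e_2, S', v) \vdash v'$ when $q$ is the intermediate state being eliminated, and conversely. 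The key observations that make this go through are that concatenation of \srem\ composes the valuation relation (clause~4 of Definition~\ref{definition:srem_semantics}), that $e^{*}$ iterates it (clause~6), and that $+$ takes the union (clause~5) — exactly the operations used to build the generalized labels. Because \srem\ track register writes through the condition $\phi \downarrow r_i$ base case rather than through some automaton-global mechanism, and because the single-register normal form matches that base case, there is no mismatch between what a transition of $A$ can do and what a \srem\ label can express; this is why the reductions to $\epsilon$-free and single-register form are needed up front.

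Finally I would note that the base of the induction — a single transition labelled by a condition $\phi$ or $\phi \downarrow r_i$ — is immediate from clauses~2 and~3 of Definition~\ref{definition:srem_semantics} compared with the successor rules of Definition~\ref{definition:configuration}, and that starting the whole construction from the empty valuation $\sharp$ matches the ``induces'' condition in Definition~\ref{definition:language_srem}. Combining everything, $\mathcal{L}(A) = \mathcal{L}(A_{\notin}) = \mathcal{L}(A_{sr}) = \mathcal{L}(e)$, which completes the proof. The full state-removal algorithm and the detailed induction are deferred to the appendix.
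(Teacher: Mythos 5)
Your proposal follows essentially the same route as the paper's proof: normalize to an $\epsilon$-free, single-register \sra\ via Lemmas \ref{lemma:epsilon} and \ref{lemma:multi2single}, pass to generalized \sra\ with \srem-labelled transitions and fresh start/final states, and then eliminate states one at a time using the $e_{1} \cdot (e_{\mathit{loop}})^{*} \cdot e_{2}$ (plus the existing parallel edge under $+$) update, with correctness established by an induction on runs that threads valuations through the \srem\ semantics. The approach and the key technical points you identify (compositionality of clauses 4--6 of Definition \ref{definition:srem_semantics}, the match between single-register transitions and the $\phi \downarrow r_{i}$ base case) coincide with the paper's argument, so no further comparison is needed.
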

\begin{proof}
See Appendix \ref{sec:proof:sra2srem}.
\end{proof}

\subsection{Closure Properties of SREM/SRA}

We now study the closure properties of \sra\ under union, intersection, concatenation, Kleene-star, complement and determinization.
We first provide the definition for deterministic \sra.
Informally, 
a \sra\ is said to be deterministic if, 
at any time, 
with the same input event, 
it can follow no more than one transition.
The formal definition is as follows:
\begin{definition}[Deterministic \sra\ (\dsra)]
A \sra\ $A$ with $k$ registers $\{r_{1}, \cdots, r_{2}\}$ over a $\mathcal{V}$-structure $\mathcal{M}$ is deterministic if, 
for all transitions $q,\phi_{1} \downarrow W_{1} \rightarrow q_{1} \in A.\Delta $ and $q,\phi_{2} \downarrow W_{2} \rightarrow q_{2} \in A.\Delta$,
if $q_{1} \neq q_{2}$ then,
for all $u \in \mathcal{M}.\mathcal{U}$ and $v \in F(r_{1}, \cdots, r_{2})$,
$(u,v) \models \phi_{1}$ and  $(u,v) \models \phi_{2}$ cannot both hold,
i.e.,
\begin{itemize}
	\item Either $(u,v) \models \phi_{1}$ and $(u,v) \nvDash \phi_{2}$
	\item or $(u,v) \nvDash \phi_{1}$ and $(u,v) \models \phi_{2}$
	\item or $(u,v) \nvDash \phi_{1}$ and $(u,v) \nvDash \phi_{2}$.
\end{itemize}
$\blacktriangleleft$
\end{definition}
In other words,
from all the outgoing transitions from a given state $q$ at most one of them can be triggered on any element $u$ and valuation/register contents $v$.
By definition,
for a deterministic \sra,
at most one run may exist for every string/stream.

We now give the definition for closure under union, intersection, concatenation, Kleene-star, complement and determinization:
\begin{definition}[Closure of \sra]
\label{definition:closure}
We say that \sra\ are closed under: 
\begin{itemize}
	\item union if, for every \sra\ $A_{1}$ and $A_{2}$,
	there exists a \sra\ $A$ such that $\mathcal{L(A)} = \mathcal{L}(A_{1}) \cup \mathcal{L}(A_{2})$, i.e., a string $S$ is accepted by $A$ iff it is accepted either by $A_{1}$ or by $A_{2}$.
	\item intersection if, for every \sra\ $A_{1}$ and $A_{2}$,
	there exists a \sra\ $A$ such that $\mathcal{L(A)} = \mathcal{L}(A_{1}) \cap \mathcal{L}(A_{2})$, i.e., a string $S$ is accepted by $A$ iff it is accepted by both $A_{1}$ and $A_{2}$.
	\item concatenation if, for every \sra\ $A_{1}$ and $A_{2}$,
	there exists a \sra\ $A$ such that $\mathcal{L}(A) = \mathcal{L}(A_{1}) \cdot \mathcal{L}(A_{2})$, i.e., $S$ is accepted by $A$ iff it can be broken into two sub-strings $S = S_{1} \cdot S_{2}$ such that $S_{1}$ is accepted by $A_{1}$ and $S_{2}$ by $A_{2}$.
	\item Kleene-star if, for every \sra\ $A$,
	there exists a \sra\ $A_{*}$ such that $\mathcal{L}(A_{*}) = (\mathcal{L}(A))^{*}$,
	where $L^{*} = \bigcup\limits_{i \geq 0}{L^{i}}$, i.e., $S$ is accepted by $A_{*}$ iff it can be broken into $S = S_{1} \cdot S_{2} \cdot \cdots$ such that each $S_{i}$ is accepted by $A$.
	\item complement if, for every \sra\ $A$,
	there exists a \sra\ $A_{c}$ such that for every string $S$ it holds that $S \in \mathcal{L}(A) \Leftrightarrow S \notin \mathcal{L}(A_{c})$.
	\item determinization if, for every \sra\ $A$,
	there exists a \dsra\ $A_{D}$ such that $\mathcal{L}(A) = \mathcal{L}(A_{d})$.
\end{itemize}
$\blacktriangleleft$
\end{definition}

We thus have the following for union, intersection, concatenation and Kleene-star:
\begin{theorem}
\label{theorem:closure}
\sra\ and \srem\ are closed under union, intersection, concatenation and Kleene-star.
\end{theorem}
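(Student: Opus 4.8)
The plan is to prove closure under each of the four operators separately, by giving an explicit construction on the level of \sra, exactly as in the classical Thompson-style constructions for nondeterministic finite automata, but being careful about the extra ingredient — the registers and the conditions/register-selections on transitions. Since Theorem~\ref{theorem:srem2sra} and Theorem~\ref{theorem:sra2srem} already establish that \sra\ and \srem\ recognize the same languages, it suffices to prove closure for \sra; closure for \srem\ then follows immediately (alternatively, union, concatenation and Kleene-star are already built into the syntax of \srem\ by Definition~\ref{definition:srem}, so only intersection genuinely needs an argument there). First I would fix notation: given \sra\ $A_1=(Q_1,q_{s,1},Q_{f,1},R_1,\Delta_1)$ and $A_2=(Q_2,q_{s,2},Q_{f,2},R_2,\Delta_2)$, assume without loss of generality that $Q_1\cap Q_2=\emptyset$ and that their register sets are made disjoint by renaming (a relabeling of register variables does not change the recognized language, since the initial valuation is everywhere-empty and conditions only reference a transition's own register selection).

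For \emph{union}, I would add a fresh start state $q_s$ and $\epsilon$-transitions $q_s,\epsilon\downarrow\emptyset\rightarrow q_{s,1}$ and $q_s,\epsilon\downarrow\emptyset\rightarrow q_{s,2}$, take the register set $R_1\cup R_2$, keep all transitions of $\Delta_1\cup\Delta_2$, and set the final states to $Q_{f,1}\cup Q_{f,2}$. A run of the combined automaton is, after the first $\epsilon$-step, exactly a run of $A_1$ or of $A_2$ on the same string (the registers of the ``other'' automaton stay empty and are never inspected), so the language is $\mathcal{L}(A_1)\cup\mathcal{L}(A_2)$. For \emph{concatenation}, I would add $\epsilon$-transitions from every state in $Q_{f,1}$ to $q_{s,2}$, keep $q_{s,1}$ as the start state, set $Q_{f,2}$ as the final states, and again take $R_1\cup R_2$ as registers. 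The only subtlety is that $A_2$ must begin from the empty valuation: this holds because its registers are disjoint from $R_1$ and are never written by a transition of $\Delta_1$, so when the run crosses the $\epsilon$-bridge the $R_2$-part of the valuation is still $\sharp$, matching Definition~\ref{definition:sra_language}. For \emph{Kleene-star}, I would add a fresh start state $q_s$ which is also final (to accept $\epsilon$), an $\epsilon$-transition $q_s\rightarrow q_{s,1}$, and $\epsilon$-transitions from every state of $Q_{f,1}$ back to $q_{s,1}$; the accepted language is then $\bigcup_{i\ge 0}\mathcal{L}(A_1)^i$. Here I must check the point just raised in Definition~\ref{definition:language_srem} and in the discussion of \srem\ semantics, namely that each iteration starts from a fresh valuation — but in fact, for Kleene-star on \sra\ (as opposed to the purely syntactic \srem\ semantics) the valuation is \emph{carried over} between iterations, which is consistent with how the star case of Definition~\ref{definition:srem_semantics} threads $v''$ through successive substrings, so no extra work is needed; the looping $\epsilon$-transitions do not touch the registers.

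The genuinely different case — and the one I expect to be the main obstacle — is \emph{intersection}, because the standard product construction must now track \emph{two} valuations simultaneously, and there is no single register that both factor automata can share. I would take states $Q_1\times Q_2$, start state $(q_{s,1},q_{s,2})$, final states $Q_{f,1}\times Q_{f,2}$, and register set $R_1\uplus R_2$ (the disjoint union). A non-$\epsilon$ transition of the product reading one input element would be formed from a transition $q_1,\phi_1\downarrow W_1\rightarrow q_1'$ of $A_1$ and $q_2,\phi_2\downarrow W_2\rightarrow q_2'$ of $A_2$, carrying the condition $\phi_1\wedge\phi_2$ (whose register selection is the concatenation of the two selections, all of which live in $R_1\uplus R_2$) and write set $W_1\cup W_2$; by Definition~\ref{definition:condition_semantics} $(u,v)\models\phi_1\wedge\phi_2$ iff $(u,v)\models\phi_1$ and $(u,v)\models\phi_2$, and since the selections are over disjoint register sets this correctly says ``$A_1$ could take its step and $A_2$ could take its step on the same $u$''. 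The delicate part is the $\epsilon$-transitions: a product transition cannot advance one component by an $\epsilon$-step while the other consumes an input character, so I would first apply Lemma~\ref{lemma:epsilon} to both $A_1$ and $A_2$ to obtain equivalent $\epsilon$-free \sra, and only then form the product; after that every transition reads exactly one character in each component and the synchronization is straightforward. A routine induction on the length of the input string then shows that a run of the product over $S$ exists and is accepting iff $A_1$ has an accepting run over $S$ and $A_2$ has an accepting run over $S$, whence $\mathcal{L}(A)=\mathcal{L}(A_1)\cap\mathcal{L}(A_2)$. I would also remark that, because of Lemma~\ref{lemma:multi2single}, the use of non-singleton write sets $W_1\cup W_2$ above is harmless: the resulting multi-register \sra\ can be converted back to a single-register one if desired.
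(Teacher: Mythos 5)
Your proposal is correct and follows essentially the same route as the paper: Thompson-style $\epsilon$-transition constructions with disjoint register sets for union, concatenation and Kleene-star (with the same observation that $A_{2}$'s registers remain empty until its part of the run begins), and a product construction for intersection whose transitions carry the conjunction $\phi_{1} \wedge \phi_{2}$ and the possibly non-singleton write set $W_{1} \cup W_{2}$, appealing to Lemma \ref{lemma:multi2single} to return to a single-register \sra. Your explicit $\epsilon$-elimination of both factors before forming the product is a small point of added care that the paper leaves implicit, but it does not change the argument.
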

\begin{proof}
See Appendix \ref{sec:proof:closure} for complete proofs.
For union, concatenation and Kleene-star the proof is essentially the proof for converting \srem\ to \sra\ (and we have already proven that \sra\ and \srem\ are equivalent).
For intersection,
we construct a new \sra\ $A$ with $Q = A_{1}.Q \times A_{2}.Q$.
Then,
for each $q = (q_{1},q_{2}) \in Q$ we add a transition $\delta$ to $q' = (q_{1}',q_{2}') \in Q$ if there exists a transition $\delta_{1}$ from $q_{1}$ to $q_{1}'$ in $A_{1}$ and a transition $\delta_{2}$ from $q_{2}$ to $q_{2}'$ in $A_{2}$.
The write registers of $\delta$ are $W = \delta_{1}.W \cup \delta_{2}.W$,
i.e., we use multi-register \sra.
This new \sra\ can reach a final state only if both $A_{1}$ and $A_{2}$ reach their final states on a given string $S$. 
\end{proof}

On the other hand,
\sra\ are not closed under complement:
\begin{theorem}
\label{theorem:complement}
\sra\ and \srem\ are not closed under complement.
\end{theorem}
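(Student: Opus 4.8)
The plan is to exhibit a concrete \sra\ whose complement language cannot be recognized by any \sra. The natural candidate comes from the register automata literature: the language $L$ over an infinite universe consisting of all strings in which \emph{some} character is repeated, i.e.\ strings $t_1 \cdots t_n$ for which there exist $i < j$ with $t_i = t_j$. This $L$ is recognizable by a \sra: guess nondeterministically a position, store its character in $r_1$ via a transition $\top \downarrow r_1$, skip ahead with a $\top$ self-loop, and then require a later character equal to $r_1$ using an equality condition $\mathit{Eq}(\sim, r_1)$, with a final $\top$ self-loop to consume the rest. (Equality is available because the $\mathcal{V}$-structure may carry the equality relation; this is exactly the kind of predicate \rem\ and hence \srem\ allow.) The complement $\overline{L}$ is the set of strings in which \emph{all} characters are pairwise distinct. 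The claim is then that $\overline{L}$ is not recognized by any \sra, which immediately yields non-closure under complement; by Theorems~\ref{theorem:srem2sra} and~\ref{theorem:sra2srem} the same holds for \srem.

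First I would set up the argument that $\overline{L}$ is not \sra-recognizable. Suppose for contradiction that some \sra\ $A$ with $k$ registers recognizes $\overline{L}$; by Lemma~\ref{lemma:epsilon} we may assume $A$ has no $\epsilon$-transitions, and by Lemma~\ref{lemma:multi2single} we may assume it is single-register. The key combinatorial observation is that along any accepting run on a string of length $n > k$, at most $k$ distinct characters can ever reside in the registers at any one time, so most characters the automaton reads are compared only against a bounded "window" of remembered values. Concretely, I would take a string $w = t_1 \cdots t_n$ of $n$ pairwise-distinct characters with $n$ large relative to $|A.Q|$ and $k$, fix an accepting run, and track which register contents are "live" at each step. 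Because there are only $k$ registers, there is a position $j$ whose character $t_j$, after being read, is either never stored or is overwritten before it is ever used as an argument to a satisfied guard at a later transition — more carefully, one identifies two positions $i<j$ such that the run's behavior is insensitive to whether $t_i$ and $t_j$ are equal or distinct. Replacing $t_j$ by a fresh copy of $t_i$ produces a string that is still accepted by $A$ (the same sequence of transitions fires, since no guard along the run distinguished these characters) but which now lies in $L$, not $\overline{L}$ — contradiction.

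The hard part will be making the "insensitivity" step rigorous: one must argue that in an \sra, unlike a plain register automaton, the guards are \emph{arbitrary} $n$-ary predicates, so a guard could in principle test $t_j$ against $t_i$ in some complicated way even without equality being named explicitly. The right way around this is to choose the $\mathcal{V}$-structure deliberately: work over a universe like $\mathbb{N}$ (or an abstract countably infinite set) equipped \emph{only} with the equality relation, and observe that over such a structure every condition is, semantically, a Boolean combination of equalities among register variables and $\sim$; hence a condition's truth value on a tuple of pairwise-distinct elements depends only on the equality type of that tuple. Then a standard pumping/pigeonhole argument on equality types of register-configurations along the run (there are finitely many equality types on $k+1$ slots) gives the two indistinguishable positions, and the substitution $t_j \mapsto t_i$ preserves every guard evaluation along the run. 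I would present the construction of $A$ for $L$ in full, then give the non-recognizability of $\overline{L}$ via this equality-type pumping argument, citing the analogous classical result for register automata (\cite{DBLP:journals/tcs/KaminskiF94}, also discussed in \cite{DBLP:journals/jcss/LibkinTV15}) as the template; the \sra\ case reduces to it once one restricts attention to the equality-only structure, on which \sra\ conditions collapse to exactly the expressive power of register-automaton tests. The detailed version would go in Appendix~\ref{section:appendix}.
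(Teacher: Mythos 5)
Your proposal is correct and follows essentially the same route as the paper: a counter-example \sra\ accepting strings containing two equal elements (equal \emph{types}, in the paper's version), whose complement --- the all-distinct language --- is then shown unrecognizable by a register-counting pigeonhole argument. In fact your planned equality-type refinement is more careful than the paper's own proof, which simply asserts that the complementing automaton ``must have stored all of the previous $k+m-1$ elements''; your observation that one should fix an equality-only $\mathcal{V}$-structure so that guards depend only on equality types is exactly what is needed to make that step rigorous against arbitrary $n$-ary predicates.
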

\begin{proof}
See Appendix \ref{sec:proof:complement}.
\end{proof}

It is also not always possible to determinize them:
\begin{theorem}
\label{theorem:determinization}
\sra\ are not closed under determinization.
\end{theorem}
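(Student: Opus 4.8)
Since the paper has already established that \sra\ subsume register automata as a special case, and that register automata are known not to be determinizable, the natural strategy is to borrow a classical non-determinizable register automaton and re-express it as a \sra. Concretely, I would use the standard example of the language $\mathcal{L}$ over an infinite alphabet consisting of all strings in which some character appears twice (equivalently, strings that are \emph{not} composed of pairwise-distinct characters). A nondeterministic register automaton recognises this by guessing the position of the first of the two equal characters, storing it in a register, skipping ahead nondeterministically, and then checking equality against the register. This is directly encodable as a \sra\ over a $\mathcal{V}$-structure whose universe is, say, $\mathbb{N}$ (or the event tuples of Example~\ref{example:sensors}, comparing on the $\mathit{value}$ field), using only the relation $=$, the always-true condition $\top$ for the skipping self-loops, and one register; the corresponding \srem\ is $(\top)^{*} \cdot (\top \downarrow r_{1}) \cdot (\top)^{*} \cdot (\mathit{Equal}(\sim,r_{1})) \cdot (\top)^{*}$.

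Next I would show no \dsra\ recognises $\mathcal{L}$. The key observation is that a \dsra\ has a fixed finite number $k$ of registers, and along the unique run on any input string it can, at each step, only "remember" the contents of those $k$ registers plus its current state. I would feed it a string $S = a_{1} a_{2} \cdots a_{k+1}$ of $k+1$ pairwise-distinct characters none of which has yet forced acceptance, track the unique configuration $[\,k{+}2, q, v\,]$ reached, and note that the valuation $v$ assigns defined values to at most $k$ of the $a_{i}$. Pick an index $i$ with $a_{i} \notin \mathrm{range}(v)$. Then appending $a_{i}$ yields a string in $\mathcal{L}$ (it repeats $a_{i}$), so the \dsra\ must accept it; but appending a brand-new character $b$ distinct from all of $a_{1},\dots,a_{k+1}$ yields a string not in $\mathcal{L}$, so it must reject. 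The crux is to argue the \dsra\ cannot distinguish these two extensions: from configuration $[\,k{+}2,q,v\,]$, reading $a_{i}$ versus reading $b$ — two characters that both lie outside $\mathrm{range}(v)$ — the determinism of the transition guards (which depend only on $(u,v)$) forces the same transition to be taken in both cases (any guard satisfied by $(a_{i},v)$ is satisfied by $(b,v)$ and vice versa, since $v$ cannot "see" either of them), leading to configurations that differ only by having $a_{i}$ versus $b$ in the newly written registers — and by induction such configurations remain indistinguishable on all further new-character extensions. Hence one of the two required verdicts (accept $S a_{i}$, reject $S b$) is violated, a contradiction.

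The main obstacle will be the last step: making precise the indistinguishability claim, i.e.\ that for a deterministic \sra\ the "type" of a configuration under continued input of fresh characters depends only on the state and on the \emph{pattern of equalities} among the registers, not on their actual values. This requires a small automorphism/symmetry argument on the $\mathcal{V}$-structure — exploiting that the relevant structure is just $(\mathcal{U}, =)$ (or at least that the only relation used in the witness automaton is equality), so that any bijection of $\mathcal{U}$ fixing $\mathrm{range}(v)$ pointwise is an automorphism, and applying such a bijection sending $a_{i} \mapsto b$ transports the accepting run on $S a_{i} w$ to an accepting run on $S b w'$ for corresponding fresh suffixes. I would need to phrase this carefully enough that it applies to \emph{any} \dsra\ claiming to recognise $\mathcal{L}$, regardless of how many auxiliary registers or states it uses, not merely to the naive ones. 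Everything else — the construction of the witness \sra, checking it recognises $\mathcal{L}$, and the counting of distinct characters versus registers — is routine, and I would relegate the details to the appendix, as the paper does for its other proofs.
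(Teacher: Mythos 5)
Your proposal is correct and follows essentially the same route as the paper's proof: both exhibit a witness \sra\ for a ``two elements agree on a value'' language and derive a contradiction by feeding a hypothetical $k$-register \dsra\ a string of $k+1$ pairwise-distinct values, so that the pigeonhole principle on the registers forces a failure. The paper's appendix argument is the informal counterpart of yours --- it asserts that the \dsra\ ``must store'' each of the $k+1$ prefix elements and cannot overwrite any register, which is precisely the claim your fresh-element indistinguishability (automorphism) step makes rigorous --- so your version is, if anything, a more careful rendering of the same idea.
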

\begin{proof}
See Appendix \ref{sec:proof:determinization}.
\end{proof}

\sra\ can thus be constructed from four basic operators (union, intersection, concatenation and Kleene-star) in a compositional manner,
providing substantial flexibility and expressive power for CER applications.
However, 
as is the case for register automata \cite{DBLP:journals/tcs/KaminskiF94}, 
\sra\ are not closed under complement,
something which could pose difficulties for handling \emph{negation},
i.e.,
the ability to state that a sub-pattern should not happen for the whole pattern to be detected.

\sra\ are also not closed under determinization,
a result which might seem discouraging.
In the next section,
we show that there exists a sub-class of \srem\ for which a translation to deterministic \sra\ is indeed possible.
This is achieved if we apply a windowing operator and limit the length of strings accepted by \srem\ and \sra.

\subsection{Windowed SREM/SRA}

We can overcome the negative results about complement and determinization
by using windows in \srem\ and \sra.
In general,
CER systems are not expected to remember every past event of a stream
and produce matches involving events that are very distant.
On the contrary,
it is usually the case that CER patterns include an operator that limits the search space of input events,
through the notion of windowing.
This observation motivates the introduction of windowing in \srem.
\begin{definition}[Windowed \srem]
\label{definition:windowed_srem}
Let $e$ be a \srem\ over a $\mathcal{V}$-structure $\mathcal{M}$ and a set of register variables $R = \{r_{1}, \cdots, r_{k}\}$, 
$S$ a string constructed from elements of the universe of $\mathcal{M}$ 
and $v,v' \in F(r_{1}, \cdots, r_{k})$.
A windowed \srem\ (\wsrem) is an expression of the form $e' := e^{[1..w]}$,
where $w \in \mathbb{N}_{1}$. 
We define the relation $(e',S,v) \vdash v'$ as follows:
$(e,S,v) \vdash v'$ and $\lvert S \rvert \leq w$.
$\blacktriangleleft$
\end{definition}

The windowing operator does not add any expressive power to \srem.
We could use the index of an event in the stream as an event attribute
and then add binary conditions in an expression which ensure that the difference between the index of the last event read and the first is no greater that $w$.
It is more convenient, however, to have an explicit operator for windowing.

We first show how we can construct a so-called ``unrolled \sra''\ from a windowed expression:
\begin{lemma}
\label{lemma:windowed_srem}
For every windowed \srem\ there exists an equivalent unrolled \sra\ without any loops, 
i.e., a \sra\ where each state may be visited at most once.
\end{lemma}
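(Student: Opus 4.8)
The plan is to start from the equivalence already established between \srem\ and \sra\ (Theorems \ref{theorem:srem2sra} and \ref{theorem:sra2srem}) and combine it with the trivial observation that a windowed expression $e^{[1..w]}$ only accepts strings of length at most $w$. First I would take the \srem\ $e$ underlying $e' := e^{[1..w]}$ and, by Theorem \ref{theorem:srem2sra}, construct an equivalent \sra\ $A$ with $\mathcal{L}(A) = \mathcal{L}(e)$; by Lemma \ref{lemma:epsilon} I may assume $A$ has no $\epsilon$-transitions, so that a run consuming a string of length $n$ has exactly $n$ transitions. The goal is now to build an \sra\ $A'$ that behaves like $A$ but rejects as soon as more than $w$ input characters have been consumed, and that moreover has no loops.

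The key construction is a "time-stamped" product: take $A'.Q := A.Q \times \{0, 1, \dots, w\}$, with start state $(A.q_s, 0)$ and final states $(q, i)$ for $q \in A.Q_f$ and $0 \le i \le w$. For every transition $q, \phi \downarrow W \rightarrow q'$ of $A$ and every $i$ with $0 \le i < w$, add the transition $(q, i), \phi \downarrow W \rightarrow (q', i+1)$ to $A'$; add no transitions out of states of the form $(q, w)$. Registers are inherited unchanged. The second component of a reachable state counts exactly how many characters have been consumed, so after reading a string $S$ with $|S| = n$ the automaton is in a state with second component $n$, and it is stuck (no successor) precisely when $n$ would exceed $w$. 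Hence $A'$ accepts $S$ iff $A$ accepts $S$ and $|S| \le w$, which by the definition of \wsrem\ (Definition \ref{definition:windowed_srem}) is exactly $\mathcal{L}(e')$. Since every transition strictly increases the (bounded) counter component, $A'$ has no directed cycle, so no run visits any state twice; this is the "unrolled, no loops" property required.

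I would then discharge the minor bookkeeping: $A'$ is genuinely a \sra\ in the sense of Definition \ref{definition:sra}, its state set is finite ($|A.Q| \cdot (w+1)$ states), and the run of $A'$ over any $S$ projects onto a run of $A$ over $S$ and conversely, which gives the language equality by Definition \ref{definition:sra_language}. The only point that needs a word of care — and the one I expect to be the main obstacle, though it is a small one — is the handling of $\epsilon$-transitions and of the empty string: if one did not first apply Lemma \ref{lemma:epsilon}, a chain of $\epsilon$-transitions could make the counter argument fail to bound the number of states visited in a run even when the consumed string is short, and loops among $\epsilon$-transitions would survive into $A'$. Eliminating $\epsilon$-transitions up front makes the counter an exact measure of run length and makes the acyclicity argument immediate; the empty string case ($w \ge 1$ always, so $|S| = 0 \le w$) is then handled by whether $(A.q_s, 0)$ is final, exactly as in $A$.
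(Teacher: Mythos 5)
Your construction is correct for the lemma as literally stated: the product with a strictly increasing counter $\{0,\dots,w\}$ yields a finite, acyclic \sra\ in which no run can revisit a state, and the projection/lifting argument gives $\mathcal{L}(A') = \{S \in \mathcal{L}(e) : \lvert S \rvert \le w\} = \mathcal{L}(e^{[1..w]})$; your observation that $\epsilon$-elimination must come first is also the right one. However, this is a genuinely different route from the paper's, and the difference is not cosmetic. The paper does not build a DAG by sharing states across paths; it enumerates (incrementally) all walks of length up to $w$ of the $\epsilon$-free \sra\ and unrolls them into a \emph{tree}, and --- crucially --- it clones the \emph{registers} along each walk, creating a fresh write register for every write on every branch and redirecting each condition's register selection to the last-created clone. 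The payoff of that extra work is a structural invariant your automaton lacks: in the paper's \usra\ each register is written at most once along any run, and distinct nondeterministic runs write to disjoint sets of register clones. This invariant is exactly what the subsequent determinization proof (Theorem \ref{theorem:wsrem2dsra}) leans on when it takes unions of write sets across the transitions merged by the subset construction; with your construction, which inherits the registers of $A$ unchanged, a looping transition of $A$ that writes to $r_1$ becomes a chain of transitions all overwriting the same $r_1$, and two merged runs may demand incompatible contents for $r_1$, so the powerset construction would break (this is essentially the obstruction in the non-determinizability counterexample of Theorem \ref{theorem:determinization}). So your proof discharges the lemma in isolation, and does so with only $\lvert A.Q \rvert \cdot (w+1)$ states instead of the paper's potentially exponential tree, but it does not deliver the object the lemma is actually packaged to provide for the rest of the paper; to serve that purpose you would still need to add the register-cloning step (or argue separately that single-write registers can be recovered from your DAG).
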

\begin{proof}
\begin{algorithm}
\caption{Constructing unrolled \sra\ for windowed \srem\ (simplified).}
\label{algorithm:unrolling_simplified}
\SetAlgoNoLine
\KwIn{Windowed \srem\ $e' := e^{[1..w]}$}
\KwOut{Deterministic \sra\ $A_{e'}$ equivalent to $e'$}
$A_{e,\epsilon} \leftarrow \mathit{ConstructSRA}(e)$\; \label{algorithm:unrolling_simplified:line:sra} 
$A_{e} \leftarrow \mathit{EliminateEpsilon}(A_{e,\epsilon})$\; \label{algorithm:unrolling_simplified:line:eliminate} 
enumerate all walks of $A_{e}$ of length up to $w$; \label{algorithm:unrolling_simplified:line:unroll_start} \tcp{{\footnotesize Now unroll $A_{e}$.}}\
join walks through disjunction\;
\hspace{0.01cm} collapse common prefixes;\label{algorithm:unrolling_simplified:line:unroll_end}
\end{algorithm}
The full proof and the complete construction algorithm are presented in Appendix \ref{sec:proof:windowed_srem}.
\end{proof}
\begin{example}
Here, we provide only the general outline of the algorithm and an example.
Consider, e.g., the following \srem:
\begin{equation}
\label{srem:t_seq_h_filter_eq_id_w}
e_{3} := ( (\top)^{*} \cdot \mathit{TypeIsT}(\sim) \downarrow r_{1}) \cdot (\top)^{*} \cdot (\mathit{TypeIsH}(\sim) \wedge \mathit{EqualId}(\sim,r_{1})))^{[1..w]}
\end{equation}
It can skip any number of events with the first sub-expression $(\top)^{*}$.
Then it expects to find an event with type $T$ and stores it to register $r_{1}$ (sub-expression $\mathit{TypeIsT}(\sim) \downarrow r_{1}$).
The third sub-expression is again $(\top)^{*}$,
meaning that, 
after seeing a $T$ event, 
we are allowed to skip events.
Finally,
with the last sub-expression ($\mathit{TypeIsH}(\sim) \wedge \mathit{EqualId}(\sim,r_{1})$),
if some event after the $T$ event is of type $H$ and they have the same identifier,
then the string is accepted,
provided that its length is also at most $w$. 
Figure \ref{fig:det:example_unrolled} shows the steps taken for constructing the equivalent unrolled \sra\ for this expression.
A simplified version of the unrolling algorithm is shown in Algorithm \ref{algorithm:unrolling_simplified}.

\begin{figure}[!ht]
    \centering
    \begin{subfigure}[b]{0.55\textwidth}
        \includegraphics[width=\textwidth]{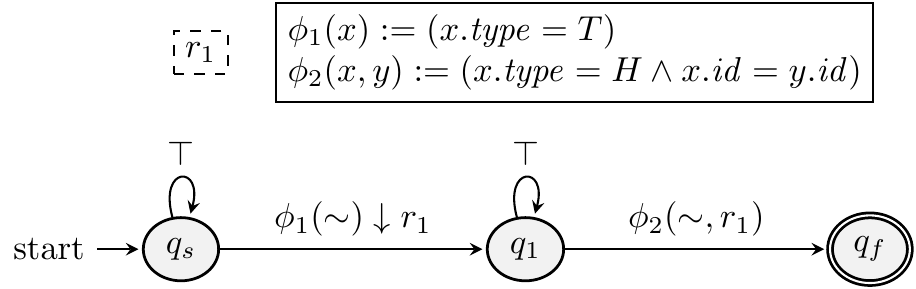}
        \caption{\sra\ for \srem\ \eqref{srem:t_seq_h_filter_eq_id_w} before unrolling.}\label{fig:det:initial}
    \end{subfigure}
    \begin{subfigure}[b]{0.75\textwidth}
        \includegraphics[width=\textwidth]{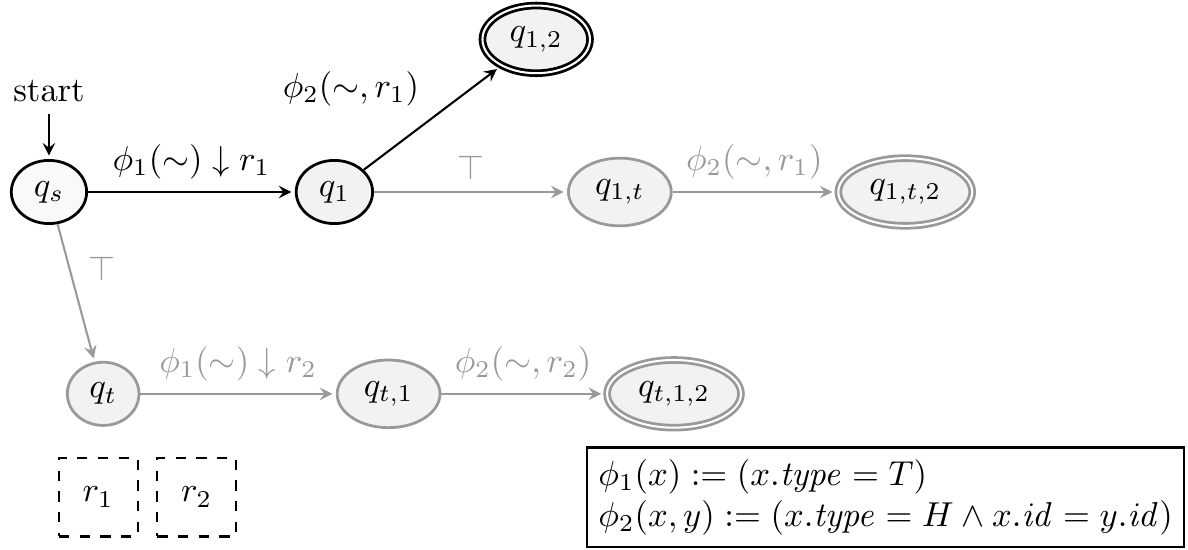}
        \caption{\sra\ for \srem\ \eqref{srem:t_seq_h_filter_eq_id_w} after unrolling cycles, for $w=3$ (whole \sra, black and light gray states) and $w=2$ (top 3 states in black).}\label{fig:det:example_unrolled}
    \end{subfigure}
    ~ 
    \begin{subfigure}[b]{0.75\textwidth}
        \includegraphics[width=\textwidth]{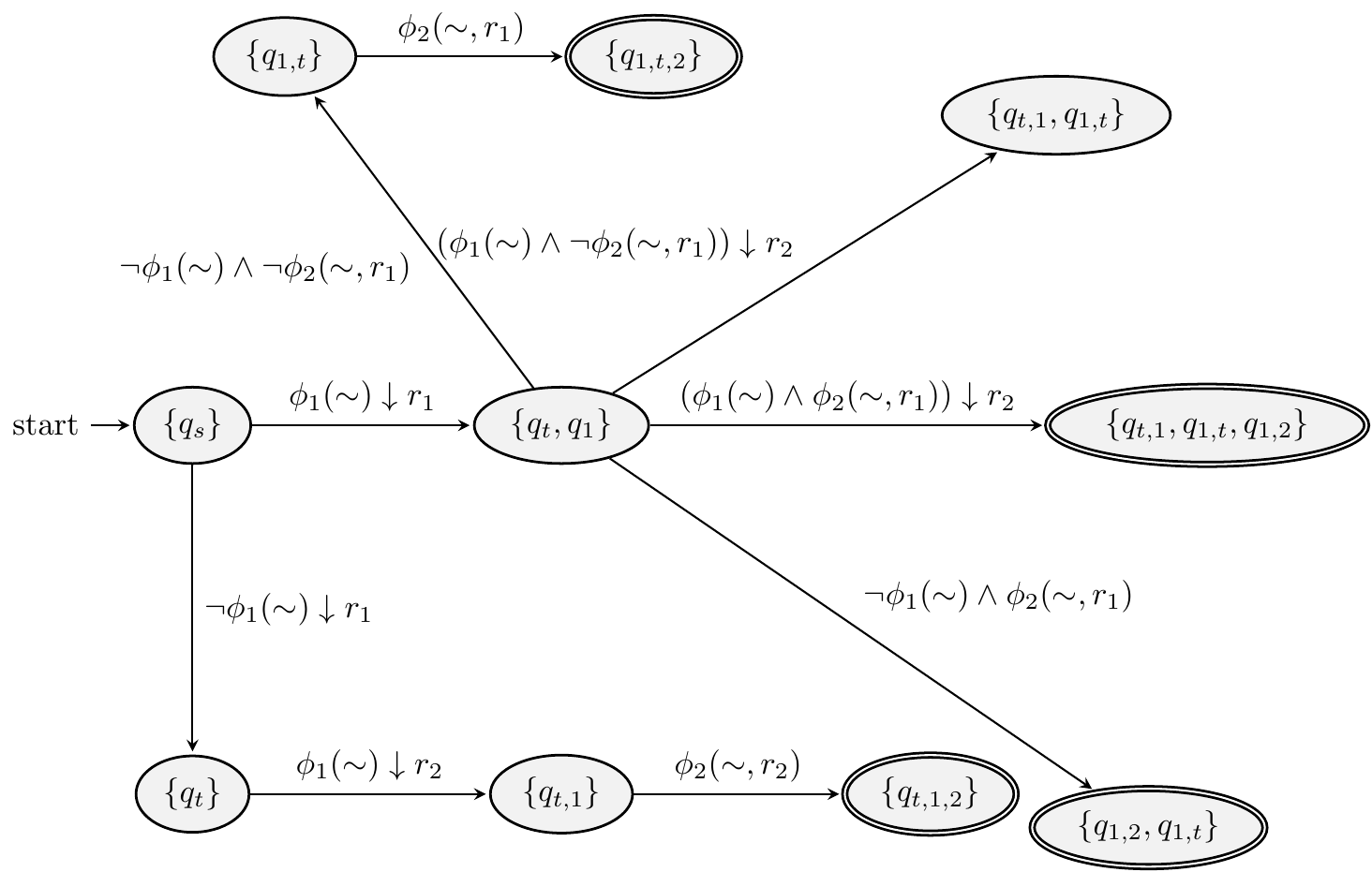}
        \caption{\dsra\ (only part of it), for $w=3$.}\label{fig:det:output_free}
    \end{subfigure}
    \caption{Constructing \dsra\ for \srem\ \eqref{srem:t_seq_h_filter_eq_id_w}.}\label{fig:det:example}
\end{figure}

The construction algorithm first produces a \sra\ as usual,
without taking the window operator into account
(see line \ref{algorithm:unrolling_simplified:line:sra} of Algorithm \ref{algorithm:unrolling_simplified}).
For our example,
the result would be the \sra\ of Figure \ref{fig:det:initial}
(please, note that the automaton of Figure \ref{fig:det:initial} is slightly different than that of Figure \ref{fig:example1} due to the presence of the first $(\top)^{*}$ sub-expression).
Then the algorithm eliminates any $\epsilon$-transitions (line \ref{algorithm:unrolling_simplified:line:eliminate}).
The next step is to use this \sra\ in order to create the equivalent unrolled \sra\ (\usra).
The rationale behind this step is that the window constraint essentially imposes an upper bound on the number of registers that would be required for a deterministic \sra.
For our example,
if $w{=}3$,
then we know that we will need at least one register,
if a $T$ event is immediately followed by an $H$ event.
We will also need at most two registers,
if two consecutive $T$ events appear before an $H$ event.
The function of the \usra\ is to create the number of registers that will be needed,
through traversing the original \sra. 
Algorithm \ref{algorithm:unrolling_simplified} does this by enumerating all the walks of length up to $w$
on the \sra\ graph,
by unrolling any cycles.
Lines \ref{algorithm:unrolling_simplified:line:unroll_start} -- \ref{algorithm:unrolling_simplified:line:unroll_end} of Algorithm \ref{algorithm:unrolling_simplified} show this process in a simplified manner.
The \usra\ for our example is shown in Figure \ref{fig:det:example_unrolled} for $w{=}3$.
The actual algorithm does not perform an exhaustive enumeration, 
but incrementally creates the \usra,
by using the initial \sra\ as a generator of walks.
Every time we expand a walk,
we add a new transition, 
a new state and possibly a new register,
as clones of the original transition, state and register.
In our example,
we start by creating a clone of $q_{s}$ in Figure \ref{fig:det:initial},
also named $q_{s}$ in Figure \ref{fig:det:example_unrolled}.
From the start state of the initial \sra\, 
we have two options.
Either loop in $q_{s}$ through the $\top$ transition
or move to $q_{1}$ through the transition with the $\phi_{1}$ condition.
We can thus expand $q_{s}$ of the \usra\ with two new transitions:
from $q_{s}$ to $q_{t}$ and from $q_{s}$ to $q_{1}$ in Figure \ref{fig:det:example_unrolled}.
We keep expanding the \sra\ this way until we reach final states and without exceeding $w$.
As a result, 
the final \usra\ has the form of a tree, 
whose walks and runs are of length up to $w$. $\diamond$
\end{example}

A \usra\ then allows us to capture windowed expressions.
Note though that the algorithm we presented above,
due to the unrolling operation,
can result in a combinatorial explosion of the number of states of the \dsra,
especially for large values of $w$.
Its purpose here was mainly to establish Lemma \ref{lemma:windowed_srem}.

Having a \usra\ makes it easy to subsequently construct a \dsra:
\begin{theorem}
\label{theorem:wsrem2dsra}
For every windowed \srem\ there exists an equivalent deterministic \sra.
\end{theorem}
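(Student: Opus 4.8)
The plan is to reduce to the loop-free case and then run a symbolic subset construction. First I would invoke Lemma~\ref{lemma:windowed_srem} to obtain, for the given \wsrem\ $e' := e^{[1..w]}$, an equivalent unrolled \sra\ $A_u$ that has no loops; applying Lemma~\ref{lemma:epsilon} if necessary, I may also assume $A_u$ is $\epsilon$-free. The decisive structural facts about $A_u$ are that it is finite and tree-shaped (every state is visited at most once along any run, and runs have length at most $w$), and that, because the unrolling introduces only finitely many fresh register clones, it uses a \emph{bounded} number of registers, say $R_u = A_u.R$. This boundedness is exactly what is unavailable for arbitrary register automata, and it is what will make the determinization go through.

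Next I would determinize $A_u$ by a powerset construction adapted to symbolic guards. A state of the target \dsra\ $A_D$ is a subset $P \subseteq A_u.Q$; the start state is $\{A_u.q_s\}$, a state $P$ is final iff $P \cap A_u.Q_f \neq \emptyset$, and $A_D.R = R_u$. To define the transitions out of $P$, I collect all outgoing transitions $(q_i,\phi_i,W_i,q_i')$ of the members $q_i \in P$, regard each guard $\phi_i$ as a predicate on pairs $(u,v)$ (over $\sim$ together with the transition's register selection), and form the \emph{minterms} of the finite Boolean algebra these guards generate, i.e.\ the satisfiable conjunctions of the $\phi_i$ and their negations. For each such minterm $m$ I set $T_m = \{\, q_i' : m \Rightarrow \phi_i \,\}$ and $W_m = \bigcup \{\, W_i : m \Rightarrow \phi_i \,\}$, and, whenever $T_m \neq \emptyset$, I add the transition $P, m \downarrow W_m \rightarrow T_m$ to $A_D$. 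Since $A_u$ has finitely many states, hence finitely many subsets, and each $P$ yields finitely many minterms while $R_u$ is finite, $A_D$ is a finite \sra.

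Correctness would follow from the invariant that after consuming a prefix $S_{1..j}$ the run of $A_D$ reaches a configuration $[j{+}1, P_j, v_j]$ in which $P_j$ is exactly the set of states in which some run of $A_u$ on $S_{1..j}$ ends, and, for each $q \in P_j$, the restriction of $v_j$ to the registers lying on the tree-path from $A_u.q_s$ to $q$ coincides with the valuation of that run. The base case is immediate, and the inductive step uses that the minterms partition the space of $(u,v)$, so reading $t_{j+1}$ selects exactly one $A_D$-transition, whose target $T_m$ and write set $W_m$ aggregate precisely the $A_u$-moves enabled by $t_{j+1}$. Here the disjointness of register clones across distinct branches of $A_u$ is what lets the single global valuation $v_j$ represent all the per-branch valuations at once, and it also makes writing $t_{j+1}$ to the whole of $W_m$ harmless: the ``surplus'' writes land on registers irrelevant to the branches that did not request them. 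Determinism of $A_D$ is then built in, since from any $P$ I added at most one transition per minterm and the minterms are mutually exclusive, so for every $(u,v)$ at most one outgoing transition of $P$ fires; hence $A_D$ is a \dsra\ with $\mathcal{L}(A_D) = \mathcal{L}(A_u) = \mathcal{L}(e')$.

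The step I expect to be the main obstacle is precisely the register bookkeeping in the correctness argument: showing that one valuation can faithfully track all simultaneously active branches of $A_u$, which hinges on the fresh-clone discipline guaranteed by Lemma~\ref{lemma:windowed_srem}, and checking that merging write sets never corrupts a branch. Everything else — the minterm enumeration, finiteness, and the determinism argument — is the standard symbolic-automata machinery. As a final remark, the \dsra\ produced here is in general a multi-register \sra, which the definition of \dsra\ permits; if a single-register deterministic automaton is wanted, one can additionally appeal to Lemma~\ref{lemma:multi2single} (after checking that its construction preserves determinism).
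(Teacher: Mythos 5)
Your proposal is correct and follows essentially the same route as the paper: first obtain the loop-free unrolled \sra\ via Lemma~\ref{lemma:windowed_srem}, then apply a powerset construction with minterms, keeping the full register set, taking unions of write sets, and relying on the tree structure (each register written at most once per run, clones disjoint across branches) to show a single global valuation faithfully tracks all active branches. The paper's Algorithm~\ref{algorithm:determinization} and its inductive invariant match your construction and correctness argument almost step for step, including the observation that the multi-register feature of the resulting \dsra\ is permitted by the definitions.
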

\begin{proof}
The proof for determinization is presented in Appendix \ref{sec:proof:wsrem2dsra}.
It is constructive and the determinization algorithm is based on the powerset construction of the states of the non-deterministic \sra. 
It is similar to the algorithm for symbolic automata \cite{DBLP:conf/lpar/VeanesBM10,DBLP:conf/cav/DAntoniV17}.
It does not add or remove any registers.
It initially constructs the powerset of the states of the \usra. 
The members of this powerset will be the states of the \dsra.
It then tries to make each such new state, say $q_{d}$, deterministic,
by creating transitions with mutually exclusive conditions when they have the same output.
The construction of these mutually exclusive conditions is done by gathering the conditions of all the transitions that have as their source a member of $q_{d}$.
Out of these conditions,
the set of \emph{minterms} is created,
i.e.,
the mutually exclusive conjuncts constructed from the initial conditions,
where each conjunct is a condition in its original or its negated form.
A transition is then created for each minterm,
with $q_{d}$ being the source.
Then, only one transition can be triggered,
since these minterms are mutually exclusive.
\end{proof}

\begin{example}
As an example,
Figure \ref{fig:det:output_free} shows the result of converting the \usra\ of Figure \ref{fig:det:example_unrolled} to a \dsra.
We have simplified somewhat the conditions of each transition due to the presence of the $\top$ predicates in some of them.
For example, 
the minterm $\phi_{1} {\wedge} \neg \top$ for the start state is unsatisfiable and can be ignored while $\phi_{1} {\wedge} \top$ may be simplified to $\phi_{1}$.
The figure shows only part of the \dsra\ to avoid clutter.
Note that some of the rightmost states may be further expanded.
For example,
state $\{q_{t,1},q_{1,t}\}$ (top right) can be expanded.
With the minterm $\phi_{2}(\sim,r_{1}) \wedge \neg \phi_{2}(\sim,r_{2})$,
it would go to the final state $\{q_{1,t,2}\}$ 
(not shown in the figure). $\diamond$
\end{example}

Being able to derive a deterministic \sra\ is important for Complex Event Forecasting (CEF),
since,
as we will show,
determinization is an important intermediate step in this task.
A deterministic \sra\ essentially provides us with the ``symbols'' with which we populate a prediction suffix tree,
the structure that captures the statistical properties of an input event stream.

We may now prove,
as a corollary,
that windowed \sra\ are also closed under complement:
\begin{corollary}
\label{corollary:wsra_complement}
Windowed \sra\ are closed under complement.
\end{corollary}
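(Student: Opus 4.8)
The plan is to derive closure under complement for windowed \sra\ directly from the determinization result just established, by mimicking the classical DFA complementation argument, with the one extra subtlety that \sra\ transitions are guarded by conditions rather than literal symbols, so the ``missing transitions'' have to be made explicit before we flip final and non-final states.

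First I would take an arbitrary windowed \sra\ $e' := e^{[1..w]}$ and apply Theorem \ref{theorem:wsrem2dsra} to obtain an equivalent \dsra\ $A_D$ with $\mathcal{L}(A_D) = \mathcal{L}(e')$. By Lemma \ref{lemma:epsilon} I may assume $A_D$ has no $\epsilon$-transitions (the determinization construction already produces such an automaton). The key point is that, because of the window, every accepting run has length at most $w$, so I can work with an automaton whose runs are bounded; in particular the unrolled, loop-free structure from Lemma \ref{lemma:windowed_srem} means I can talk cleanly about ``the'' configuration reached after reading a prefix.

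Next I would \emph{complete} $A_D$: for each state $q$ and each valuation class, the conditions on the outgoing transitions of $q$ (the minterms produced by the determinization step) are mutually exclusive but not necessarily exhaustive, and moreover a string of length exceeding $w$ must be rejected. So I add a fresh sink state $q_{\mathit{sink}} \notin A_D.Q_f$, and for every state $q$ I add a transition from $q$ to $q_{\mathit{sink}}$ guarded by the negation of the disjunction of all existing outgoing conditions of $q$ (with empty write-register set); I also route the post-window behaviour to $q_{\mathit{sink}}$. The resulting \sra\ $A_D'$ is still deterministic and, for every string $S$ of length at most $w$, has exactly one run, ending in a state of $A_D.Q_f$ iff $S \in \mathcal{L}(e')$ and in a non-final state otherwise. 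Then I define $A_c$ to be $A_D'$ with the set of final states replaced by $A_D'.Q \setminus A_D'.Q_f$. For any $S$, the unique run of $A_c$ on $S$ is the unique run of $A_D'$ on $S$, now accepting iff that run ended in a non-final state of $A_D'$, i.e.\ iff $S \notin \mathcal{L}(e')$; hence $S \in \mathcal{L}(A_c) \Leftrightarrow S \notin \mathcal{L}(e') = \mathcal{L}(A_D)$, which is exactly the closure condition of Definition \ref{definition:closure}.

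The main obstacle is the completion step rather than the state-flip: one must be careful that the added ``complement'' guards $\neg\bigvee_i \phi_i$ are genuine conditions in the sense of Definition \ref{definition:condition} (they are, since conditions are closed under $\neg$ and $\vee$) and that adding them cannot spoil determinism or accidentally make some length-$\le w$ string acceptable by two runs. Determinism is preserved because the new guard out of $q$ is by construction disjoint from every old guard out of $q$ and all the new transitions from a given state go to the single sink. The only genuinely delicate point is handling the window bound uniformly with the ``dead'' behaviour — but since $A_D$ was obtained from an \emph{unrolled}, loop-free \usra, every run already halts within $w$ steps, so no separate length-counting mechanism is needed and the sink state alone suffices.
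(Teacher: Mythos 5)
Your proposal is correct and follows essentially the same route as the paper's proof: determinize via Theorem \ref{theorem:wsrem2dsra}, complete the resulting \dsra\ by adding a non-final sink state reached under the conjunction of the negated outgoing guards (logically identical to your $\neg\bigvee_i\phi_i$), and then flip final and non-final states, with the argument resting on the unique run of a deterministic, complete automaton. The points you flag as delicate (the complement guard being a legitimate condition, determinism being preserved, and the window bound being absorbed by the loop-free unrolled structure plus the sink) are exactly the ones the paper's construction handles.
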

\begin{proof}
See Appendix \ref{sec:proof:wsra_complement}.
\end{proof}

\section{Streaming \sra\ for Complex Event Recognition}
\label{sec:streaming_sra}

We have thus far described how \srem\ and \sra\ can be applied to bounded strings that are known in their totality before recognition.
A string is given to a \sra\ and an answer is expected about whether the whole string belongs to the automaton's language or not.
However, 
in CER/F we are required to handle continuously updated streams of events and detect instances of \srem\ satisfaction as soon as they appear in a stream. 
For example, 
the automaton of the classical regular expression $a \cdot b$ would accept only the string $a,b$.
In a streaming setting, 
we would like the automaton to report a match every time this string appears in a stream.
For the stream $a,b,c,a,b,c$, 
we would thus expect two matches to be reported,
one after the second symbol and one after the fifth
(assuming that we are interested only in contiguous matches).

Slight modifications are required so that \srem\ and \sra\ may work in a streaming setting
(the discussion in this section develops along the lines presented in our previous work \cite{DBLP:journals/vldb/AlevizosAP21}, with the difference that here we are concerned with symbolic automata with memory).
First, 
we need to make sure that the automaton can start its recognition after every new element.
If we have a classical regular expression $R$,
we can achieve this by applying on the stream the expression $\Sigma^{*} \cdot R$,
where $\Sigma$ is the automaton's (classical) alphabet.
For example,
if we apply $R := \{a,b,c\}^{*} \cdot (a \cdot b)$ on the stream $a,b,c,a,b,c$,
the corresponding automaton would indeed reach its final state after reading the second and the fifth symbols.
In our case, 
events come in the form of tuples with both numerical and categorical values. 
Using database systems terminology we can speak of tuples from relations of a database schema \cite{DBLP:conf/icdt/GrezRU19}.
These tuples constitute the universe $\mathcal{U}$ of a $\mathcal{V}$-structure $\mathcal{M}$.
A stream $S$ then has the form of an infinite sequence $S=t_{1},t_{2},\cdots$, where $t_{i} \in \mathcal{U}$.
Our goal is to report the indices $i$ at which a complex event is detected.

More precisely,
if $S_{1..k}=\cdots,t_{k-1},t_{k}$ is the prefix of $S$ up to the index $k$,
we say that an instance of a \srem\ $e$ is detected at $k$ iff there exists a suffix $S_{m..k}$ of $S_{1..k}$ such that $S_{m..k} \in \mathcal{L}(e)$.
In order to detect complex events of a \srem\ $e$ on a stream, 
we use a streaming version of \srem\ and \sra.

\begin{definition}[Streaming \srem\ and \sra]
If $e$ is a \srem, 
then $e_{s}= \top^{*} \cdot e$ is called the streaming \srem\ (\ssrem) corresponding to $e$.
A \sra\ $A_{e_{s}}$ constructed from $e_{s}$ is called a streaming \sra\ (\ssra) corresponding to $e$.
$\blacktriangleleft$
\end{definition}

Using $e_{s} = \top^{*} \cdot e$ we can detect complex events of $e$ while reading a stream $S$,
since a stream segment $S_{m..k}$ belongs to the language of $e$ iff the prefix $S_{1..k}$ belongs to the language of $e_{s}$.
The prefix $\top^{*}$ lets us skip any number of events from the stream and start recognition at any index $m, 1 \leq m \leq k$.

\begin{proposition}
\label{proposition:streamingsrem}
If $S=t_{1},t_{2},\cdots$ is a stream of elements from a universe $\mathcal{U}$ of a $\mathcal{V}$-structure $\mathcal{M}$, 
where $t_{i} \in \mathcal{U}$  
and $e$ is a \srem\ over $\mathcal{M}$,
then, 
for every $S_{m..k}$, $S_{m..k} \in \mathcal{L}(e)$ iff $S_{1..k} \in \mathcal{L}(e_{s})$ (and $S_{1..k} \in \mathcal{L}(A_{e_{s}})$).
\end{proposition}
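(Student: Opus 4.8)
The plan is to unfold the definition of $\mathcal{L}(e_s)$ with $e_s = \top^* \cdot e$ and peel off the $\top^*$ prefix explicitly. By Definition~\ref{definition:srem_semantics}, case~4 (concatenation), $S_{1..k} \in \mathcal{L}(e_s)$ means $(e_s, S_{1..k}, \sharp) \vdash v'$ for some valuation $v'$, which unfolds to: there is a split $S_{1..k} = S' \cdot S''$ with $(\top^*, S', \sharp) \vdash v''$ and $(e, S'', v'') \vdash v'$. So the argument has two directions, and in both the key fact I would establish first is a small lemma about $\top^*$: for any string $T$, $(\top^*, T, v) \vdash v'$ holds if and only if $v' = v$. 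This follows by induction on $|T|$ using case~6 of Definition~\ref{definition:srem_semantics} together with case~2 applied to the condition $\top$ (reading one character under $\top$ never changes the valuation, since $\top$ has no write annotation and $(u,v)\models\top$ always holds by Definition~\ref{definition:condition_semantics}). Hence $(\top^*, S', \sharp) \vdash v''$ forces $v'' = \sharp$.

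For the forward direction ($\Leftarrow$): assume $S_{1..k} \in \mathcal{L}(e_s)$. By the unfolding above there is a prefix $S'$ of $S_{1..k}$ and a suffix $S''$ with $S_{1..k} = S' \cdot S''$, $(\top^*, S', \sharp) \vdash \sharp$, and $(e, S'', \sharp) \vdash v'$. Writing $S' = S_{1..m-1}$ for the appropriate $m$ with $1 \le m \le k+1$, we get $S'' = S_{m..k}$, and $(e, S_{m..k}, \sharp) \vdash v'$ says exactly $S_{m..k} \in \mathcal{L}(e)$ by Definition~\ref{definition:language_srem}. For the converse ($\Rightarrow$): given $S_{m..k} \in \mathcal{L}(e)$, i.e.\ $(e, S_{m..k}, \sharp) \vdash v'$, apply the $\top^*$ lemma in the easy direction to get $(\top^*, S_{1..m-1}, \sharp) \vdash \sharp$, then combine via case~4 of Definition~\ref{definition:srem_semantics} to obtain $(\top^* \cdot e, S_{1..m-1} \cdot S_{m..k}, \sharp) \vdash v'$, i.e.\ $(e_s, S_{1..k}, \sharp) \vdash v'$, so $S_{1..k} \in \mathcal{L}(e_s)$. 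Finally, the parenthetical claim $S_{1..k} \in \mathcal{L}(A_{e_s})$ is immediate from Theorem~\ref{theorem:srem2sra}: $A_{e_s}$ is constructed to be equivalent to $e_s$, so $\mathcal{L}(A_{e_s}) = \mathcal{L}(e_s)$.

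I do not expect a serious obstacle here; the statement is essentially a bookkeeping exercise on the inductive semantics. The one place requiring mild care is the $\top^*$ lemma and, in particular, handling the degenerate splits: when $m = k+1$ the segment $S_{m..k}$ is empty and one needs the $S = \epsilon$ branch of case~6, and when $m = 1$ the prefix $S_{1..m-1}$ is empty and again the $\epsilon$ branch applies. Making the induction on $|S'|$ (equivalently on $k - |S_{m..k}|$) clean across these boundary cases is the only subtlety; everything else is a direct appeal to the definitions already fixed in the excerpt.
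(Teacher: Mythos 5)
Your proof is correct and follows essentially the same route as the paper's: split $S_{1..k}$ into a $\top^{*}$-prefix and a suffix matched by $e$, and appeal to the concatenation case of the \srem\ semantics in both directions. Your explicit lemma that $(\top^{*},T,v)\vdash v'$ forces $v'=v$ is a detail the paper leaves implicit when it writes $\mathcal{L}(\top^{*})\cdot\mathcal{L}(e)=\mathcal{L}(\top^{*}\cdot e)$ (an identity that relies precisely on $\top^{*}$ not altering the valuation, since $\mathcal{L}(e)$ is defined relative to the empty valuation), so your version is, if anything, slightly more careful.
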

\begin{proof}
See Appendix \ref{sec:proof:streamingsrem}.
\end{proof}

Note that \ssrem\ and \ssra\ are just special cases of \srem\ and \sra\ respectively.
Therefore, 
every result that holds for \srem\ and \sra\ also holds for \ssrem\ and \ssra\ as well.

\section{Notes on Complexity}
\label{sec:complexity}

In the theory of formal languages it is customary to present complexity results for various decision problems, most commonly for the problem of non-emptiness (whether an expression or automaton accepts at least one string),
that of membership (deciding whether a given string belongs to the language of an expression/automaton) and that of universality (deciding whether a given expression/automaton accepts every possible string).
We briefly discuss here these problems for the case of \srem\ and \sra.

The complexity of these problems for \srem\ and \sra\ depends heavily on the nature of the conditions used as terminal expressions in \srem\ and as transition guards in \sra,
e.g., the $\mathit{EqualId}(\sim,r_{1})$ condition in \srem\ \eqref{srem:t_seq_h_filter_eq_id_w}.
This, in turn, depends on the complexity of deciding whether a given element from the universe $\mathcal{U}$ of a $\mathcal{V}$-structure $\mathcal{M}$ belongs to a relation $R$ from $\mathcal{M}$.
Since we have not imposed until now any restrictions on such relations,
the complexity of the aforementioned decision problems can be ``arbitrarily'' high and thus we cannot provide specific bounds.
If, for example, the problem of evaluating a relation $R$ is NP-complete and this relation is used in a \srem/\sra\ condition,
this then implies that the problem of membership immediately becomes at least NP-complete.
In fact, 
if the problem of deciding whether an element from $\mathcal{U}$ belongs to a relation $R$ is undecidable,
then the membership problem becomes also undecidable.

We can, however, provide some rough bounds by looking at the complexity of these problems for the case of register automata
(see \cite{DBLP:journals/jcss/LibkinTV15}).
Register automata are a special case of \sra,
where the only allowed relations are the binary relations of equality and inequality.
We assume that these relations may be evaluated in constant time.
For the problem of universality,
we know that it is undecidable for register automata.
We can thus infer that it remains so for \sra\ as well.
On the other hand,
the problem of non-emptiness is decidable but PSPACE-complete.
The same problem for \sra\ is thus PSPACE-complete.
Finally,
the problem of membership is NP-complete.
Therefore,
it is also at least NP-complete for \sra.
Note that membership is the most important problem for the purposes of CER/F,
since in CER/F we continuously try to check whether a string (a suffix of the input stream) belongs to the language of a pattern's automaton.
In general,
if we assume that the problem of membership in all relations $R$ is decidable in constant time,
then the complexity of the decision problems for \sra\ coincides with that for register automata.

If we focus our attention even further on windowed \sra,
as is the case in CER/F,
then we can estimate more precisely the complexity of processing a single event from a stream.
This is the most important operation for CER/F.
A windowed \sra\ can first be determinized (offline) to obtain a \dsra.
Assume that the resulting \dsra\ $A$ has $k$ registers and $c$ conditions/minterms.
We also assume that evaluating a condition requires constant time and that accessing a register also takes constant time.
In the worst case,
after a new element/event arrives,
we need to evaluate all of the conditions/minterms on the $c$ outgoing transitions of the current state to determine which one of them is triggered.
We may also need to access all of the $k$ registers in order to evaluate the conditions.
Therefore, 
the complexity of updating the state of the \dsra\ $A$ is $O(c+k)$
(assuming that each register is accessed only once and its contents are provided to every condition which references that register).

\section{Complex Event Forecasting with Markov Models}
\label{sec:markov}

We now show how we can use the framework of \srem\ and \sra\ to perform Complex Event Forecasting (CEF).
The main idea behind our forecasting method is the following:
Given a pattern $e$ in the form of a \srem, 
we first construct an automaton. 
In order to perform event forecasting, 
we translate the \sra\ to an equivalent deterministic \sra.
This \dsra\ can then be used to learn a probabilistic model, 
typically a Markov model, 
that encodes dependencies among the events in an input stream.
Note that deterministic \sra\ are important because they allow us,
as we will show,
to produce a stream of ``symbols'' from the initial stream of events.
By using deterministic \sra,
we can map each input event to a single symbol and then use this derived stream of symbols to learn a Markov model. 
With non-deterministic \sra\,
each element/event from the string/stream may trigger multiple transitions and thus such a mapping is not possible.
The probabilistic model is learned from a portion of the input stream which acts as a training dataset and it is then used to derive forecasts about the expected occurrence of the complex event encoded by the automaton.
After learning a model,
we need to estimate the so-called \emph{waiting-time distributions} for each state of our automaton. 
These distributions let us know the probability of reaching a final state from any other automaton state in $k$ events from now.
These distributions are then used to estimate forecasts,
which generally have the form of an interval within which a complex event has a high probability of occurring.

We discern three cases and present them in order of increasing complexity:
\begin{itemize}
	\item We have only unary conditions applied to the last event and an arbitrary (finite or infinite) universe. 
	In this case, 
	we do not need registers.
	\item We have $n$-ary conditions (with $n \geq 1$) and a finite universe. 
	In this case, 
	registers are helpful, 
	but may not be necessary.  
	If we have a register automaton $A$ and a finite universe $\mathcal{U}$,
	we can always create an automaton $A_{U}$ with states $A.Q \times \mathcal{U}$ and appropriate transitions so that $A_{U}$ is equivalent to $A$ but has no registers. 
	Its states can implicitly remember past elements.
	\item The most complex case is when we have $n$-ary conditions and an infinite universe, as is typically assumed in CER/F.
	Registers are necessary in this case. 
\end{itemize}

\subsection{\srem\ with unary conditions}
As a first step,
we assume that the given \srem\ contains only unary conditions.
The universe in this case may be either finite or infinite.
We have already presented how this case can be handled in our previous work \cite{DBLP:journals/vldb/AlevizosAP21}.
We will present here only a high-level overview of our method and then discuss how it can be adjusted in order to accommodate the other two cases.

Before discussing how a \dsra\ can be described by a Markov model,
we first discuss a useful result,
which bears on the importance of being able to use deterministic automata.
It can be shown that a \dsra\ always has an equivalent deterministic classical automaton,
through a simple isomorphic mapping,
retaining the exact same structure for the automaton and simply changing the conditions on the transitions with symbols
\cite{DBLP:books/daglib/0023547}.
This result is important for two reasons:
a) it allows us to use methods developed for classical automata without having to always prove that they are indeed applicable to symbolic automata as well, and
b) it will help us in simplifying our notation, 
since we can use the standard notation of symbols instead of predicates. 
This result implies that,
for every run 
$\varrho = [1,q_{1},v_{1}] \overset{\delta_{1}}{\rightarrow} [2,q_{2},v_{2}] \overset{\delta_{2}}{\rightarrow} \cdots \overset{\delta_{k}}{\rightarrow} [k+1,q_{k+1},v_{k+1}]$ 
followed by a \dsra\ $A_{s}$ by consuming a symbolic string (or stream of events) $S$,
the run that the equivalent classical automaton $A_{c}$ follows by consuming the induced string $S'$ is also
$\varrho' = [1,q_{1},v_{1}] \overset{\delta_{1}}{\rightarrow} [2,q_{2},v_{2}] \overset{\delta_{2}}{\rightarrow} \cdots \overset{\delta_{k}}{\rightarrow} [k+1,q_{k+1},v_{k+1}]$,
i.e., $A_{c}$ follows the same copied/renamed states and the same copied/relabeled transitions.
We can then use symbols and strings (lowercase letters to denote symbols), 
as in classical theories of automata,
bearing in mind that,
in our case,
each symbol always corresponds to a condition. 
Details may be found in \cite{DBLP:journals/vldb/AlevizosAP21}.

This equivalent deterministic classical automaton can be used to convert a string/stream of elements/events to a string/stream of symbols.
Since each element may trigger only a single transition,
the initial string of elements may be mapped to a string of symbols.
Each transition $\delta_{i}$ from the initial run $\varrho$ corresponds to a transition $\delta_{i}$ from run $\varrho'$.
Since each transition from $\varrho'$ corresponds to a single symbol,
we can map the whole stream of input events to a single string of symbols. 
This means that we can use techniques developed in the context of deterministic classical automata and apply them to our case.
One such class of techniques concerns the question of how we can build a probabilistic model that captures the statistical properties of the streams to be processed by an automaton. 
Such a model would allow us to make inferences about the automaton's expected behavior as it reads event streams.

We have proposed the use of a variable-order Markov model (VMM) \cite{DBLP:journals/jair/BegleiterEY04,DBLP:journals/ml/RonST96,DBLP:conf/nips/RonST93,DBLP:journals/tcom/ClearyW84,DBLP:journals/tit/WillemsST95}.
Compared to fixed-order Markov models,
VMMs allow us to increase their order $m$ (how many events they can remember) to higher values and thus capture longer-term dependencies,
which can lead to a better accuracy.

The idea behind VMMs is the following:
let $\Sigma$ denote an alphabet, 
$\sigma \in \Sigma$ a symbol from that alphabet and $s \in \Sigma^{m}$ a string of length $m$ of symbols from that alphabet.
The aim is to derive a predictor $\hat{P}$ from the training data such that the average log-loss on a test sequence $S_{1..k}$ is minimized.
The loss is
given by 
$l(\hat{P},S_{1..k}) = - \frac{1}{T} \sum_{i=1}^{k} log \hat{P}(t_{i} \mid t_{1} \cdots t_{i-1})$.
Minimizing the log-loss is equivalent to maximizing the likelihood $\hat{P}(S_{1..k})=\prod_{i=1}^{k}\hat{P}(t_{i} \mid t_{1} \dots t_{i-1})$.
The average log-loss may also be viewed as a measure of the average compression rate achieved on the test sequence \cite{DBLP:journals/jair/BegleiterEY04}.
The mean (or expected) log-loss ($-\boldsymbol{E}_{P}\{log \hat{P}(S_{1..k}) \}$) is minimized if the derived predictor $\hat{P}$ is indeed the actual distribution $P$ of the source emitting sequences.

For fixed-order Markov models, 
the predictor $\hat{P}$ is derived through the estimation of conditional distributions $\hat{P}(\sigma \mid s)$,
with $m$ constant and equal to the assumed order of the Markov model
($\sigma$ is a single symbol and $s$ a string of length $m$). 
VMMs, 
on the other hand, 
relax the assumption of $m$ being fixed.
The length of the ``context'' $s$ may vary,
up to a \emph{maximum} order $m$, 
according to the statistics of the training dataset.  
By looking deeper into the past only when it is statistically meaningful,
VMMs can capture both short- and long-term dependencies.

We use Prediction Suffix Trees (\pst),
as described in \cite{DBLP:journals/ml/RonST96,DBLP:conf/nips/RonST93},
as our VMM of choice.
Assuming that we have derived an initial predictor $\hat{P}$
(by scanning the training dataset and estimating various empirical conditional probabilities),
the learning algorithm in \cite{DBLP:journals/ml/RonST96} starts with a tree having only a single node,
corresponding to the empty string $\epsilon$.
Then, 
it decides whether to add a new context/node $s$ by checking whether it is ``meaningful enough'' to expand to $s$. 
This is achieved by checking whether there exists a significant difference between the conditional probability of a symbol $\sigma$ given $s$ and the same probability given the shorter context $\mathit{suffix}(s)$
($\mathit{suffix}(s)$ is the longest suffix of $s$ different than $s$).
A detailed description of how we use \pst\ to perform forecasting may be found in \cite{DBLP:journals/vldb/AlevizosAP21}.

Our goal is to use the \pst\ in order to to calculate the so-called waiting-time distribution for every state $q$ of the automaton $A$.
The waiting-time distribution is the distribution of the index $n$, 
given by the waiting-time variable
$W_{q}=inf\{n: Y_{0},Y_{1},...,Y_{n}\}$,
where $Y_{0} = q$,  
$Y_{i} \in A.Q \backslash A.Q_{f}$ for $i \neq n$ and $Y_{n} \in A.Q_{f}$.
Thus, waiting-time distributions give us the probability to reach a final state from a given state $q$ in $n$ transitions from now.

\begin{example}
We provide here the intuition through an example.
Figure \ref{fig:dfaab} shows an example of a deterministic automaton.
Note that we use symbols on the transitions,
which, as explained,
essentially correspond to conditions.
Figure \ref{fig:pstab} shows a \pst\ which could be constructed from the automaton of Figure \ref{fig:dfaab} and a given training dataset,
with $m=3$. 
This \pst\ is read as follows.
Consider its left-most node, $aa,(0.75,0.25)$.
This means that the probability of encountering an $a$ symbol,
given that the last two symbols are $aa$,
is $0.75$.
The probability of seeing $b$,
on the other hand,
is $0.25$.
The order of this node is 2.
It has not been further expanded to yet another deeper level,
because it was estimated that such an expansion would be statistically insignificant.
For example,
the probability $P(a \mid baa)$ might still be very close to $0.75$ (e.g., $0.747$).
If the same is true for $P(a \mid aaa)$,
then this means that the probability of seeing $a$ is not significantly affected by expanding to contexts of length 3.
If a similar statistical insignificance can be established for the probability of $b$,
then it does not make sense to expand the node,
since its children would not provide us with more information.

\begin{figure}
\centering
\begin{subfigure}[t]{0.4\textwidth}
	\includegraphics[width=0.99\textwidth]{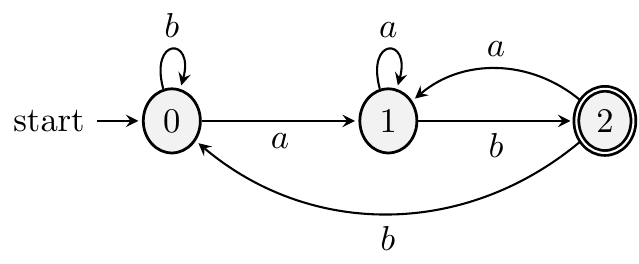}
	\caption{Example automaton $A$.}
	\label{fig:dfaab}
\end{subfigure}
\begin{subfigure}[t]{0.55\textwidth}
	\includegraphics[width=0.99\textwidth]{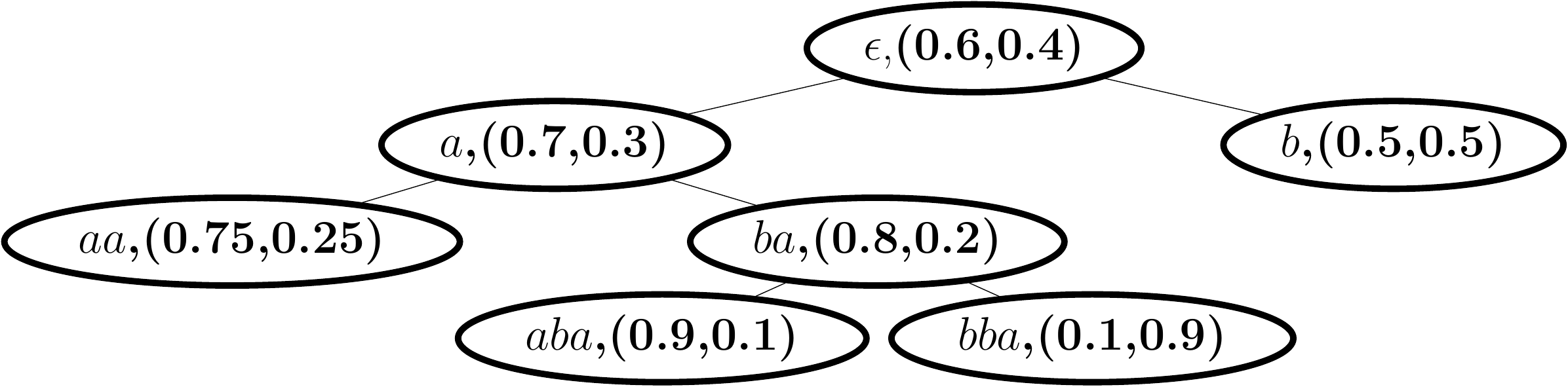}
	\caption{Example \pst\ $T$ for the automaton $A$.}
	\label{fig:pstab}
\end{subfigure}
\begin{subfigure}[t]{0.5\textwidth}
	\includegraphics[width=0.99\textwidth]{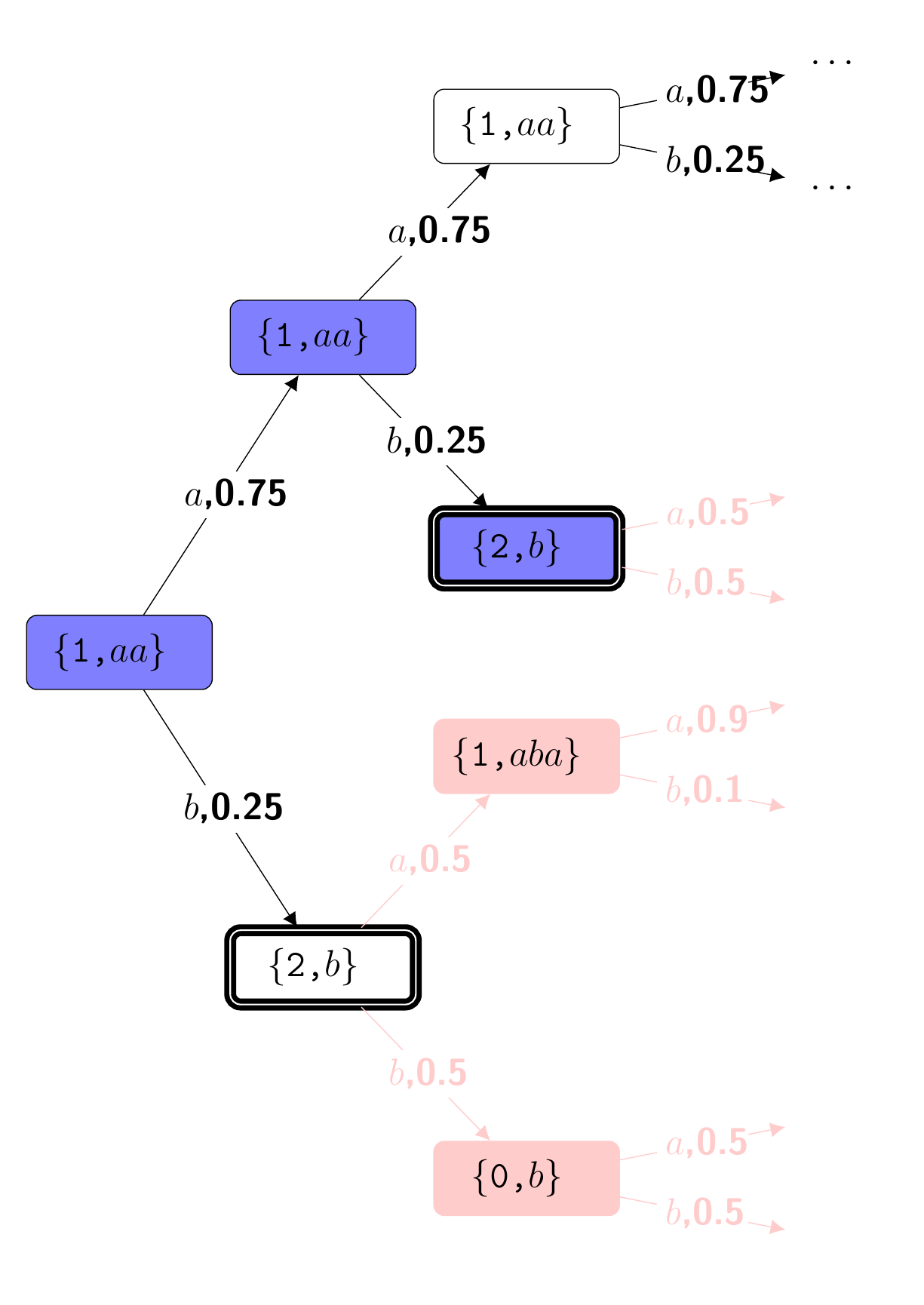}
	\caption{Future paths followed by automaton $A$ and \pst\ $T$ starting from state $1$ of $A$ and node $aa$ of $T$. Purple nodes correspond to the only path of length $k=2$ that leads to a final state. Pink nodes are pruned. Nodes with double borders correspond to final states of $A$.}
	\label{fig:future}
\end{subfigure}
\caption{Example of estimating waiting-time distributions.}
\label{fig:nomc}
\end{figure}

Figure \ref{fig:future} illustrates how we can estimate the probability for any future sequence of states of the \dsra\ $A$ of Figure \ref{fig:dfaab}, 
using the distributions of the \pst\ $T$ of Figure \ref{fig:pstab}
and thus how we can calculate the waiting-time distributions.
We first assume that,
as the system processes events from the input stream,
besides feeding them to $A$,
it also stores them in a buffer that holds the $m$ most recent events,
where $m$ is equal to the maximum order of the \pst\ $T$.
After updating the buffer with a new event,
the system traverses $T$ according to the contents of the buffer and arrives at a leaf $l$ of $T$.
Let us now assume that,
after consuming the last event, 
$A$ is in state $1$ in Figure \ref{fig:dfaab} and $T$ has reached its left-most node, $aa,(0.75,0.25)$ in Figure \ref{fig:pstab}.
This is shown as the left-most node also in Figure \ref{fig:future}.
Each node in this figure has two elements:
the first one is the state of $A$ and the second the node of $T$,
starting with $\{1,aa\}$ as our current ``configuration''.
Each node has two outgoing edges, one for $a$ and one for $b$,
indicating what might happen next and with what probability.
For example,
from the left-most node of Figure \ref{fig:future},
we know that, 
according to $T$, 
we might see $a$ with probability $0.75$ and $b$ with probability $0.25$.
If we do encounter $b$, 
then $A$ will move to state 2 and $T$ will reach leaf $b,(0.5,0.5)$.
This is shown in Figure \ref{fig:future} as the white node $\{2,b\}$.
This node has a double border to indicate that $A$ has reached a final state.

In a similar manner,
we can keep expanding this tree into the future
and use it to estimate the waiting-time distribution for its node $\{1,aa\}$,
i.e., the distribution for state $1$ of the automaton of Figure \ref{fig:dfaab} when we know that the last two read symbols are $aa$.
In order to estimate the probability of reaching a final state for the first time in $k$ transitions,
we first find all the paths of length $k$ which start from the original node 
and end in a final state without including another final state.
In our example of Figure \ref{fig:future},
if $k=1$,
then the path from $\{1,aa\}$ to $\{2,b\}$ is such a path and its probability is $0.25$.
Thus, $P(W_{\{1,aa\}}=1)=0.25$.
For $k=2$,
the path with the purple nodes leads to a final state after 2 transitions.
Its probability is $0.75*0.25=0.1875$,
i.e., the product of the probabilities on the path edges.
Thus, $P(W_{\{1,aa\}}=2)=0.1875$.
If there were more such alternative paths,
we would have to add their probabilities. $\diamond$
\end{example}

We can use the waiting-time distributions to produce various kinds of forecasts.
In the simplest case,
we can perform regression forecasting where we select the future point with the highest probability and return this point as a forecast.
Alternatively, 
we may also perform classification forecasting,
if our goal is to determine how likely it is that a CE will occur within the next $w$ input events.
We can sum the probabilities of the first $w$ points of a distribution
and if this sum exceeds a given threshold
we emit a ``positive'' forecast,
meaning that a CE is indeed expected to occur;
otherwise a ``negative'' forecast is emitted,
meaning that no CE is expected.

\subsection{\srem\ with n-ary conditions on a finite universe}

Thus far,
we have described how we can perform CEF when we only have unary conditions and a finite or infinite universe.
In this case,
we create the deterministic automaton and use it 
to generate a stream of symbols with which we can learn a \pst.
Note that,
when we only have unary conditions and thus no need for registers,
\sra\ are in essence equivalent to symbolic automata.
Symbolic automata are determinizable and closed under complement,
without requiring a window.
Thus,
the method described above for forecasting applies to every \sra\ with unary conditions,
regardless of whether a windowing operator is present.

The next case is when we have a finite universe and $n$-ary conditions,
where $n \geq 1$.
We can follow the same process as described above,
with one important difference.
Since the universe $\mathcal{U}$ is finite,
we can directly map each element of $\mathcal{U}$ to a symbol. 
Therefore,
the \pst\ $T$ can be constructed directly from the elements of $\mathcal{U}$.
In practice, however, 
if the cardinality of $\mathcal{U}$ is high and we have a windowed \srem,
it might be preferable to use the conditions of the \dsra\ $A$,
if their number is significantly lower.
A \pst\ with too many symbols can quickly become hard to manage as we increase its order $m$ and it is thus advisable to avoid increasing recklessly the size of its alphabet.
\begin{example}
For example,
assume that the values for humidity and temperature take only discrete values (low, high) and that we only have two sensors.
Then 
\begin{equation*}
\begin{aligned}
\mathcal{U} = 
\ & \ \{(T,1,\mathit{low}),  (T,1,\mathit{high}), (T,2,\mathit{low}), (T,2,\mathit{high}), \\
\ & \ (H,1,\mathit{low}),  (H,1,\mathit{high}), (H,2,\mathit{low}), (H,2,\mathit{high})\}
\end{aligned}
\end{equation*}
We can then map $(T,1,\mathit{low})$ to $a$, $(T,1,\mathit{high})$ to $b$, etc.
However, if we have an automaton that only checks whether a measurement comes from the same sensor as a measurement stored in a register,
like $\mathit{EqualId}(\sim,r_{1})$,
then we do not need 8 symbols.
We can use only $a$ and $b$,
with $a$ corresponding to $\mathit{EqualId}(\sim,r_{1})$ and $b$ to $\neg \mathit{EqualId}(\sim,r_{1})$.
The automaton is able to convert a stream/string $S$ constructed from $\mathcal{U}$ to a stream/string of $a$ and $b$ symbols,
which can then be used to construct a \pst.  $\diamond$
\end{example}

\subsection{\srem\ with n-ary conditions on a infinite universe}

Finally, 
the most general case is when we have an infinite universe and $n$-ary conditions. 
In this case,
the applicability of our method is necessarily restricted to windowed \srem\ and \sra. 
We can construct a deterministic \sra\ (which, by definition, has only a single run) from the windowed \srem\ and use this \dsra\ to generate a (single) sequence of symbols from a training dataset.
This sequence can then be used to learn a \pst.
Then, 
the \dsra\ and the \pst\ can be combined,
as already described above,
to estimate the waiting-time distributions and the forecasts.
\begin{example}
As an example,
Figure \ref{fig:isomorphic_det} shows the deterministic classical automaton that can be constructed for \srem\ \eqref{srem:t_seq_h_filter_eq_id} and the \dsra\ of Figure \ref{fig:det:output_free}
(note that Figure \ref{fig:det:output_free} and thus Figure \ref{fig:isomorphic_det} do not depict automata in their totality,
but only part of them).
Assume that we use the stream of Table \ref{table:example_stream} as a training dataset.
We feed it to the automaton of Figure \ref{fig:isomorphic_det}.
Upon reading the first input event,
$(T,1,22)$,
transition $a$ is triggered.
Since the automaton is deterministic,
this is the only triggered transition.
Thus, $(T,1,22)$ is mapped to $a$ and the automaton moves to state $\{q_{t},q_{1}\}$.
With the second event, 
$(T,1,24)$,
transition $f$ is triggered.
Thus, $(T,1,24)$ is mapped to $f$.
We repeat this process until the whole stream of events has been mapped to a stream of symbols,
$S = a, f, \cdots$.
$S$ may now be used to learn a \pst\ of a given maximum order.
This \pst,
along with the automaton of Figure \ref{fig:isomorphic_det},
can be used to estimate the waiting-time distributions,
as in the example of Figure \ref{fig:future}. $\diamond$

\begin{figure}[t]
\begin{centering}
\includegraphics[width=0.65\linewidth]{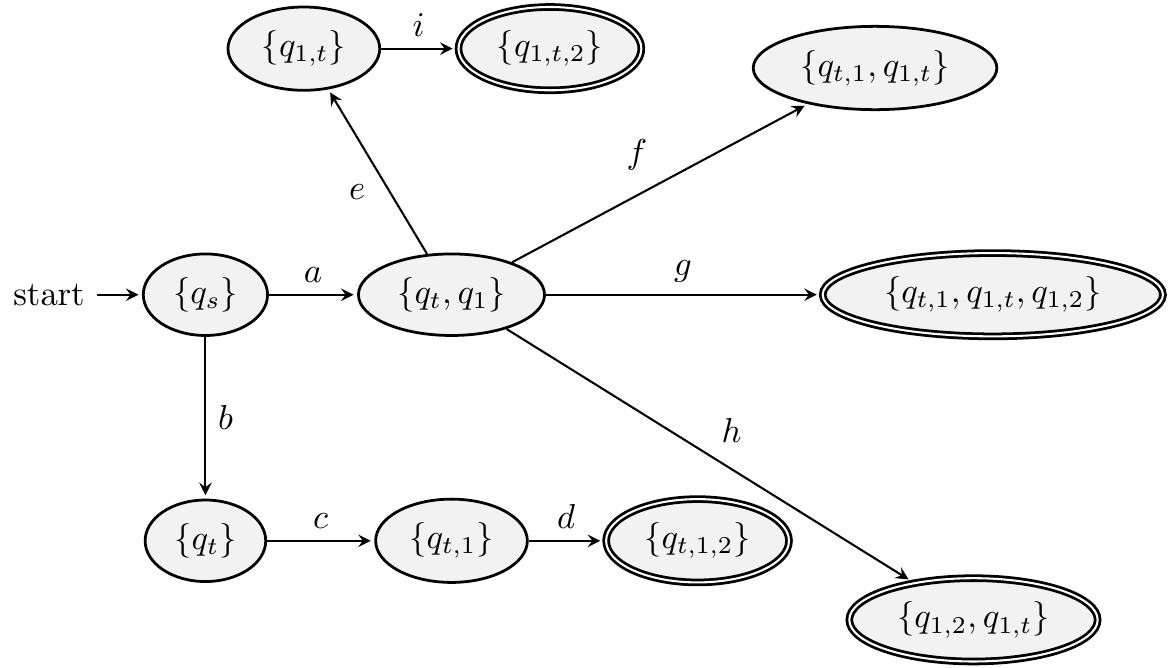}
\caption{Classical deterministic automaton corresponding to \srem\ \eqref{srem:t_seq_h_filter_eq_id}.}
\label{fig:isomorphic_det}
\end{centering}
\end{figure}

\end{example}


%

\section{Summary \& Future Work}
\label{sec:outro}

We presented an automaton model, \sra, that can act as a computational model for patterns with $n$-ary conditions ($n \geq 1$),
which are quintessential for practical CER/F applications. 
\sra\ thus extend the expressive power of symbolic automata.
They also extend the expressive power of register automata,
through the use of conditions that are more complex than (in)equality predicates.
\sra\ have nice compositional properties, 
without imposing severe restrictions on the use of operators.
Most of the standard operators in CER,
such as concatenation/sequence, union/disjunction, intersection/conjunction and Kleene-star/iteration,
may be used freely.
This is not the case though for complement/negation.
We showed that complement may be used and determinization is also possible,
if a window operator is used,
a very common feature in CER.
We briefly discussed the complexity of the problems of non-emptiness, membership and universality.
Although the problem of membership in general is at least NP-complete,
in cases where we can use windowed, deterministic \sra,
the cost of updating the state of such an automaton after reading a single element is linear in the number of registers and conditions.
We then described how prediction suffix trees may be used to provide a probabilistic description for the behavior of \sra.
Prediction suffix trees can look deep into the past and make accurate inferences about the future behavior of \sra,
thus allowing us to forecast when a complex event is expected to occur.

As a next step, 
we intend to implement the proposed framework for CER/F, 
extending our open-source engine,
Wayeb\footnote{Wayeb source code:\url{https://github.com/ElAlev/Wayeb}},
which currently supports only unary conditions.
We also intend to investigate in the future the possibility of providing more precise complexity results for \srem\ and \sra,
both from the point of view of formal languages and from the point of view of CER/F,
where some extra constraints may exist.
For example,
besides updating the state of an automaton after reading a new element,
we may also need to take into account the time required to report the input events contributing to the detection of a complex event
(for more details about this kind of complexity,
please consult \cite{DBLP:conf/icdt/GrezRU19,DBLP:conf/icdt/GrezR20},
where a model for evaluating efficiency in CER is presented,
along with complexity results for symbolic transducers without memory).
In the future,
we intend to investigate the complexity of decision problems for \sra\ from this point of view.


\begin{acknowledgements}
This work has received funding from the EU Horizon 2020 program INFORE under grant agreement No 825070 and from the European Horizon 2020 program ARIADNE under grant agreement No 871464.
\end{acknowledgements}

\bibliographystyle{fundam}
\bibliography{refs}

\section{Appendix}
\label{section:appendix}

\subsection{Proof of Theorem \ref{theorem:srem2sra}}
\label{sec:proof:srem2sra}
\ifdefined\FI
\begin{theorem}
For every \srem\ $e$ there exists an equivalent \sra\ $A$, i.e., a \sra\ such that $\mathcal{L}(e) = \mathcal{L}(A)$.
\end{theorem}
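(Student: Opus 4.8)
The plan is a Thompson-style inductive construction on the structure of $e$, with the single twist that registers are a global resource shared by all sub-automata. So the first step is a preprocessing pass: traverse $e$ and collect into one set $R = \{r_{1}, \dots, r_{k}\}$ every register variable that occurs either in a register selection of some condition or as a stored register in some sub-expression $\phi \downarrow r_{i}$; all the \sra\ built below are taken over this fixed $R$ (duplicates collapsed, as noted in the worked example). Then I would define, by induction on $e$, a \sra\ $A_{e}$ with a distinguished start state $q_{s}^{e}$ and a single final state $q_{f}^{e}$, maintaining the invariant
\begin{equation*}
(e, S, v) \vdash v' \quad\Longleftrightarrow\quad [1, q_{s}^{e}, v] \rightarrow^{*} [\,|S|+1, q_{f}^{e}, v'] \text{ in } A_{e}.
\end{equation*}

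For the base cases: $\emptyset$ is two states with no transitions; $\epsilon$ is $q_{s}^{e} \xrightarrow{\epsilon} q_{f}^{e}$; a bare condition $\phi$ is $q_{s}^{e}, \phi \downarrow \emptyset \rightarrow q_{f}^{e}$; and $\phi \downarrow r_{i}$ is $q_{s}^{e}, \phi \downarrow \{r_{i}\} \rightarrow q_{f}^{e}$. For the inductive cases I follow the classical pattern, using only $\epsilon$-transitions to glue sub-automata so that no register contents are disturbed by the plumbing: for $e_{1}+e_{2}$, a fresh start with $\epsilon$-edges to $q_{s}^{e_{1}}, q_{s}^{e_{2}}$ and a fresh final reached by $\epsilon$-edges from $q_{f}^{e_{1}}, q_{f}^{e_{2}}$; for $e_{1}\cdot e_{2}$, an $\epsilon$-edge $q_{f}^{e_{1}} \xrightarrow{\epsilon} q_{s}^{e_{2}}$; for $e^{*}$, a fresh start and final with $\epsilon$-edges new-start$\to q_{s}^{e}$, $q_{f}^{e}\to$new-final, $q_{f}^{e}\to q_{s}^{e}$, and new-start$\to$new-final. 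Each clause of the invariant then follows by unwinding the corresponding clause of Definition~\ref{definition:srem_semantics} and the successor relation of Definition~\ref{definition:configuration}; the case $\phi \neq \epsilon$ matches because such a condition forces $S=u$ to be a single character and the transition consumes exactly one character, and the $\phi \downarrow r_{i}$ case matches because the successor rule with $W \neq \emptyset$ performs precisely the update $v' = v[r_{i} \leftarrow u]$. Finally, $\mathcal{L}(e) = \mathcal{L}(A_{e})$ is the specialization of the invariant to $v = \sharp$ together with the definitions of the two languages (Definitions~\ref{definition:language_srem} and~\ref{definition:sra_language}), noting $q_{f}^{e}$ is the unique final state.

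The step I expect to be the main obstacle is the correctness argument for $e^{*}$: one must match the recursive "split $S = S_{1} \cdot S_{2}$" clause for $(e^{*}, S, v) \vdash v'$ against arbitrarily many traversals of the loop in $A_{e}$, which requires an auxiliary induction on the number of iterations (equivalently on $|S|$) and care that the back-edge $q_{f}^{e} \xrightarrow{\epsilon} q_{s}^{e}$ resets neither the index nor the valuation. A secondary point worth stating explicitly, but not genuinely hard, is that sharing $R$ across the two branches of a union (or across iterations) is harmless: the \srem\ semantics already thread a single valuation through the chosen branch, and a register mentioned only in the unchosen branch is simply never constrained, so language equivalence is unaffected. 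The detailed construction and the full induction are deferred to Appendix~\ref{sec:proof:srem2sra}.
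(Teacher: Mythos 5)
Your proposal is correct and follows essentially the same route as the paper's proof: a Thompson-style induction with a preprocessing pass that fixes a global register set (the paper's $\mathit{reg}(e)$ yielding $R_{top}$), the same base-case gadgets, the same $\epsilon$-glued constructions for union, concatenation and Kleene-star, and the same valuation-parameterized invariant (the paper phrases it as $\mathcal{L}(e,v,v') = \mathcal{L}(A,v,v')$). Your remarks about the Kleene-star case needing an auxiliary induction on the number of iterations and about shared registers being harmless match the paper's treatment.
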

\else
\begin{theorem*}
For every \srem\ $e$ there exists an equivalent \sra\ $A$, i.e., a \sra\ such that $\mathcal{L}(e) = \mathcal{L}(A)$.
\end{theorem*}
\fi

\begin{proof}

For a \srem\ $e$ and valuations $v$, $v'$,
let $\mathcal{L}(e,v,v')$ denote all strings $S$ such that $(e,S,v) \vdash v'$.
Similarly, for a \sra\ $A$,
let $\mathcal{L}(A,v,v')$ denote all the strings $S=t_{1}, \cdots, t_{n}$ such that there exists an accepting run $[1,q_{1},v_{1}] \overset{\delta_{1}}{\rightarrow} [2,q_{2},v_{2}] \overset{\delta_{2}}{\rightarrow} \cdots \overset{\delta_{n}}{\rightarrow} [n,q_{n+1},v_{n+1}]$,
where $v_{1} = v$ and $v_{n+1} = v'$. 
For every possible \srem\ $e$,
we will construct a corresponding \sra\ $A$ and then prove either that $\mathcal{L}(e) = \mathcal{L}(A)$ or that $\mathcal{L}(e,v,v') = \mathcal{L}(A,v,v')$.
The latter implies that $\mathcal{L}(e,\sharp,v'') = \mathcal{L}(A,\sharp,v'')$ for some valuation $v''$ or equivalently $\mathcal{L}(e) = \mathcal{L}(A)$,
which is our goal.
The proof is inductive.
We prove directly the base cases for the simple expressions $e := \emptyset$,
$e := \epsilon$, $e := \phi = R(x_{1}, \cdots, x_{n})$ and $e := \phi = R(x_{1}, \cdots, x_{n}) \downarrow w$.
For the complex expression $e := e_{1} \cdot e_{2}$, $e := e_{1} + e_{2}$ and $e' = e^{*}$,
we use as an inductive hypothesis that our target result hods for the sub-expressions and then prove that it also holds for the top expression.
For example, 
for $e := e_{1} \cdot e_{2}$,
we assume that $\mathcal{L}(e_{1},v,v'') = \mathcal{L}(A_{1},v,v'')$ and that $\mathcal{L}(e_{2},v'',v') = \mathcal{L}(A_{2},v'',v')$.

We must be careful, however, with the valuations.
If, for example, $v$ applies to the \sra\ $A$,
does it also apply to the sub-automaton $A_{1}$,
if $A$ and $A_{1}$ have different registers?
We can avoid this problem and make all valuations compatible 
(i.e., having the same domain as functions)
by fixing the registers for all expressions and sub-expressions.
We can estimate the registers that we need for a top expression $e$ by scanning its conditions and write operations.
Let $\mathit{reg}(e)$ be a function applied to a \srem\ $e$.
We define it as follows:
\begin{equation}
\mathit{reg}(e) =  
  \begin{cases}
    \emptyset & \quad \text{if } e = \emptyset   \\
    \emptyset & \quad \text{if } e = \epsilon \\
    \{ x_{1} \} \cup \cdots \cup \{ x_{n} \} \cup \{ w \} & \text{if } e = R(x_{1},\cdots, x_{n}) \downarrow w \\
    \mathit{reg}(e_{1}) \cup \mathit{reg}(e_{2}) & \text{if } e = e_{1} \cdot e_{2} \\
    \mathit{reg}(e_{1}) \cup \mathit{reg}(e_{2}) & \text{if } e = e_{1} + e_{2} \\
    \mathit{reg}(e_{1}) & \text{if } e = (e_{1})^{*} 
  \end{cases}
\end{equation}
For our proofs that follow,
we first apply this function to the top expression $e$ to obtain $R_{top} = \mathit{reg}(e)$ and we use $R_{top}$ as the set of registers for all automata and sub-automata.
All valuations can thus be compared without any difficulties,
since they will have the same domain $R_{top}$.

\begin{figure}
\centering
\begin{subfigure}[t]{0.45\textwidth}
	\includegraphics[width=0.95\textwidth]{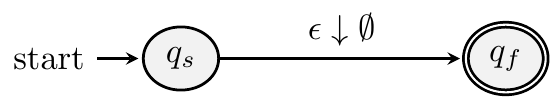}
	\caption{Base case of a single $\epsilon$ condition, $e := \epsilon$.}
	\label{fig:srem2sra:epsilon}
\end{subfigure}
\begin{subfigure}[t]{0.45\textwidth}
	\includegraphics[width=0.95\textwidth]{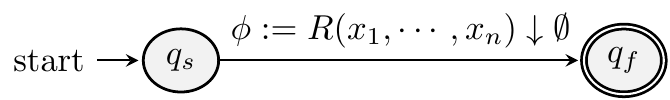}
	\caption{Base case of a single condition, $e := \phi = R(x_{1}, \cdots, x_{n})$.}
	\label{fig:srem2sra:phi}
\end{subfigure}\\
\begin{subfigure}[t]{0.55\textwidth}
	\includegraphics[width=0.95\textwidth]{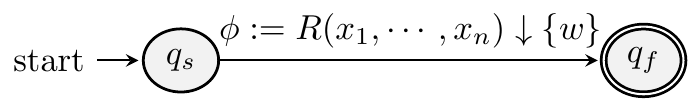}
	\caption{Base case of a single condition with a write register, $e := \phi \downarrow W = R(x_{1}, \cdots, x_{n}) \downarrow \{ w \}$.}
	\label{fig:srem2sra:phiW}
\end{subfigure}\\
\begin{subfigure}[t]{0.65\textwidth}
	\includegraphics[width=0.95\textwidth]{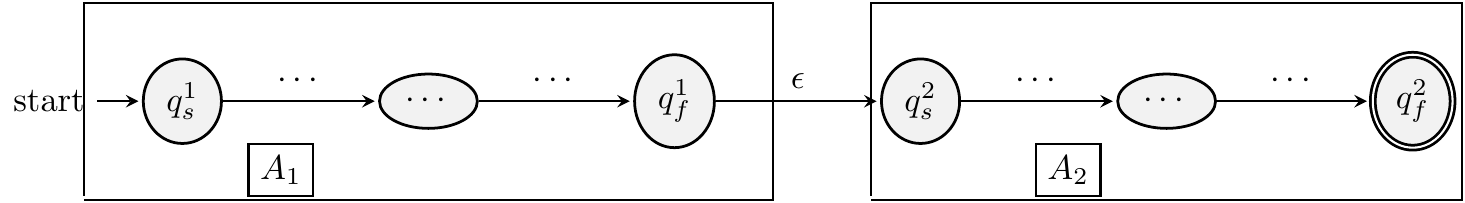}
	\caption{Concatenation. $e = e_{1} \cdot e_{2}$.}
	\label{fig:srem2sra:seq}
\end{subfigure}\\
\begin{subfigure}[t]{0.65\textwidth}
	\includegraphics[width=0.95\textwidth]{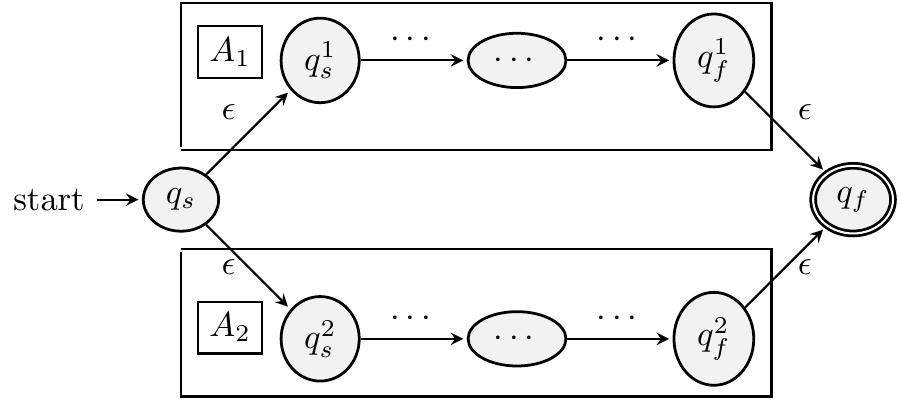}
	\caption{OR. $e = e_{1} + e_{2}$.}
	\label{fig:srem2sra:or}
\end{subfigure}\\
\begin{subfigure}[t]{0.65\textwidth}
	\includegraphics[width=0.95\textwidth]{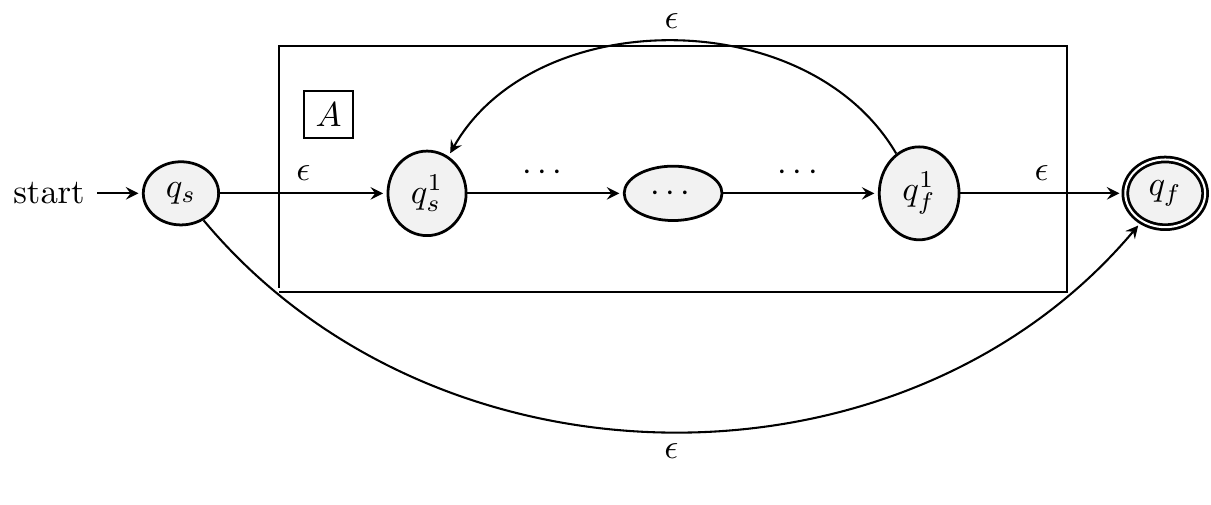}
	\caption{Iteration. $e^{'} = e^{*}$.}
	\label{fig:srem2sra:iter}
\end{subfigure}
\caption{The cases for constructing a \sra\ from a \srem.}
\label{fig:srem2sra}
\end{figure}

\textbf{Assume $e := \emptyset$.}
In this case we know that $\mathcal{L}(e,v,v')=\emptyset$ for any valuations $v$ and $v'$.
Thus $\mathcal{L}(e)=\emptyset$.  
We can then construct a \sra\ $A = (Q,q_{s},Q_{f},R,\Delta)$
where $Q = \{q_{s}\}$, $Q_{f}=\emptyset$, $R=R_{top}$ and $\Delta=\emptyset$.
It is obvious that $A$ does not accept any strings.
Thus $\mathcal{L}(A)=\emptyset$.

\textbf{Assume $e := \epsilon$.}
We know that $\mathcal{L}(e)=\{ \epsilon \}$.
We can then construct a \sra\ $A = (Q,q_{s},Q_{f},R,\Delta)$
where $Q = \{q_{s}, q_{f}\}$, $Q_{f}=\{q_{f}\}$, $R=R_{top}$, $\Delta=\{ \delta \}$ and $\delta = q_{s},\epsilon \downarrow \emptyset \rightarrow q_{f}$.
See Figure \ref{fig:srem2sra:epsilon}.
It is obvious that $A$ accepts only the empty string since there is only one path that leads to the final state and this path goes through an $\epsilon$ transition.
Thus $\mathcal{L}(A)=\{ \epsilon \}$.

\textbf{Assume $e := \phi = R(x_{1}, \cdots, x_{n})$, where $\phi$ is a condition and all $x_{i}$ belong to a set of register variables $\{r_{1},\cdots,r_{k}\}$.}
We construct the following \sra\ $A=(Q,q_{s},Q_{f},R,\Delta)$,
where $Q = \{q_{s}, q_{f}\}$, $Q_{f}=\{q_{f}\}$, $R=R_{top}$, $\Delta=\{ \delta \}$ and $\delta = q_{s},\phi \downarrow \emptyset \rightarrow q_{f}$.
See Figure \ref{fig:srem2sra:phi}.

We first prove $S \in \mathcal{L}(e,v,v') \Rightarrow S \in \mathcal{L}(A,v,v')$ for a string $S$.
It is obvious that $S$ must be composed of a single element, i.e., $S=t_{1}$.
Since $S=t_{1}$ is accepted by $e$ starting from the valuation $v$,
this means that $(\phi,S,v) \vdash v'$, with $v'=v$,
according to the second case of Definition \ref{definition:srem_semantics}.
Thus $(t_{1},v) \models \phi$.
This then implies that the second case in the definition of a successor configuration 
(see Definition \ref{definition:configuration}) holds.
As a result,
$A$, upon reading $S$, moves to its final state $q_{f}$ and accepts $S$.
This move does not change the valuation, thus $v'=v$.
We have thus proven that $S \in \mathcal{L}(A,v,v')$.

The inverse direction,
$S \in \mathcal{L}(A,v,v') \Rightarrow S \in \mathcal{L}(e,v,v')$,
can be proven in a similar manner.

\textbf{Assume $e := \phi = R(x_{1}, \cdots, x_{n}) \downarrow w$, where $\phi$ is a condition, all $x_{i}$ belong to a set of register variables $\{r_{1},\cdots,r_{k}\}$ and $w$ a write register (not necessarily one of $r_{i}$).}
We construct the following \sra\ $A=(Q,q_{s},Q_{f},R,\Delta)$,
where $Q = \{q_{s}, q_{f}\}$, $Q_{f}=\{q_{f}\}$, $R=R_{top}$, $\Delta=\{ \delta \}$ and $\delta = q_{s},\phi \downarrow \{ w \} \rightarrow q_{f}$.
See Figure \ref{fig:srem2sra:phiW}.

The proof is essentially the same as that for the previous case.
The only difference is that we need to use the third case from the definition of successor configurations (Definition \ref{definition:configuration}).
This means that $v' = v[w \leftarrow t_{1}]$.
If $w \in R$, 
then $t_{1}$ is stored in $w$ and $v'(w) = t_{1}$.
Otherwise,
$v'$ remains the same as $v$.

\textbf{Assume $e := e_{1} \cdot e_{2}$, where $e_{1}$ and $e_{2}$ are \srem.}
We first construct $A_{1}$ and $A_{2}$, 
the \sra\ for $e_{1}$ and $e_{2}$ respectively.
We construct the following \sra\ $A=(Q,q_{s},Q_{f},R,\Delta)$,
where $Q = A_{1}.Q \cup A_{2}.Q$, $q_{s}=A_{1}.q_{s}$, $Q_{f}=\{A_{2}.q_{f}\}$, $R=R_{top}$, $\Delta= A_{1}.\Delta \cup A_{2}.\Delta \cup \{ \delta \}$ and $\delta = A_{1}.q_{f},\epsilon \rightarrow A_{2}.q_{s}$.
See Figure \ref{fig:srem2sra:seq}.
We thus simply connect $A_{1}$ and $A_{2}$ with an $\epsilon$ transition.
Notice that $A_{1}.R$ and $A_{2}.R$ may overlap.
Their union retains only one copy of each register,
if a register appears in both of them.

We first prove $S \in \mathcal{L}(e,v,v') \Rightarrow S \in \mathcal{L}(A,v,v')$ for a string $S$.
Since $S \in \mathcal{L}(e,v,v')$,
$S$ can be broken into two sub-strings $S_{1}$ and $S_{2}$ such that
$S= S_{1} \cdot S_{2}$, $(e_{1},S_{1},v) \vdash v''$ and $(e_{2},S_{2},v'') \vdash v'$.
This is equivalent to $S_{1} \in \mathcal{L}(e_{1},v,v'')$ and $S_{2} \in \mathcal{L}(e_{2},v'',v')$.
From the induction hypothesis 
(i.e., that what we want to prove holds for the sub-expressions $e_{1}$, $e_{2}$ and their automata $A_{1}$, $A_{2}$)
it follows that $S_{1} \in \mathcal{L}(A_{1},v,v'')$ and $S_{2} \in \mathcal{L}(A_{2},v'',v')$.
Notice that if $A_{1}$ and $A_{2}$ have different sets of registers,
we can always expand $A_{1}.R$ and $A_{2}.R$ to their union,
without affecting in any way the behavior of the automata.
Now, let $l_{1} = \lvert S_{1} \rvert$ and $l_{2} = \lvert S_{2} \rvert$.
From $S_{1} \in \mathcal{L}(A_{1},v,v'')$ it follows that there exists an accepting run $\varrho_{1}$ of $A_{1}$ over $S_{1}$ such that
$\varrho_{1}=[1,A_{1}.q_{s},v] \rightarrow \cdots \rightarrow [l_{1}+1,A_{1}.q_{f},v'']$.
Similarly,
from $S_{2} \in \mathcal{L}(A_{2},v'',v')$ it follows that there exists an accepting run $\varrho_{2}$ of $A_{2}$ over $S_{2}$ such that
$\varrho_{2}=[1,A_{2}.q_{s},v''] \rightarrow \cdots \rightarrow [l_{2}+1,A_{2}.q_{f},v']$.
Let's construct a run by connecting $\varrho_{1}$ and $\varrho_{2}$ with an $\epsilon$ transition:
$\varrho = [1,A_{1}.q_{s},v] \rightarrow \cdots \rightarrow [l_{1}+1,A_{1}.q_{f},v''] \overset{A_{1}.q_{f},\epsilon \rightarrow A_{2}.q_{s}}{\rightarrow} [l_{1}+2,A_{2}.q_{s},v''] \rightarrow \cdots \rightarrow [l_{1}+l_{2}+1,A_{2}.q_{f},v']$.
We can see that this is indeed an accepting run of $A$.
Thus $S \in \mathcal{L}(A,v,v')$.

The inverse direction,
$S \in \mathcal{L}(A,v,v') \Rightarrow S \in \mathcal{L}(e,v,v')$,
can be proven in a similar manner.
Since $S \in \mathcal{L}(A,v,v')$,
there exists an accepting run $\varrho$ of $A$ over $S$.
By the construction of $A$, however,
this run must be in the form $\varrho = \varrho_{1} \overset{\epsilon}{\rightarrow} \varrho_{2}$ with $\varrho_{1}$ being an accepting run of $A_{1}$ over a string $S_{1}$ and $\varrho_{2}$ an accepting run of $A_{2}$ over $S_{2}$, 
where $S = S_{1} \cdot S_{2}$.
We then use the induction hypothesis to prove that
$S_{1} \in \mathcal{L}(e_{1},v,v'')$ and $S_{2} \in \mathcal{L}(e_{2},v'',v')$
and finally that $S \in \mathcal{L}(e,v,v')$.

\textbf{Assume $e := e_{1} + e_{2}$, where $e_{1}$ and $e_{2}$ are \srem.}
We first construct $A_{1}$ and $A_{2}$, 
the \sra\ for $e_{1}$ and $e_{2}$ respectively.
We construct the following \sra\ $A=(Q,q_{s},Q_{f},R,\Delta)$,
where $Q = A_{1}.Q \cup A_{2}.Q \cup \{ q_{s}, q_{f} \}$, $Q_{f}=\{ q_{f} \}$, $R=R_{top}$, $\Delta= A_{1}.\Delta \cup A_{2}.\Delta \cup \{ \delta_{s,1}, \delta_{s,2}, \delta_{1,f}, \delta_{2,f} \}$ and $\delta_{s,1} = q_{s},\epsilon \rightarrow A_{1}.q_{s}$, $\delta_{s,2} = q_{s},\epsilon \rightarrow A_{2}.q_{s}$, $\delta_{1,f} = A_{1}.q_{f},\epsilon \rightarrow q_{f}$, $\delta_{2,f} = A_{2}.q_{f},\epsilon \rightarrow q_{f}$.
See Figure \ref{fig:srem2sra:or}.
We thus create a new state,
$q_{s}$,
acting as the start state and connect it through $\epsilon$ transitions to the start states of $A_{1}$ and $A_{2}$.
We also create a new final state and connect to it the final states of $A_{1}$ and $A_{2}$.
Again, $A_{1}.R$ and $A_{2}.R$ may overlap.
Their union retains only one copy of each register,
if a register appears in both of them.

It is easy to prove that $S \in \mathcal{L}(e,v,v') \Rightarrow S \in \mathcal{L}(A,v,v')$ for a string $S$.
If $(e_{1},S,v) \vdash v'$,
this implies that $e_{1}$ is accepted by $A_{1}$.
It is thus also accepted by $A$.
Similarly if $(e_{2},S,v) \vdash v'$ for $A_{2}$.
The inverse direction has a similar proof.

\textbf{Assume $e' := e^{*}$, where $e$ is a \srem.}
We construct a new \sra\ $A'$ as shown in Figure \ref{fig:srem2sra:iter}.
We first construct the \sra\ for $e$, $A$.
We create a new final and a new start state.
We connect the new start state to the old start and to the new final.
We connect the old final to the new final and the old start.
$R$ is again $R_{top}$.

We first prove that $S \in \mathcal{L}(e,v,v') \Rightarrow S \in \mathcal{L}(A,v,v')$ for a string $S$.
Since $S \in \mathcal{L}(e,v,v')$,
$S = S_{1} \cdot S'$ such that $(e,S_{1},v) \vdash v''$ and $(e^{*},S',v'') \vdash v'$.
Equivalently,
this implies that
$(e,S_{1},v) \vdash v_{1}$ and
$(e,S_{2},v_{1}) \vdash v_{2}$ and
$(e,S_{3},v_{2}) \vdash v_{3}$ etc
until $(e,S_{n},v_{n-1}) \vdash v_{n}$,
where $v_{n} = v'$.
We can then construct the run $\varrho = \varrho_{1} \overset{\epsilon}{\rightarrow} \varrho_{2} \overset{\epsilon}{\rightarrow} \cdots \overset{\epsilon}{\rightarrow} \varrho_{n}$.
It is easy to see that $\varrho$ is an accepting run of $A'$.
Similarly for the inverse direction.
\end{proof}

\subsection{Proof of Lemma \ref{lemma:epsilon}}
\label{sec:proof:epsilon}
\ifdefined\FI
\begin{lemma}
For every \sra\ $A_{\epsilon} $with $\epsilon$ transitions there exists an equivalent  \sra\ $A_{\notin}$ without $\epsilon$ transitions, i.e., a \sra\ such that $\mathcal{L}(A_{\epsilon}) = \mathcal{L}(A_{\notin})$.
\end{lemma}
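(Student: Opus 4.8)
The plan is to adapt the classical $\epsilon$-elimination construction, exploiting the fact that, by Definition \ref{definition:configuration}, an $\epsilon$-transition advances neither the stream index $j$ nor the register valuation $v$. First I would define, for each state $q \in A_{\epsilon}.Q$, its \emph{$\epsilon$-closure} $E(q)$ as the set of all states reachable from $q$ by following only $\epsilon$-transitions (with $q \in E(q)$). The observation to record here is that $q' \in E(q)$ implies that, for \emph{every} index $j$ and valuation $v$, there is a run segment $[j,q,v] \rightarrow \cdots \rightarrow [j,q',v]$ of $A_{\epsilon}$; that is, $\epsilon$-paths are transparent to both the position in the string and the register contents.

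Next I would build $A_{\notin} = (Q, q_{s}, Q_{f}', R, \Delta')$ over the same state set, start state and registers as $A_{\epsilon}$, where $Q_{f}' = \{ q \in Q : E(q) \cap A_{\epsilon}.Q_{f} \neq \emptyset \}$, and where $\Delta'$ contains the transition $q, \phi \downarrow W \rightarrow q'$ precisely when there is a state $q'' \in E(q)$ and a non-$\epsilon$ transition $q'', \phi \downarrow W \rightarrow q' \in A_{\epsilon}.\Delta$. By construction $A_{\notin}$ has no $\epsilon$-transitions and no new registers, so all valuations involved have the same domain and no compatibility issues arise. Pushing the $\epsilon$-closure in front of every ``real'' transition, together with enlarging the final states via $E$, is exactly what absorbs a trailing block of $\epsilon$-transitions at the end of an accepting run, and it handles the empty string (accepted by $A_{\notin}$ iff $q_{s} \in Q_{f}'$ iff $A_{\epsilon}$ has an $\epsilon$-path from $q_{s}$ to a final state).

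For $\mathcal{L}(A_{\epsilon}) \subseteq \mathcal{L}(A_{\notin})$ I would take an accepting run of $A_{\epsilon}$ over $S = t_{1} \cdots t_{n}$ and split it into maximal blocks of consecutive $\epsilon$-transitions separated by the $n$ non-$\epsilon$ transitions that consume $t_{1}, \dots, t_{n}$ (these are the only ones that change $j$ or $v$). Each such non-$\epsilon$ transition $q'', \phi \downarrow W \rightarrow q'$, preceded by an $\epsilon$-block from some $q$ to $q''$, yields by definition of $\Delta'$ a single transition $q, \phi \downarrow W \rightarrow q'$ of $A_{\notin}$; since the $\epsilon$-block leaves index and valuation unchanged, the successor configurations match. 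Stitching these $n$ transitions gives a run of $A_{\notin}$ over $S$ ending in the state $\hat{q}$ in which the last $\epsilon$-block of the original run starts, and $\hat{q} \in Q_{f}'$ because that block reaches an accepting state of $A_{\epsilon}$; hence the run is accepting (the case $n=0$ being the empty-string case above). The converse $\mathcal{L}(A_{\notin}) \subseteq \mathcal{L}(A_{\epsilon})$ is symmetric: every transition $q, \phi \downarrow W \rightarrow q'$ of $A_{\notin}$ used in an accepting run is witnessed by an $\epsilon$-path $q \rightsquigarrow q''$ and a transition $q'', \phi \downarrow W \rightarrow q'$ of $A_{\epsilon}$, and splicing in these $\epsilon$-paths — and finally an $\epsilon$-path from the last state (which lies in $Q_{f}'$) to an actual final state of $A_{\epsilon}$ — rebuilds an accepting run of $A_{\epsilon}$ over the same string with the same intermediate valuations. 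The only delicate point is this bookkeeping of configurations across $\epsilon$-blocks, and it goes through cleanly exactly because Definition \ref{definition:configuration} forces $j'=j$ and $v'=v$ on $\epsilon$-steps.
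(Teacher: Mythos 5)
Your proof is correct, and the core argument is the same as the paper's: split an accepting run into maximal $\epsilon$-blocks separated by the $n$ element-consuming transitions, and observe that $\epsilon$-steps leave both the stream index and the valuation untouched, so the blocks can be collapsed without disturbing any register contents. The only difference is in where the closure is applied. You keep the original state set and pre-compose the closure at the \emph{source}: $q,\phi \downarrow W \rightarrow q'$ is added whenever some $q'' \in E(q)$ has a non-$\epsilon$ transition $q'',\phi \downarrow W \rightarrow q'$, with final states enlarged to absorb a trailing $\epsilon$-block. The paper's Algorithm instead takes the reachable enclosures themselves as the states of $A_{\notin}$, with start state $\mathit{Enclose}(q_{s})$, and post-composes the closure at the \emph{target}: each non-$\epsilon$ transition out of a member of an enclosure-state leads to the enclosure of its target, and a state is final iff it contains an original final state. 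Both are standard, equivalent variants of $\epsilon$-elimination, both produce at most $\lvert Q \rvert$ states, and both correctness arguments hinge on exactly the invariant you isolate (that $j'=j$ and $v'=v$ on $\epsilon$-steps, so valuations agree block by block). Nothing is missing from your argument.
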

\else
\begin{lemma*}
For every \sra\ $A_{\epsilon} $with $\epsilon$ transitions there exists an equivalent  \sra\ $A_{\notin}$ without $\epsilon$ transitions, i.e., a \sra\ such that $\mathcal{L}(A_{\epsilon}) = \mathcal{L}(A_{\notin})$.
\end{lemma*}
\fi

\begin{proof}
\begin{algorithm}
\KwIn{\sra\ $A_{\epsilon}$, possibly with $\epsilon$ transitions}
\KwOut{\sra\ $A_{\notin}$ without $\epsilon$-transitions}
$q_{\notin,s} \leftarrow Enclose(A_{\epsilon}.q_{s})$; $Q_{\notin} \leftarrow \{ q_{\notin,s} \}$; $\Delta_{\notin} \leftarrow \emptyset$\;
\eIf{$\exists q \in q_{\notin,s}: q \in A_{\epsilon}.Q_{f}$}{
	$Q_{\notin,f} \leftarrow \{ q_{\notin,s} \}$\;
}
{
	$Q_{\notin,f} \leftarrow \emptyset$\;
}
$\mathit{frontier} \leftarrow \{q_{\notin,s}\}$\;
\ForEach{$q_{\notin} \in \mathit{frontier}$}{
	\ForEach{$q_{\epsilon} \in q_{\notin}$}{
		\ForEach{$\delta_{\epsilon} \in A_{\epsilon}.\Delta: \delta_{\epsilon}.\mathit{source}=q_{\epsilon} \wedge \delta_{\epsilon} \neq \epsilon$}{
			$p_{\epsilon} \leftarrow \delta_{\epsilon}.target$\;
			$p_{\notin} \leftarrow Enlose(p_{\epsilon})$\; \label{line:epsilon:enclose_target}
			$Q_{\notin} \leftarrow Q_{\notin} \cup \{ p_{\notin} \}$\;
			\If{$\exists q \in p_{\notin}: q \in A_{\epsilon}.Q_{f}$}{
				$Q_{\notin,f} \leftarrow Q_{\notin,f} \cup \{ p_{\notin} \}$\;
			}
			$\delta_{\notin} \leftarrow \mathit{CreateNewTransition}(q_{\notin},\delta_{\epsilon}.\phi \downarrow \delta_{\epsilon}.W \rightarrow p_{\notin})$\;
			$\Delta_{\notin} \leftarrow \Delta_{\notin} \cup \{\delta_{\notin}\}$\;
			$\mathit{frontier} \leftarrow \mathit{frontier} \cup \{p_{\notin}\}$\;
		}
	}
	$\mathit{frontier} \leftarrow \mathit{frontier} \setminus \{q_{\notin}\}$\;
}
$A_{\notin} \leftarrow (Q_{\notin}, q_{\notin,s}, Q_{\notin,f}, A_{\epsilon}.R,\Delta_{\notin})$\;
$\mathtt{return}\ A_{\notin}$;
\caption{Eliminating $\epsilon$-transitions ($\mathit{EliminateEpsilon}$).}
\label{algorithm:epsilon}
\end{algorithm}
We first give the algorithm.
See Algorithm \ref{algorithm:epsilon}.
Note that in this algorithm,
the function $\mathit{Enclose}$ is the usual function for $\epsilon$-enclosure in standard automata theory and we will not repeat it here (see \cite{DBLP:books/daglib/0016921}).
Suffice it to say that,
when applied to a state $q$ (or set of states $\{q_{i}\}$),
it returns all the states we can reach from $q$ (or all $q_{i}$)
by following only $\epsilon$-transitions.
It is also worth noting that the algorithm does not create the power-set of states and then connects them through transitions.
It creates those subsets it needs by ``forward-looking'' for what is necessary,
but it is equivalent to the power-set construction algorithm.
We will prove that $S \in \mathcal{L}(A_{\epsilon}) \Leftrightarrow S \in \mathcal{L}(A_{\notin})$ for a string $S$.

We first prove the direction $S \in \mathcal{L}(A_{\epsilon}) \Rightarrow S \in \mathcal{L}(A_{\notin})$.
The other direction can be proven similarly.
Let $\varrho_{\epsilon}$ denote an accepting run of $A_{\epsilon}$ over $S$,
where $k = \lvert S \rvert$ is the length of $S$.
\begin{equation}
\label{run:epsilon}
\begin{aligned}
\varrho_{\epsilon} = & [1,q_{\epsilon,1}=q_{\epsilon,s},v_{\epsilon,1}=\sharp] \overset{\epsilon}{\rightarrow} [\cdots] \overset{\epsilon}{\rightarrow} \cdots & \text{sub-run 1}  \\
 & \overset{\delta_{\epsilon,1}}{\rightarrow} [2,q_{\epsilon,2},v_{\epsilon,2}] \overset{\epsilon}{\rightarrow} [\cdots] \overset{\epsilon}{\rightarrow} \cdots  & \text{sub-run 2}  \\
 & \cdots & \\
 & \overset{\delta_{\epsilon,i-1}}{\rightarrow} [i,q_{\epsilon,i},v_{\epsilon,i}] \overset{\epsilon}{\rightarrow} [\cdots] \overset{\epsilon}{\rightarrow} [i',q_{\epsilon,i'},v_{\epsilon,i'}]  & \text{sub-run i}  \\
 & \overset{\delta_{\epsilon,i}}{\rightarrow} [i+1,q_{\epsilon,i+1},v_{\epsilon,i+1}] \overset{\epsilon}{\rightarrow} [\cdots] \overset{\epsilon}{\rightarrow} \cdots  & \text{sub-run i+1}  \\
 & \cdots & \\
 & \overset{\delta_{\epsilon,k}}{\rightarrow} [k+1,q_{\epsilon,k+1} \in Q_{\epsilon,f},v_{\epsilon,k+1}]  & \text{sub-run k+1}  
\end{aligned}
\end{equation}
Let $\varrho_{\notin}$ denote a run of $A_{\notin}$ over $S$.
\begin{equation}
\label{run:noepsilon}
\begin{aligned}
\varrho_{\notin} = & [1,q_{\notin,1}=q_{\notin,s},v_{\notin,1}=\sharp] & \text{sub-run 1} \\
 & \overset{\delta_{\notin,1}}{\rightarrow} [2,q_{\notin,2},v_{\notin,2}] & \text{sub-run 2} \\
 & \cdots & \\
 & \overset{\delta_{\notin,i-1}}{\rightarrow} [i,q_{\notin,i},v_{\notin,i}] & \text{sub-run i}\\
 & \overset{\delta_{\notin,i}}{\rightarrow} [i+1,q_{\notin,i+1},v_{\notin,i+1}] & \text{sub-run i+1}\\
 & \cdots & \\
 & \overset{\delta_{\notin,k}}{\rightarrow} [k+1,q_{\notin,k+1},v_{\notin,k+1}] & \text{sub-run k+1}
\end{aligned}
\end{equation}
$\varrho^{\notin}$ necessarily follows $k$ transitions,
since it does not have any $\epsilon$-transitions.
On the other hand, 
$\varrho^{\epsilon}$ may follow more than $k$ transitions ($j \geq k$),
because several $\epsilon$ transitions may intervene between ``actual'', non-$\epsilon$ transitions, 
as shown in Run \ref{run:epsilon}.
The number of non-$\epsilon$ transitions is still $k$.
$\varrho_{\epsilon}$ is thus necessarily composed of $k$ ``sub-runs'',
where the first configuration of each sub-run is reached via a non-$\epsilon$ transition, followed by a sequence of 0 or more $\epsilon$ transitions.
Each line in Run \ref{run:epsilon} is such a sub-run.
We can also split $\varrho_{\notin}$ in sub-runs,
but in this case each such sub-run will be simply composed of a single configuration.
See Run \ref{run:noepsilon}.

We will prove the following.
For each sub-run $i$ of $\varrho_{\epsilon}$,
it holds that:
\begin{enumerate}
	\item $q_{\notin,i} \in q_{\epsilon,i}$. In fact, $q_{\notin,i} = \mathit{Enclosure}(q_{\epsilon,i})$.
	\item $v_{\epsilon,i} = v_{\notin,i}$, i.e., $A_{\epsilon}$ and $A_{\notin}$ have the same register contents at each $i$.
\end{enumerate} 
We can prove this inductively.
We assume that the above claims hold for $i$ and then we can show that they must necessarily hold for $i+1$.
Since they are obviously true for $i=1$,
they are then true for $i=k+1$ as well.
Thus, $q_{\notin,k+1} \in Q_{\notin,f}$ and $\varrho_{\notin}$ is an accepting run as well.

First, notice that in each sub-run $i$ of $\varrho_{\epsilon}$,
$v_{\epsilon,i}$ remains the same,
since $\epsilon$ transitions never modify the contents of the registers.
Thus, in $\varrho_{\epsilon}$, $v_{\epsilon,i'}=v_{\epsilon,i}$.
It is also obviously true that $i'=i$,
since $\epsilon$ transitions do not read elements from $S$ and thus the automaton's head does not move.
The only thing that could possibly change is $q_{\epsilon,i'}$,
so that, in general, $q_{\epsilon,i'} \neq q_{\epsilon,i}$.
Therefore,
in Run \ref{run:epsilon},
we move from sub-run $i$ to sub-run $i+1$ by jumping from $q_{\epsilon,i'}$ to $q_{\epsilon,i+1}$.
This implies that $\delta_{\phi,i}$,
connecting $q_{\epsilon,i'}$ to $q_{\epsilon,i+1}$,
is triggered when the contents of the register are those of $v_{\epsilon,i'}=v_{\epsilon,i}$.

Now, $q_{\epsilon,i'}$ belongs to the enclosure of $q_{\epsilon,i}$.
Otherwise, it would be impossible to reach it from $q_{\epsilon,i}$ by following only $\epsilon$ transitions.  
From the induction hypothesis we know that $q_{\notin,i}$ must be the enclosure of $q_{\epsilon,i}$.
From the construction algorithm for $A_{\notin}$ (Algorithm \ref{algorithm:epsilon}) 
we also know that the transition $\delta_{\epsilon,i}$ also exists in $A_{\notin}$,
with $q_{\notin,i}$ its source.
$\delta_{\notin,i}$ is has the same condition and references the same registers as $\delta_{\epsilon,i}$.
Since $\delta_{\epsilon,i}$ is triggered with $v_{\epsilon,i'}$,
$\delta_{\notin,i}$ must also be triggered because $v_{\notin} = v_{\epsilon,i}$ (by the induction hypothesis) and thus $v_{\notin} = v_{\epsilon,i'}$.
From the construction algorithm,
we can see that $q_{\notin,i+1}$ will be the enclosure of $q_{\epsilon,i+1}$ 
The state $q_{e}$ in Algorithm \ref{algorithm:epsilon} is $q_{\epsilon,i'}$ in Run \ref{run:epsilon},
while state $p_{\epsilon}$ in the algorithm is state $q_{\epsilon,i+1}$ in $\varrho_{\epsilon}$.
$p_{\notin}$ is the thus the enclosure of $q_{\epsilon,i+1}$.
Thus there exists $q_{\notin,i+1}$ which we can reach from $q_{\notin,i}$ and which is the enclosure of $q_{\epsilon,i}$.
The second part of the induction hypothesis is obviously true for $i+1$,
i.e., $v_{\epsilon,i+1} = v_{\notin,i+1}$,
since exactly the same registers are modified in exactly the same way by $\delta_{\epsilon,i}$ and $\delta_{\notin,i}$.
 
Therefore, $q_{\epsilon,k+1} \in q_{\notin,k+1}$ which implies that $q_{\notin,k+1} \in Q_{\notin,f}$ and thus $\varrho_{\notin}$ is an accepting run of $A_{\notin}$ over $S$.
\end{proof}

\subsection{Proof of Lemma \ref{lemma:multi2single}}
\label{sec:proof:multi2single}
\ifdefined\FI
\begin{lemma}
For every multi-register \sra\ $A_{mr}$ there exists an equivalent single-register \sra\ $A_{sr}$, i.e., a single-register \sra\ such that $\mathcal{L}(A_{mr}) = \mathcal{L}(A_{sr})$.
\end{lemma}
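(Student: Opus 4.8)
The plan is a direct simulation argument. A multi-write transition $\delta : q, \phi \downarrow W \rightarrow q'$ of $A_{mr}$ writes the current element to \emph{all} registers in $W$ at once, so immediately after $\delta$ fires those registers hold the same value. The key observation is that a single-register \sra\ can keep, inside its finite control, a bookkeeping function recording, for each ``logical'' register of $A_{mr}$, which physical register currently stores its content; registers that ought to share a value are mapped to the \emph{same} physical register, so a fresh value written to $W$ only ever needs \emph{one} physical register to be stored in.

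Concretely I would take $A_{sr}.R = A_{mr}.R$ (the same $k$ registers, now playing the role of physical registers) and let the state set of $A_{sr}$ be the finite set of pairs $(q,g)$ with $q \in A_{mr}.Q$ and $g : A_{mr}.R \rightarrow A_{mr}.R \cup \{\sharp\}$ an aliasing map. The start state is $(A_{mr}.q_{s}, g_{\sharp})$ with $g_{\sharp}(r) = \sharp$ for all $r$, and $(q,g)$ is final iff $q \in A_{mr}.Q_{f}$. For each $\delta : q, \phi \downarrow W \rightarrow q'$ of $A_{mr}$ and each state $(q,g)$ of $A_{sr}$ I add a transition as follows. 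If $\phi = \epsilon$, add the $\epsilon$-transition $(q,g) \rightarrow (q',g)$. Otherwise form $\phi'$ from $\phi$ by first replacing every atomic sub-formula that mentions a register $r_{i}$ with $g(r_{i}) = \sharp$ by the always-false condition $\neg \top$ (mimicking Definition \ref{definition:condition_semantics}, under which an atom over an empty register is never satisfied), then substituting $g(r_{i})$ for every remaining register variable $r_{i}$ (references to $\sim$ are left untouched); if $W = \emptyset$, add $(q,g), \phi' \downarrow \emptyset \rightarrow (q',g)$, and if $W \neq \emptyset$, pick a physical register $p$ that is \emph{free} (see below) and add $(q,g), \phi' \downarrow \{p\} \rightarrow (q',g')$ where $g'(r) = p$ for $r \in W$ and $g'(r) = g(r)$ otherwise. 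By construction every transition of $A_{sr}$ writes to at most one register, so $A_{sr}$ is single-register.

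The only nontrivial point is the existence of the free register $p$. After $\delta$ fires, the values that must still be retrievable are exactly those of the logical registers \emph{outside} $W$, so the set of physical registers that must be preserved is $\{\, g(r) : r \notin W,\ g(r) \neq \sharp \,\}$, which has at most $k - \lvert W \rvert \leq k-1$ elements; since there are $k$ physical registers in all, a free one always exists, and we take, say, the least-indexed one, making the choice canonical. Overwriting $p$ does no harm precisely because no surviving logical register is mapped to it; this pigeonhole argument is the short but essential observation that no registers beyond the original $k$ are ever needed.

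Correctness is then a routine induction on run length: one shows that a sequence of successor steps of $A_{mr}$ leads from $[1,q_{s},\sharp]$ to $[j,q,v]$ on a prefix iff a sequence of successor steps of $A_{sr}$ leads from $[1,(q_{s},g_{\sharp}),\sharp]$ to $[j,(q,g),v']$ on the same prefix, where $(g,v')$ \emph{represents} $v$ in the sense that $v(r) = v'(g(r))$ whenever $g(r) \neq \sharp$ and $v(r) = \sharp$ whenever $g(r) = \sharp$. This invariant is preserved by each kind of step: $\epsilon$-steps change nothing; a non-writing step leaves $v$, $v'$ and $g$ unchanged, and the rewriting of $\phi$ into $\phi'$ guarantees $(t_{j},v) \models \phi \Leftrightarrow (t_{j},v') \models \phi'$; a writing step updates $v$ on all of $W$ and $v'$ on the single register $p$ with the same element $t_{j}$, and $g'$ was defined exactly so as to restore the representation. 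Instantiating $j = \lvert S \rvert + 1$ and using that the final states of $A_{sr}$ project onto the final states of $A_{mr}$ yields $\mathcal{L}(A_{mr}) = \mathcal{L}(A_{sr})$. I expect the main obstacle to be bureaucratic rather than conceptual: getting the treatment of conditions over empty registers exactly right (the $\neg\top$ rewriting and its interaction with negation, so that partial evaluation is faithful) and stating the representation invariant precisely enough that the writing-step case of the induction goes through cleanly.
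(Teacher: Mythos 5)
Your proposal is correct and is essentially the paper's own proof: the paper also takes the product of $A_{mr}.Q$ with a finite aliasing structure (an ordered partition $(p_{1},\dots,p_{w})$ of the logical registers, which is just the family of fibers of your map $g$), rewrites each condition by substituting the physical register currently holding each logical register's value, selects a single canonical write register whose existence follows from the same pigeonhole count, and concludes by the same induction on run length with the same representation invariant. The only differences are notational (partition versus aliasing function, and your explicit $\neg\top$ rewriting for empty registers, which the paper handles implicitly via the semantics of conditions over undefined registers).
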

\else
\begin{lemma*}
For every multi-register \sra\ $A_{mr}$ there exists an equivalent single-register \sra\ $A_{sr}$, i.e., a single-register \sra\ such that $\mathcal{L}(A_{mr}) = \mathcal{L}(A_{sr})$.
\end{lemma*}
\fi

\begin{proof}
The proof is constructive.
We construct a new single-register \sra\ and show that it has the same language as the multi-register \sra.
The new \sra, $A_{sr}$, has the same number of registers as $A_{mr}$.
The main difference is that $A_{sr}$ has more states than $A_{mr}$. 
See \cite{DBLP:journals/tcs/KaminskiF94} for a similar proof about register automata.

Let $A_{mr} = (Q_{mr},q_{mr,s},Q_{mr,f},R_{mr},\Delta_{mr})$ and $A_{sr} = (Q_{sr},q_{sr,s},Q_{sr,f},R_{sr},\Delta_{sr})$ denote the multi- and single-register \sra\ respectively.
Let $w= \lvert R_{mr} \rvert = \lvert R_{sr} \rvert$ denote the number of registers (the same for $A_{mr}$ and $A_{sr}$).
Let $p = (p_{1}, \cdots, p_{w}) \in (2^{R_{mr}})^{w}: \bigcup_{k=1}^{w}=R_{mr} \text{ and } p_{i} \cap p_{j} = \emptyset \text{ for } i \neq j$.
In other words,
each $p_{i}$ is a subset of $R_{mr}$ and the union of all $p_{i}$ gives us $R_{mr}$.
Therefore, $p$ denotes a partition of $R_{mr}$.
For example,
if $R_{mr} = \{ r_{1}, r_{2}, r_{3}, r_{4} \}$,
a possible partition would be $p = (\{ r_{1}, r_{3}\}, \{r_{2}\}, \{r_{4}\}, \emptyset)$.
Let $P_{R_{mr}} = \{p \mid p \text{ is a partition of } R_{mr}\}$ denote the set of all possible partitions of $R_{mr}$.

The general idea is that we want to establish a correspondence between the registers of $A_{mr}$ and those of $A_{sr}$.
If all the registers in $R_{mr}$ have different contents,
then each one of them may correspond to a unique register in $R_{sr}$.
However, since a transition in $A_{mr}$ may write to multiple registers,
at some point in a run of $A_{mr}$,
some of its registers will have the same contents.
For example,
if $R_{mr} = \{ r_{1}, r_{2}, r_{3}, r_{4} \}$, 
a transition may write to $r_{1}$ and $r_{3}$ at the same time.
In this case then,
the registers of $R_{mr}$ may be partitioned as follows,
according to which of them have the same contents: 
$p = (\{ r_{1}, r_{3}\}, \{r_{2}\}, \{r_{4}\}, \emptyset)$.
Now, we could map each register of $R_{rs}$ to one of the $p_{i}$ in $p$.
Repeated values in $R_{mr}$ would then exist as single values in $R_{rs}$.
The next issue would then be how we could actually track in a run of $A_{mr}$ the registers that have the same value(s).
We could actually achieve this by combining the states of $A_{mr}$ with every possible partition.

For $A_{sr}$ we would then have:
\begin{itemize}
	\item $Q_{sr} = Q_{mr} \times P_{R_{mr}}$, where $\times$ indicates the Cartesian product.
	\item $q_{sr,s} = (q_{mr,s},p_{s})$. where $p_{s} = (R_{mr},\overbrace{\emptyset,\cdots,\emptyset}^{w-1})$.
	\item $Q_{sr,f} = Q_{mr,f} \times P_{R_{mr}}$.
	\item $R_{mr} = \{r_{mr,1}, \cdots, r_{mr,w}\}$.
	\item The set of transitions $\Delta_{sr}$ is defined as follows:
	\begin{itemize}
		\item For every $\delta_{mr} \in \Delta_{mr}: \delta_{mr} = q_{mr},\epsilon \rightarrow q_{mr}'$, 
		i.e., for every $\epsilon$ transition of $A_{mr}$,
		we add the transitions $\delta_{sr} = (q_{mr},p),\epsilon \rightarrow (q_{mr}',p)$,
		one for each $p \in P_{R_{mr}}$.
		\item For every $\delta_{mr} \in \Delta_{mr}: \delta_{mr} = q_{mr},\phi_{mr} \rightarrow q_{mr}'$, 
		i.e., for every non-$\epsilon$ transition of $A_{mr}$ that does not write to any registers,
		we do the following.
		Let $R_{\phi}$ denote the set of registers referenced/accessed by $\phi_{mr}$.
		For every $p \in P_{R_{mr}}$ we add a transition $\delta_{sr} = (q_{mr},p),\phi_{sr} \rightarrow (q_{mr}',p)$.
		$\phi_{sr}$ is the same as $\phi_{mr}$ with the following difference.
		Each $r_{mr,i} \in R_{\phi}$ 
		(i.e., each register referenced by $\phi_{mr}$) 
		is replaced in the condition by $r_{sr,j} \in R_{sr}$,
		where $j$ is such that $r_{mr,i} \in p_{j}$. 
		For example, 
		if $p = (p_{1},p_{2},p_{3},p_{4}) = (\{ r_{mr,1}, r_{mr,3}\}, \{r_{mr,2}\}, \{r_{mr,4}\}, \emptyset)$ and $\phi_{mr} = \phi(r_{mr,4})$, then, $\phi_{sr} = \phi(r_{sr,3})$, since $r_{mr,4} \in p_{3}$.
		Notice that only one such $j$ exists because $p$ is a partition and, by definition, the different sets of a partition do not have common elements.
		\item For every $\delta_{mr} \in \Delta_{mr}: \delta_{mr} = q_{mr},\phi_{mr} \downarrow (\cdots, r_{mr,i}, \cdots) \rightarrow q_{mr}'$
		i.e., for every non-$\epsilon$ transition of $A_{mr}$ that also writes to registers,
		we do the following.
		For every $p \in P_{R_{mr}}$ we add a transition $\delta_{sr} = (q_{mr},p),\phi_{sr} \downarrow r_{sr,k} \rightarrow (q_{mr}',p')$. 
		$\phi_{sr}$ is defined as in the previous case.
		Now, let $R_{w} = (\cdots, r_{mr,i}, \cdots)$ denote all the write registers.
		$k$ in $r_{rs,k}$ is defined as the minimal integer such that $p_{k} \subseteq R_{w}$.
		Additionally, $p'$ is defined as follows. 
		$p_{k}' = p_{k} \cup R_{w}$ and $p_{k'}' = p_{k'} \setminus R_{w}$ for $k' \neq k$.
		For example,
		if $p = (p_{1},p_{2},p_{3},p_{4}) = (\{ r_{mr,1}, r_{mr,3}\}, \{r_{mr,2}\}, \{r_{mr,4}\}, \emptyset)$ and $R_{w} = \{r_{mr,1}, r_{mr,2}, r_{mr,3} \}$ then
		$k = 1$, since $p_{1} = \{ r_{mr,1}, r_{mr,3}\} \subset R_{w}$.
		We also have that $p'= (p_{1}',p_{2}',p_{3}',p_{4}') = (\{ r_{mr,1}, r_{mr,2}, r_{mr,3}\}, \emptyset, \{r_{mr,4}\}, \emptyset)$.
	\end{itemize}
\end{itemize}

We want to show that $\mathcal{L}(A_{mr}) = \mathcal{L}(A_{sr})$.
First, assume that $S \in \mathcal{L}(A_{mr})$ for a string $S$.
We will show that $S \in \mathcal{L}(A_{sr})$.
Let $\varrho_{mr}$ be an accepting run of $A_{mr}$ over $S$:
\begin{equation*}
\begin{aligned}
\varrho_{mr} = & [1,q_{mr,1}=q_{mr,s},v_{mr,1}=\sharp] \overset{\delta_{mr,1}}{\rightarrow} \\
& \cdots \\
 & \overset{\delta_{mr,i-1}}{\rightarrow} [i,q_{mr,i},v_{mr,i}] \overset{\delta_{mr,i}}{\rightarrow} \\
 & \cdots \\
 & \overset{\delta_{mr,l}}{\rightarrow} [l+1,q_{mr,l+1} \in Q_{mr,f},v_{mr,l+1}]
\end{aligned}
\end{equation*}
Let $\varrho_{sr}$ be a run of $A_{sr}$ over $S$:
\begin{equation*}
\begin{aligned}
\varrho_{sr} = & [1,q_{sr,1}=q_{sr,s},v_{sr,1}=\sharp] \overset{\delta_{sr,1}}{\rightarrow} \\
& \cdots \\
 & \overset{\delta_{sr,i-1}}{\rightarrow} [i,q_{sr,i},v_{sr,i}] \overset{\delta_{sr,i}}{\rightarrow} \\
 & \cdots \\
 & \overset{\delta_{sr,l}}{\rightarrow} [l+1,q_{sr,l+1},v_{sr,l+1}]
\end{aligned}
\end{equation*}
We need to show that $q_{sr,l+1} \in Q_{sr,f}$.
We can prove this inductively.
As our induction hypothesis,
we assume that $q_{sr,i} = (q,p)$,
where 
\begin{enumerate}
	\item $q = q_{mr,i}$ and
	\item $p= (p_{i,1},\cdots,p_{i,w})$ such that
for all $1 \leq k \leq w$ (i.e., all registers of $A_{sr}$)
$v_{sr,i}(r_{sr,k}) = v_{mr,i}(r_{mr,k'})$ for all $r_{mr,k'} \in p_{i,k}$.
\end{enumerate}
In other words,
at the $i^{th}$ configuration in $\varrho_{sr}$,
we have reached a state $q_{sr,i}$.
The first element of this state must be $q_{mr,i}$,
i.e., the state at the $i_{th}$ configuration of $\varrho_{mr}$.
The second element must be a partition (of the registers of $A_{mr}$).
The contents of the first register of $A_{sr}$, $r_{sr,1}$,
must be equal to the contents of the registers of the first set in the partition,
the contents of the second register $r_{st,2}$ equal to those of the second set, etc.
For example,
if $p = (p_{i,1},p_{i,2},p_{i,3},p_{i,4}) = (\{ r_{mr,1}, r_{mr,3}\}, \{r_{mr,2}\}, \{r_{mr,4}\}, \emptyset)$,
then the contents of $r_{sr,1}$ must be the same as those of $r_{mr,1}$ and $r_{mr,3}$.
Similarly, the contents of $r_{sr,2}$ and $r_{mr,2}$ must be the same.
Similarly for $r_{sr,3}$ and $r_{mr,4}$.
We must then show that,
if the induction hypothesis holds for the $i^{th}$ configuration,
it also holds for the $(i+1)^{th}$ one.

First, assume that $\delta_{mr,i} = q_{mr,i},\epsilon \rightarrow q_{mr,i+1}$.
Then, from the construction of $A_{sr}$,
we know that there exists $\delta_{sr} \in \Delta_{sr}$ such that $\delta_{sr} = (q_{mr,i},p),\epsilon \rightarrow (q_{mr,i+1},p)$.
The first part of the induction hypothesis then still holds because we can move to a state whose first element is $q_{mr,i+1}$.
The second part of the hypothesis also holds because $p$ remains the same and we already know that this part holds for $p$ from the hypothesis itself.

Now, assume that  $\delta_{mr,i} = q_{mr,i},\phi_{mr} \rightarrow q_{mr,i+1}$,
with $\phi_{mr} \neq \epsilon$.
Then, from the construction of $A_{sr}$,
we know that there exists $\delta_{sr} \in \Delta_{sr}$ such that $\delta_{sr} = (q_{mr,i},p),\phi_{sr} \rightarrow (q_{mr,i+1},p)$ and the condition $\phi_{sr}$ is triggered.
We can prove the latter claim about $\phi_{sr}$ by noticing that $\phi_{mr}$ is triggered.
Now, from the construction,
$\phi_{sr}$ is the same as $\phi_{mr}$ with its arguments/registers appropriately replaced,
as described above.
Without loss of generality,
assume that $\phi_{mr}$ references all of its registers
(if this is not the case,
we can always construct an equivalent condition that references all registers, 
but does not actually access any of the redundant ones).
We can write it as follows:
\begin{equation*}
\phi_{mr} = \phi(r_{mr,1}, \cdots, r_{mr,w})
\end{equation*}
$\phi_{sr}$ can be written as follows:
\begin{equation*}
\phi_{sr} = \phi(r_{sr,i_{1}}, \cdots, r_{mr,i_{w}})
\end{equation*}
where $i_{1}$ is such that $r_{mr,1} \in p_{i_{1}}$,
i.e., $i_{1}$ is the partition set from $p$ where $r_{mr,1}$ belongs.
Similarly for $i_{2}$, etc.
For example,
if $R_{mr} = \{r_{mr,1},r_{mr,2},r_{mr,3},r_{mr,4}\}$ and $p = (p_{1},p_{2},p_{3},p_{4}) = (\{ r_{mr,1}, r_{mr,3}\}, \{r_{mr,2}\}, \{r_{mr,4}\}, \emptyset)$ 
then
\begin{equation*}
\phi_{mr} = \phi(r_{mr,1}, r_{mr,2}, r_{mr,3}, r_{mr,4})
\end{equation*}
and 
\begin{equation*}
\phi_{sr} = \phi(r_{sr,1}, r_{sr,2}, r_{sr,1}, r_{sr,3})
\end{equation*}
From the induction hypothesis,
we know, however,
that
$v_{mr,i}(r_{mr,j}) = v_{sr,i}(r_{sr,i_{j}})$.
Therefore,
$\phi_{mr}$ and $\phi_{sr}$ essentially have the same arguments.
Since $\phi_{sr}$ is also triggered,
the first part of the induction hypothesis holds for $(i+1)$ as well.
The second part also holds since $p$ again remains the same.

Finally, assume that $\delta_{mr,i} = q_{mr,i},\phi_{mr} \downarrow R_{w} \rightarrow q_{mr,i+1}$,
with $\phi_{mr} \neq \epsilon$ and $R_{w} \neq \emptyset$.
Then, from the construction of $A_{sr}$,
we know that there exists $\delta_{sr} \in \Delta_{sr}$ such that $\delta_{sr} = (q_{mr,i},p),\phi_{sr} \downarrow r_{rs,k} \rightarrow (q_{mr,i+1},p')$ and the condition $\phi_{sr}$ is triggered.
The proof for $\phi_{sr}$ being triggered is the same as in the previous case.
We additionally need to prove the second part of the induction hypothesis,
since $p$ now becomes $p'$.
In other words,
we need to prove for $p'$ that the contents of a register $j$ of $A_{sr}$ are the same as those of the registers of $A_{mr}$ contained in the $j^{th}$ set in the partition $p'$. 
Indeed, this is the case.
From the construction,
we know that the partition set $p_{k}$
(reminder: $r_{sr,k}$ is the write register in $\delta_{sr}$)
becomes $p_{k}' = p_{k} \cup R_{w}$.
Since $p_{k} \subset R_{w}$,
this means that $p_{k}' = R_{w}$.
Thus the hypothesis still holds for $p_{k}'$,
since $r_{sr,k}$ and all registers in $R_{w}$ will have the same value.
The hypothesis also holds for $k' \neq k$.
Since $p_{k'}' = p_{k'} \setminus R_{w}$ and the hypothesis holds for $p$,
this means that $p_{k'}$ had registers with the same contents before the writing to $R_{w}$. 
If we remove from $p_{k'}$ the changed registers to obtain $p_{k'}'$,
then $p_{k'}'$ will still have registers with the same contents after the writing to $R_{w}$.
Additionally, since $r_{sr,k'}$ was not changed, 
it will still have the same contents as the registers in $p_{k'}'$.

We know that the hypothesis holds for $i=1$ in the runs $\varrho_{mr}$ and $\varrho_{sr}$,
because $q_{sr,s} = (q_{mr,s},p_{s})$,
with $p_{s} = (R_{mr}, \emptyset, \cdots, \emptyset)$.
The first part of the hypothesis obviously holds.
The second part also holds since all registers are empty in both runs at the beginning.
Therefore, the hypothesis holds for $i=2$, $i=3$, etc.
$\varrho_{sr}$ is thus an accepting run of $A_{sr}$ over $S$.

We have proven that $S \in \mathcal{L}(A_{mr}) \Rightarrow S \in \mathcal{L}(A_{sr})$.
The inverse direction,
i.e., $S \in \mathcal{L}(A_{sr}) \Rightarrow S \in \mathcal{L}(A_{mr})$,
can be proven similarly.
We first assume that there is an accepting run $\varrho_{sr}$ of $A_{sr}$ over $S$ and then show, in a similar manner, that there exists an accepting runf= $\varrho_{mr}$ of $A_{mr}$ over $S$. 
\end{proof}

\subsection{Proof of Theorem \ref{theorem:sra2srem}}
\label{sec:proof:sra2srem}
\ifdefined\FI
\begin{theorem}
For every \sra\ $A$ there exists an equivalent \srem\ $e$, i.e., a \srem\ such that $\mathcal{L}(A) = \mathcal{L}(e)$.
\end{theorem}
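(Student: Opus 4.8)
The plan is to adapt the classical state-elimination (generalized automaton) argument, taking care that valuations are threaded correctly through the register operations. First I would preprocess $A$: by Lemma~\ref{lemma:epsilon} we may assume $A$ has no $\epsilon$-transitions, and by Lemma~\ref{lemma:multi2single} we may further assume $A$ is single-register, so every transition has the form $q,\phi \downarrow W \rightarrow q'$ with $\lvert W \rvert \le 1$; such a transition is labelled by the \srem\ $\phi$ (when $W = \emptyset$) or $\phi \downarrow r_i$ (when $W = \{r_i\}$), each a legitimate \srem\ by Definition~\ref{definition:srem}. This lets us view $A$ as a \emph{generalized \sra} (\gsra): an automaton identical to a \sra\ except that transitions carry arbitrary \srem\ as labels. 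The semantics of a \gsra\ is the obvious extension of Definition~\ref{definition:configuration}: a transition from $p$ to $p'$ labelled by $e$ may be followed on a stream segment $S'$, moving from valuation $v$ to $v'$, exactly when $(e,S',v) \vdash v'$; a run is accepting when it starts at the start state with the empty valuation and ends in a final state, and, as in the proof of Theorem~\ref{theorem:srem2sra}, we write $\mathcal{L}(A,v,v')$ for the strings on which there is such a run from $(q_s,v)$ to some $(q_f,v')$ with $q_f \in Q_f$.

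Next I would normalize the \gsra: add a fresh start state $q_s^{\circ}$ with a single $\epsilon$-labelled transition to the old start state, and a fresh final state $q_f^{\circ}$ reached from every old final state by an $\epsilon$-labelled transition, so that $q_s^{\circ}$ has no incoming and $q_f^{\circ}$ has no outgoing transitions; also merge any two parallel transitions $p \overset{e_1}{\rightarrow} p'$ and $p \overset{e_2}{\rightarrow} p'$ into a single $p \overset{e_1 + e_2}{\rightarrow} p'$. Then I would eliminate the remaining states one at a time. To eliminate a state $q \notin \{q_s^{\circ},q_f^{\circ}\}$: for every ordered pair $(p,p')$ of remaining states with $p \neq q \neq p'$ (allowing $p = p'$), if there are transitions $p \overset{e_1}{\rightarrow} q$ and $q \overset{e_2}{\rightarrow} p'$, and $q$ carries a self-loop $q \overset{e_3}{\rightarrow} q$ (taking $e_3^{*}$ to be $\epsilon$ when there is no self-loop), add the transition $p \overset{e_1 \cdot e_3^{*} \cdot e_2}{\rightarrow} p'$, combined via $+$ with any transition already present from $p$ to $p'$; finally delete $q$ and all transitions incident to it. When only $q_s^{\circ}$ and $q_f^{\circ}$ remain there is a unique transition $q_s^{\circ} \overset{e}{\rightarrow} q_f^{\circ}$ (or none, in which case set $e := \emptyset$), and the claim is that $\mathcal{L}(e) = \mathcal{L}(A)$.

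The correctness argument is the invariant that each elimination step preserves the recognized language: if $A'$ is obtained from $A''$ by eliminating $q$, then the accepting runs of $A''$ and $A'$ are in bijection after collapsing, in a run of $A''$, each maximal sub-run of the form $p \to q \to \cdots \to q \to p'$ with $m \ge 0$ self-loop traversals at $q$ into the single new transition $p \overset{e_1 \cdot e_3^{*} \cdot e_2}{\rightarrow} p'$. The heart of the matter is that the \srem\ operators are precisely the path-composition operations with valuations threaded: concatenation realizes ``split the segment, run $e_1$ from $v$ to some $v''$, then run $e_2$ from $v''$'' (Definition~\ref{definition:srem_semantics}, case~4), so a two-leg path through $q$ corresponds to $e_1 \cdot e_2$, and $e_3^{*}$ realizes iterating the self-loop an arbitrary number of times with the valuation carried forward (Definition~\ref{definition:srem_semantics}, case~6), matching the $m$ self-loop traversals. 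Hence $\mathcal{L}(A'',v,v') = \mathcal{L}(A',v,v')$ for the designated endpoints, and in particular the final one-transition \gsra\ accepts exactly $\mathcal{L}(A)$, so the label $e$ of its unique transition is the desired \srem.

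The main obstacle is exactly this bookkeeping. In the classical proof one merely concatenates strings drawn from fixed languages, but here the label of a transition is a \srem\ whose effect on the registers depends on the incoming valuation, so one must verify that splitting a visiting sub-run into its $e_1$-leg, its self-loop iterations and its $e_2$-leg is consistent with the $\vdash$ relation at every intermediate valuation, and conversely that any derivation of $(e_1 \cdot e_3^{*} \cdot e_2, S', v) \vdash v'$ can be realized as such a sub-run in $A''$. Once the \gsra\ semantics relative to valuations is pinned down as above, this reduces to a straightforward induction on the number of eliminated states, invoking only the semantics of $\cdot$, $+$ and $^{*}$; the preprocessing by Lemmas~\ref{lemma:epsilon} and~\ref{lemma:multi2single} is what guarantees the base labels are honest conditions with at most a single write, keeping the induction's base case clean.
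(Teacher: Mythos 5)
Your proposal is correct and follows essentially the same route as the paper: preprocess via Lemmas~\ref{lemma:epsilon} and~\ref{lemma:multi2single}, pass to a generalized \sra\ with \srem-labelled transitions and valuation-relative semantics, normalize with fresh start/final states, and eliminate states one at a time using the label $(e_{1} \cdot (e_{2})^{*} \cdot e_{3}) + e_{4}$, with correctness by induction on the number of eliminated states. The point you flag as the heart of the matter --- that concatenation and Kleene-star of \srem\ exactly realize path composition with valuations threaded through intermediate configurations --- is precisely the argument the paper makes when collapsing and re-expanding the sub-runs through the eliminated state.
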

\else
\begin{theorem*}
For every \sra\ $A$ there exists an equivalent \srem\ $e$, i.e., a \srem\ such that $\mathcal{L}(A) = \mathcal{L}(e)$.
\end{theorem*}
\fi
\begin{proof}

The proof develops along lines similar to the corresponding proof for classical and register automata \cite{DBLP:books/daglib/0086373,DBLP:journals/jcss/LibkinTV15}.
It uses a generalized version of \sra, 
denoted by \gsra.
These are \sra\ whose transitions are not equipped with a single condition,
but with a whole \srem.
For example,
the \gsra\ 
$A=(Q,q_{s},Q_{f},R,\Delta)$,
where $Q = \{q_{s}, q_{f}\}$, $Q_{f}=\{q_{f}\}$, $R=\{r_{1}\}$, $\Delta=\{ \delta \}$ and $\delta = q_{s},(\phi_{1} \downarrow \{ r_{1} \}) \cdot (\phi_{2}(r_{1}))  \rightarrow q_{f}$.
The single transition $\delta$ can read two characters at the same time,
apply $\phi_{1}$ on the first one,
store this character in $r_{1}$ and then apply $\phi_{2}$ on the second character and the contents of $r_{1}$.
We assume that all $\epsilon$ transitions have been eliminated and that the \sra\ is single-register 
(and if not, 
it has been converted to one,
as shown in Appendix \ref{sec:proof:multi2single}).
We also demand that 
a) \gsra\ have a single start state with no incoming transitions and with outgoing transitions to every other state, 
b) they have a single final state with no outgoing transitions and with incoming transitions from every other state,
c) there is an arrow connecting any two other states.
We say that a \gsra\ $A_{g}$ accepts a string $S$ if $S = S_{1} \cdot S_{2} \cdot \cdots \cdot S_{k}$ and there exists a run $\varrho = [1,q_{1},v_{1}] \rightarrow \cdots \rightarrow [i,q_{i},v_{i}] \rightarrow \cdots \rightarrow [l+1,q_{k+1},v_{k+1}]$,
where the state of the first configuration is the start state of $A_{g}$ ($q_{1}=A_{g}.q_{s}$),
the state of the last configuration is its final state ($q_{k+1}=A_{g}.q_{f}$) and for each $i$ there exists a transition $\delta$ of $A_{g}$ such that  $\delta = q_{i},e_{i} \rightarrow q_{i+1}$ and $S_{i} \in \mathcal{L}(e_{i},v_{i},v_{i+1})$.

We first convert the initial \sra\ $A$ to a \gsra\ $A_{g}$ as follows.
We add a new start state and connect it to the old start state with an $\epsilon$ transition.
We also connect with such a transition the old final state to a new final state.
It there are multiple transitions between any two states,
we combine them into a single transition whose condition will be the union of the conditions of the previous transitions 
(we can do this since we are allowed to have \srem\ on the transitions now).
Finally, if there exist states that are not connected to each other,
we connect them with $\emptyset$ transitions
(making sure that we do not add incoming transitions to the start state or outgoing transitions to the final state).
This procedure will produce a \gsra\ $A_{g}$ which will be equivalent in terms of its language to the original \sra\ $A$.

The basic idea is to start removing states from $A_{g}$,
one at a time,
without affecting the language it accepts.
This procedure is repeated until we are left with 2 states.
At this point,
the \gsra\ will have one start and one final state,
connected with a single transition.
The \srem\ on this transition is finally returned as the \srem\ corresponding to the initial \sra\ $A$.
The critical step in this process is of course the one where a state is removed.
We must ensure that any repairs we make to the remaining transitions do not affect the automaton's language.
We first select a state to remove, $q_{rip}$.
This can be any state, 
except for the start or the final states.
We then check all pairs of states $q_{i}$ and $q_{j}$.
We need to make sure that the new automaton will be able to move from $q_{i}$ to $q_{j}$ with exactly the same strings as when $q_{rip}$ was present.
We thus have to modify the \srem\ on the transition from $q_{i}$ to $q_{j}$.
The modification is the following.
Assume, that, in the old automaton, before the removal,
the following hold:
\begin{itemize}
	\item $q_{i},e_{1} \rightarrow q_{rip}$.
	\item $q_{rip},e_{2} \rightarrow q_{rip}$.
	\item $q_{i},e_{3} \rightarrow q_{j}$.
	\item $q_{i},e_{4} \rightarrow q_{j}$.
\end{itemize}
Notice that such transitions exist for every pair $q_{i}$ and $q_{j}$,
since, by definition, transitions exist between all pairs of states.
Then, after removing $q_{rip}$,
the \srem\ on the transition from $q_{i}$ to $q_{j}$ becomes $(e_{1} \cdot (e_{2})^{*} \cdot e_{3}) + e_{4}$.
See Algorithm \ref{algorithm:convert}.

\begin{algorithm}[!h]
\KwIn{A \gsra\ $A$ with $n$ states.}
\KwOut{A \gsra\ $A_{g}$ with 2 states, equivalent to $A$.}
\eIf{$\lvert A.Q \rvert = 2$}{
	$\mathtt{return}\ A$\;
	\label{line:convert:base}
}
{
	Pick an element $q_{rip}$ from $A.Q$ other than $A.q_{s}$ or $A.q_{f}$\;
	$Q' \leftarrow Q - \{ q_{rip} \}$\;
	$\Delta' \leftarrow \emptyset$\;
	\tcc{{\footnotesize Assume $\delta(q_{i},q_{j})$ returns the \srem\ on the transition from $q_{i}$ to $q_{j}$.}}\
	\ForEach{$q_{i} \in Q' - \{ A.q_{f} \}$ and $q_{j} \in Q' - \{ q_{f} \}$}{
		$\delta' \leftarrow q_{i},((e_{1} \cdot (e_{2})^{*} \cdot e_{3}) + e_{4}) \rightarrow q_{j}$ for $e_{1} = \delta(q_{i},q_{rip})$, $e_{2} = \delta(q_{rip},q_{rip})$, $e_{3} = \delta(q_{rip},q_{j})$, $e_{4} = \delta(q_{i},q_{j})$\;
		$\Delta' \leftarrow \Delta' \cup \{ \delta' \}$\;
	}
	$A' \leftarrow (Q',A.q_{s},A.q_{f},A.R,\Delta')$\;
	$\mathtt{return}\ \mathtt{CONVERT}(A')$\;
}
\caption{Converting a \gsra\ with $n$ states to a \gsra\ with 2 states.}
\label{algorithm:convert}
\end{algorithm}

We now need to prove that $A_{g}$ with $n$ states ($A_{g,n}$)
and $A_{g}$ with $n-1$ states ($A_{g,n-1}$),
as constructed from $A_{g,n}$ via Algorithm \ref{algorithm:convert}, 
are equivalent, i.e.,
that $S \in \mathcal{L}(A_{g,n},v,v')  \Leftrightarrow S \in \mathcal{L}(A_{g,n-1},v,v')$.
If we can prove this,
then the last step of the recursion of Algorithm \ref{algorithm:convert} (Line \ref{line:convert:base}) will give us an automaton that is equivalent to our initial \sra\ $A$.
Moreover, the \srem\ on the single transition of this \gsra\ is obviously the desired \srem.

First, assume that $S \in \mathcal{L}(A_{g,n},v,v')$.
Then, there exists an accepting run of $A_{g,n}$
\begin{equation*}
\varrho = [1,A_{g,n}.q_{s},v_{1}] \rightarrow \cdots \rightarrow [i,q_{i},v_{i}] \rightarrow \cdots \rightarrow [l+1,A_{g,n}.q_{f},v_{k+1}]
\end{equation*}

Now, assume that $q_{i} \neq q_{rip}$ for all $q_{i}$ of $\varrho$,
including, of course, the start and final states.
We claim that this run would also be an accepting run for $A_{g,n-1}$.
To prove this,
notice that between any two successive configurations $[i,q_{i},v_{i}] \rightarrow [j,q_{j},v_{j}]$ appearing in $\varrho$,
there exists a transition that is triggered between $q_{i}$ and $q_{j}$. 
Let $e_{4}$ denote the \srem\ on this transition.
In $A_{g,n-1}$ the \srem\ on this transition would become $(e_{1} \cdot (e_{2})^{*} \cdot e_{3}) + e_{4}$.
Thus, it would also be triggered, 
due to the presence of the term $e_{4}$,
assuming the same valuation $v_{i}$.
We can inductively prove,
based on the length of the run,
that this holds for any two successive configurations.
It obviously holds for the first and second configurations,
since we start with empty registers and thus the same valuation.
As a result, $v_{2}$ would also be the same.
Inductively, we can prove the same for $v_{3}$, etc.

The other case is when $q_{i} = q_{rip}$ for some $i$ (or multiple $i$s) in $\varrho$.
We would then have in $\varrho$ a sequence of successive configurations like the following:
\begin{equation*}
[i,q_{i},v_{i}] \rightarrow [j,q_{rip},v_{j}] \rightarrow \cdots \rightarrow [\cdots,q_{rip},\cdots] \rightarrow \cdots \rightarrow [k,q_{rip},v_{k}] \rightarrow [l,q_{l},v_{l}]
\end{equation*}
We claim that,
if we remove from $\varrho$ all configurations with $q_{rip}$,
then the remaining configurations would form an accepting run of $A_{g,n-1}$.
To prove this,
notice that the transition from $q_{i}$ to $q_{rip}$ in $\varrho$ would happen through the $e_{1}$ \srem.
Then, every loop (if any) from $q_{rip}$ to itself would occur because of the $e_{2}$ \srem.
Finally, the jump from $q_{rip}$ to $q_{l}$ would happen through $e_{3}$.
Thus, the move from $q_{i}$ to $q_{l}$ would happen via $e_{1} \cdot (e_{2})^{*} \cdot e_{3}$.
But this is exactly one of the disjuncts on the transition from $q_{i}$ to $q_{l}$ in $A_{g,n-1}$.
We can therefore remove all consecutive configurations with $q_{rip}$ from $\varrho$.
If there are multiple such sequences in $\varrho$,
we can repeat the same process as many times as necessary.

We have thus proven that  $S \in \mathcal{L}(A_{g,n},v,v')  \Rightarrow S \in \mathcal{L}(A_{g,n-1},v,v')$

Conversely, assume that $S \in \mathcal{L}(A_{g,n-1},v,v')$.
There is then an accepting run $\varrho$ of $A_{g,n-1}$.
The move between each sequence of successive configurations, 
from state $q_{i}$ to $q_{j}$, 
happens either due to $e_{4}$ or due to $e_{1} \cdot (e_{2})^{*} \cdot e_{3}$.
In the former case,
we can retain this move as is in a new run for $A_{g,n}$.
In the latter case,
we can insert between the configurations of $q_{i}$ and $q_{j}$ a sequence of configurations with $q_{rip}$,
as already described previously.
Thus, $S \in \mathcal{L}(A_{g,n-1},v,v') \Rightarrow S \in \mathcal{L}(A_{g,n},v,v')$.
This completes our proof.

\end{proof}

\subsection{Proof of Theorem \ref{theorem:closure}}
\label{sec:proof:closure}
\ifdefined\FI
\begin{theorem}
\sra\ and \srem\ are closed under union, intersection, concatenation and Kleene-star.
\end{theorem}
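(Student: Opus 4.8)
The plan is to treat the four operators in two groups. Closure on the \srem\ side under union, concatenation and Kleene-star is immediate from Definition~\ref{definition:srem}: if $e_{1},e_{2}$ are \srem\ then so are $e_{1}+e_{2}$, $e_{1}\cdot e_{2}$ and $e_{1}^{*}$, and Definition~\ref{definition:srem_semantics} gives exactly $\mathcal{L}(e_{1}+e_{2})=\mathcal{L}(e_{1})\cup\mathcal{L}(e_{2})$, $\mathcal{L}(e_{1}\cdot e_{2})=\mathcal{L}(e_{1})\cdot\mathcal{L}(e_{2})$ and $\mathcal{L}(e_{1}^{*})=\mathcal{L}(e_{1})^{*}$. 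For the \sra\ side I would exploit the equivalence already proved: given \sra\ $A_{1},A_{2}$, apply Theorem~\ref{theorem:sra2srem} to obtain equivalent \srem\ $e_{1},e_{2}$, form $e_{1}+e_{2}$ (resp.\ $e_{1}\cdot e_{2}$, $e_{1}^{*}$), and invoke Theorem~\ref{theorem:srem2sra} to get a \sra\ recognizing the target language. Equivalently, one can read the constructions straight off Figures~\ref{fig:srem2sra:or}, \ref{fig:srem2sra:seq} and \ref{fig:srem2sra:iter} (joining the component automata through fresh states and $\epsilon$-transitions); the correctness arguments are precisely the inductive steps for $+$, $\cdot$ and ${}^{*}$ in the proof of Theorem~\ref{theorem:srem2sra}.

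The operator requiring a genuinely new construction is intersection, and this is the part I expect to be the main obstacle. Given \sra\ $A_{1}$ and $A_{2}$, I would first eliminate all $\epsilon$-transitions via Lemma~\ref{lemma:epsilon}, so that on a string $S$ of length $n$ each automaton performs exactly $n$ non-$\epsilon$ steps; this is what makes the two runs synchronizable step by step. Then I would rename registers so that $A_{1}.R$ and $A_{2}.R$ are disjoint and set $R=A_{1}.R\cup A_{2}.R$. The product \sra\ $A$ has states $Q=A_{1}.Q\times A_{2}.Q$, start state $(A_{1}.q_{s},A_{2}.q_{s})$, final states $A_{1}.Q_{f}\times A_{2}.Q_{f}$, and for every pair of transitions $q_{1},\phi_{1}\downarrow W_{1}\rightarrow q_{1}'$ of $A_{1}$ and $q_{2},\phi_{2}\downarrow W_{2}\rightarrow q_{2}'$ of $A_{2}$ a transition $(q_{1},q_{2}),\,\phi_{1}\wedge\phi_{2}\downarrow (W_{1}\cup W_{2})\rightarrow(q_{1}',q_{2}')$. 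Since the register variables occurring in $\phi_{1}$ and $\phi_{2}$ come from disjoint blocks, $\phi_{1}\wedge\phi_{2}$ is a well-formed condition over $R$, and for every $u$ and every valuation $v$ on $R$ we have $(u,v)\models\phi_{1}\wedge\phi_{2}$ iff $(u,v)\models\phi_{1}$ and $(u,v)\models\phi_{2}$ (each $\phi_{i}$ inspects only registers in $A_{i}.R$). This $A$ is in general a multi-register \sra; it is already an \sra, and if the single-register fragment is wanted (e.g.\ to round-trip through \srem), one applies Lemma~\ref{lemma:multi2single}.

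Correctness then follows by a routine induction on run length, establishing the invariant that a configuration $[j,(q_{1},q_{2}),v]$ of $A$ is reachable on the prefix $S_{1..j-1}$ exactly when $[j,q_{1},v|_{A_{1}.R}]$ is reachable in $A_{1}$ and $[j,q_{2},v|_{A_{2}.R}]$ is reachable in $A_{2}$ on the same prefix. The forward direction projects a run of $A$ onto its two coordinates; the backward direction merges an accepting run of $A_{1}$ and an accepting run of $A_{2}$ over the same $S$ (equal length, after $\epsilon$-elimination) into a run of $A$, using the equivalence just noted for $\phi_{1}\wedge\phi_{2}$ and the fact that writes to the two disjoint register blocks do not interfere. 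Hence $\mathcal{L}(A)=\mathcal{L}(A_{1})\cap\mathcal{L}(A_{2})$, and converting $A$ back with Theorem~\ref{theorem:sra2srem} gives the corresponding \srem. The delicate points to watch are the bookkeeping of disjoint register names and the insistence on $\epsilon$-freeness before forming the product: without the latter the two automata could fall out of phase and the step-by-step synchronization would fail.
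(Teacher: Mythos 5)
Your proposal is correct and takes essentially the same route as the paper: union, concatenation and Kleene-star are handled by the Thompson-style constructions from the \srem-to-\sra\ translation, and intersection by a product \sra\ over disjoint register sets with conjoined conditions and unioned write registers, yielding a multi-register \sra. Your explicit insistence on $\epsilon$-elimination before forming the product (to keep the two runs in phase) is a point the paper's proof leaves implicit, but the underlying argument is the same.
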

\else
\begin{theorem*}
\sra\ and \srem\ are closed under union, intersection, concatenation and Kleene-star.
\end{theorem*}
\fi

\begin{proof}
For union, concatenation and Kleene-star the proof is essentially the proof for converting \srem\ to \sra.
For concatenation,
if we have \sra\ $A_{1}$ and $A_{2}$ we construct $A$ as in Figure \ref{fig:srem2sra:seq}.
For union,
we construct the \sra\ as in Figure \ref{fig:srem2sra:or}.
For Kleene-star, 
we construct the \sra\ as in Figure \ref{fig:srem2sra:iter}.
The only difference in these constructions is that we now assume,
without loss of generality,
that the $A_{1}.R \cap A_{2}.R = \emptyset$,
i.e., that $A_{1}$ and $A_{2}$ have different sets of registers and that the automaton $A$ constructed from $A_{1}$ and $A_{2}$ retains all registers of both $A_{1}$ and $A_{2}$.
For example, 
if we have two \sra\ $A_{1}$ and $A_{2}$ and we want to construct a \sra\ $A$ such that $\mathcal{L}(A) = \mathcal{L}(A_{1}) \cdot \mathcal{L}(A_{2})$ then we connect $A_{1}$'s final state to $A_{2}$'s start state via an $\epsilon$ transition.
It is easy to see that if $S_{1} \in \mathcal{L}(A_{1})$ and $S_{2} \in \mathcal{L}(A_{2})$ then $S = S_{1} \cdot S_{2} \in \mathcal{L}(A)$.
$S_{1}$ will force $A$ to move to $A_{1}'s$ final state 
(both $A$ and $A_{1}$ start with empty registers).
Subsequently, $A$ will jump to $A_{2}$'s start state and then $S_{2}$ will force $A$ to go to $A_{2}$'s final state which is $A$'s final state,
since $A_{2}$'s registers in $A$ are empty when $A_{2}$ starts reading $S_{2}$. 

We will now prove closure under intersection.
Let $A_{1} = (Q_{1},q_{1,s},Q_{1,f},R_{1},\Delta_{1})$ and $A_{2} = (Q_{2},q_{2,s},Q_{2,f},R_{2},\Delta_{2})$ be two \sra.
We wan to construct a \sra\ $A = (Q,q_{s},Q_{f},R,\Delta)$ such that 
$\mathcal{L}(A) = \mathcal{L}(A_{1}) \cap \mathcal{L}(A_{2})$.
We construct $A$ as follows:
\begin{itemize}
	\item $Q = Q_{1} \times Q_{2}$.
	\item $q_{s} = (q_{1,s},q_{2,s})$.
	\item $Q_{f} = (q_{1},q_{2})$, where $q_{1} \in Q_{1,f}$ and $q_{2} \in Q_{2,f}$,
	i.e., $Q_{f} = Q_{1,f} \times Q_{2,f}$.
	\item $R = R_{1} \cup R_{2}$, assuming, without loss of generality, that $R_{1} \cup R_{2} = \emptyset$.
	\item For each $q = (q_{1},q_{2}) \in Q$ we add a transition $\delta$ to $q' = (q_{1}',q_{2}') \in Q$ if there exists a transition $\delta_{1}$ from $q_{1}$ to $q_{1}'$ in $A_{1}$ and a transition $\delta_{2}$ from $q_{2}$ to $q_{2}'$ in $A_{2}$.
	The condition of $\delta$ is $\phi = \delta_{1}.\phi \wedge \delta_{2}.\phi$.
	The write registers of $\delta$ are $W = \delta_{1}.W \cup \delta_{2}.W$ 
	(notice that, 
	if $\delta_{1}.W \neq \emptyset$ and $\delta_{2}.W \neq \emptyset$, 
	this creates a multi-register \sra,
	even if $A_{1}$ and $A_{2}$ are single-register).
	Thus, $\delta = (q_{1},q_{2}),(\delta_{1}.\phi \wedge \delta_{2}.\phi) \downarrow (\delta_{1}.W \cup \delta_{2}.W) \rightarrow (q_{1}',q_{2}')$.
\end{itemize}
It is evident that, 
if a string $S$ is accepted by both $A_{1}$ and $A_{2}$,
it is also accepted by $A$.
If $A$ is not accepted either by $A_{1}$ or $A_{2}$,
then it is not accepted by $A$.
Therefore,
$\mathcal{L}(A) = \mathcal{L}(A_{1}) \cap \mathcal{L}(A_{2})$.

Since \sra\ and \srem\ are equivalent,
\srem\ are also closed under union, intersection, concatenation and Kleene-star.
\end{proof}

\subsection{Proof of Theorem \ref{theorem:complement}}
\label{sec:proof:complement}
\ifdefined\FI
\begin{theorem}
\sra\ and \srem\ are not closed under complement.
\end{theorem}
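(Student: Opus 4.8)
The plan is to exhibit a concrete \sra\ (equivalently a \srem) whose complement cannot be recognized by any \sra, mirroring the classical argument that register automata are not closed under complement \cite{DBLP:journals/tcs/KaminskiF94,DBLP:journals/jcss/LibkinTV15}. The key point is that register automata — and hence \sra, since they subsume register automata as a special case (only equality/inequality relations) — can recognize the language $L$ of strings in which \emph{some} element occurs twice (``store the first element of a matching pair in a register, skip arbitrarily, then check equality''), but they \emph{cannot} recognize the complement $\overline{L}$, the language of strings in which \emph{all} elements are pairwise distinct. Intuitively, recognizing $\overline{L}$ would require an \sra\ to ``remember'' every element seen so far in order to compare each new element against all of them, but an \sra\ has only a fixed finite number $k$ of registers, so on a sufficiently long string it must forget some earlier element and can be fooled.

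First I would fix a $\mathcal{V}$-structure $\mathcal{M}$ with an infinite universe $\mathcal{U}$ equipped (besides $\top$) with the binary equality relation $=$, so that the condition $\mathit{Eq}(\sim, r_i) := (\sim = r_i)$ is available; note every relation here is decidable, so this is a legitimate instance of our framework. Then I would write down the \srem
\[
e := (\top \downarrow r_1) \cdot (\top)^{*} \cdot \mathit{Eq}(\sim, r_1)
\]
(or a symmetric variant that also allows the repeated element not to be the first one; a single register already suffices for the ``first element repeats'' version, and a finite disjunction over ``which earlier position repeats'' is \emph{not} expressible, which is precisely the crux). By Theorem~\ref{theorem:srem2sra} there is an \sra\ $A$ with $\mathcal{L}(A) = \mathcal{L}(e) = L$, where $L$ is (a sublanguage related to) ``some element repeats''. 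Suppose, for contradiction, that \sra\ were closed under complement; then there is an \sra\ $A_c$ with $\mathcal{L}(A_c) = \overline{\mathcal{L}(A)}$, i.e.\ $A_c$ accepts exactly the strings with all distinct elements (intersected with the relevant shape).

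Next I would run a pumping/pigeonhole argument against $A_c$. Let $k$ be the number of registers of $A_c$ and $n$ the number of states; pick a string $S = u_1 u_2 \cdots u_N$ of length $N$ much larger than any function of $k$ and $n$, with all $u_i$ distinct elements of $\mathcal{U}$, so $S \in \overline{L}$ and $A_c$ has an accepting run $\varrho$ on $S$. Along $\varrho$, at each configuration the register valuation holds at most $k$ of the elements read so far; since $N \gg k$, there is some position whose element is no longer stored in any register and will never again be written. Using the fact that conditions in $A_c$ can only compare the current element to the (at most $k$) stored elements and to $\top$, I would argue that $A_c$'s run is ``blind'' to the exact identity of that forgotten element: replacing it by a fresh copy of some other $u_j$ (making the string now contain a repeated element, so it should be \emph{rejected}) yields a string on which $A_c$ follows the same sequence of states and reaches the same final state, hence still accepts — contradiction. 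The careful version of this ``indistinguishability'' lemma is the technical heart: I must track how the valuation evolves and show that the two runs stay synchronized precisely because no transition ever tests the forgotten position again, except possibly via $\top$, which is insensitive to identity.

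The main obstacle I expect is making the indistinguishability step fully rigorous in the presence of arbitrary (not just equality) relations appearing in $A_c$: a malicious $A_c$ might use some exotic unary relation $P(\sim)$ that happens to separate the fresh element from the substituted one. I would handle this by choosing the universe and the substituted elements carefully — e.g.\ work over an infinite set of ``indistinguishable'' data values (formally, take $\mathcal{U}$ to be an infinite set on which the only available relations are $\top$, $=$ and $\neq$, matching exactly the register-automata setting so that the classical non-closure result transfers verbatim), and note that since our framework allows restricting the vocabulary, this special case already witnesses non-closure for \sra\ in general. Since the result need only say \sra\ are \emph{not in general} closed under complement, exhibiting this one bad $\mathcal{V}$-structure suffices, and the detailed pumping argument is exactly the one in \cite{DBLP:journals/jcss/LibkinTV15} for register automata, which applies because on this restricted vocabulary \sra\ coincide with register automata.
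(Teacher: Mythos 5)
Your approach is essentially the paper's: both exhibit the language ``some two elements agree on a value/attribute'' (the paper uses agreement on $\mathit{type}$), which a \sra\ recognizes by nondeterministically storing the first element of a matching pair in one register, and both then argue by a pigeonhole on the $k$ registers of a hypothetical complement automaton $A_c$ that the complement language ``all elements pairwise distinct'' is not recognizable. If anything, your write-up is more careful than the paper's, which simply asserts that $A_c$, before reading the last element of a long all-distinct string, ``must have stored'' all $k+m-1$ previous elements in its $k$ registers; the indistinguishability lemma you sketch (find a forgotten position, substitute a repeated value, check that the accepting run survives) is exactly the justification that assertion needs, and your worry about exotic relations in the vocabulary is legitimate and correctly dispatched by restricting to an equality-only $\mathcal{V}$-structure.

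One concrete slip to fix: the witness you display, $e := (\top \downarrow r_1)\cdot(\top)^{*}\cdot \mathit{Eq}(\sim,r_1)$, denotes ``the last element equals the first'', and even the variant with a trailing $(\top)^{*}$ denotes only ``the first element recurs''. Both of these have \sra-recognizable complements using a single register (e.g.\ $\epsilon + \top + (\top\downarrow r_1)\cdot(\top)^{*}\cdot\neg\mathit{Eq}(\sim,r_1)$ for the former, since conditions are closed under negation), so neither witnesses non-closure. You need the leading $(\top)^{*}$ as well, i.e.\ $e := (\top)^{*}\cdot(\top\downarrow r_1)\cdot(\top)^{*}\cdot\mathit{Eq}(\sim,r_1)\cdot(\top)^{*}$, so that $\mathcal{L}(e)$ is genuinely ``some element occurs twice''. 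Your surrounding prose describes this correct language, but the displayed expression does not, and the parenthetical suggesting the ``first element repeats'' version already suffices is the one point where the argument, taken literally, would fail.
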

\else
\begin{theorem*}
\sra\ and \srem\ are not closed under complement.
\end{theorem*}
\fi

\begin{proof}
The proof is by a counter example.
\begin{figure}[t]
\centering
\includegraphics[width=0.75\textwidth]{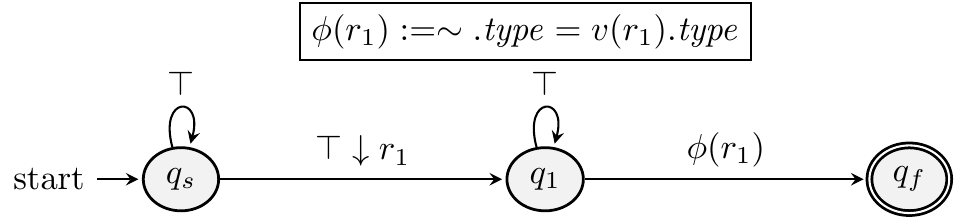}
\caption{\sra\ accepting strings which have the same type in two elements. Notice that $\sim$ denotes the current event (last event read from the string).}
\label{fig:complement}
\end{figure}
Let $A$ denote the \sra\ of Figure \ref{fig:complement}.
This \sra\ reads strings composed of tuples.
Each tuple contains an attribute called $\mathit{type}$, 
taking values from a finite or infinite alphabet.
The symbol $\sim$ simply denotes the current element of the string,
i.e., the last element read from it.
Therefore, $A$ accepts strings in which there are two elements with the same type,
regardless of the length of $S$.
Assume that there exists a \sra\ $A_{c}$ which accepts only when $A$ does not accept.
In other words, 
$A_{c}$ accepts all strings $S$ whose elements all have a different type.
Let $k = \lvert A_{c}.R \rvert$ be the number of registers of $A_{c}$.
Let $\lvert S \rvert =  k+m$,
where $m > 1$,
be the length of a string $S$ whose elements all have different types.
However, $A_{c}$ cannot possibly exist.
At the end of $S$,
as $A_{c}$ is ready to read the last element of $S$,
it must have stored all of the previous $k+m-1$ elements of $S$.
But $A$ has only $k$ registers,
whereas $k+m-1 > k$,
since $m>1$.
Thus, $A_{c}$ cannot exist.
\end{proof}

\subsection{Proof of Theorem \ref{theorem:determinization}}
\label{sec:proof:determinization}
\ifdefined\FI
\begin{theorem}
\sra\ are not closed under determinization.
\end{theorem}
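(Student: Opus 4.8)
The plan is to exhibit a concrete \sra\ $A$ for which no deterministic \sra\ can recognize $\mathcal{L}(A)$, mirroring the classical argument for register automata (see \cite{DBLP:journals/jcss/LibkinTV15}). First I would reuse the language from the proof of Theorem~\ref{theorem:complement}: strings over tuples carrying a $\mathit{type}$ attribute such that \emph{some} two elements share the same type. A nondeterministic \sra\ recognizes this with a single register --- it loops, nondeterministically guesses the position of the first of the two matching elements, stores its type in $r_{1}$, then loops again until it reads another element whose type equals $v(r_{1})$, at which point it moves to a final state. The essential feature is that the guess of \emph{which} element to store is made nondeterministically; a deterministic \sra\ has no such freedom.

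The core of the argument is a pigeonhole/information-theoretic lower bound on the number of registers. Suppose, for contradiction, that a \dsra\ $A_{d}$ with $k$ registers satisfies $\mathcal{L}(A_{d}) = \mathcal{L}(A)$. Consider a string $S = s_{1}, s_{2}, \dots, s_{k+1}, s_{k+2}$ whose first $k+1$ elements all carry pairwise distinct types and whose last element $s_{k+2}$ has the same type as some $s_{i}$ with $i \le k+1$. For \emph{every} choice of $i \in \{1,\dots,k+1\}$ the resulting string lies in $\mathcal{L}(A)$, so $A_{d}$ must have an accepting run on each. Since $A_{d}$ is deterministic, there is exactly \emph{one} run on the common prefix $s_{1},\dots,s_{k+1}$, hence a single valuation $v$ reached after consuming that prefix. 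This valuation stores at most $k$ elements, so by pigeonhole there is at least one index $i_{0} \le k+1$ with $v(r_{j}) \ne s_{i_{0}}$ for all $j$. I would then argue that the last transition triggered on reading $s_{k+2}$ cannot distinguish the ``good'' case (where $s_{k+2}$ duplicates $s_{i_{0}}$, so the string should be accepted) from a ``bad'' case where $s_{k+2}$ carries a type appearing among none of $s_{1},\dots,s_{k+1}$ (so the string should be rejected), because in both cases the current element $s_{k+2}$ together with the register contents $v$ present the same data pattern of (in)equalities --- $s_{k+2}$ differs from every stored value. Hence $A_{d}$ either accepts a string not in $\mathcal{L}(A)$ or rejects one in it, a contradiction.

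The main obstacle is making the indistinguishability step airtight, since a \sra\ condition need not be a pure (in)equality predicate --- it may be an arbitrary relation from the $\mathcal{V}$-structure, which could in principle inspect the actual type values rather than just their equality pattern. To handle this I would restrict the counterexample to a $\mathcal{V}$-structure whose only relation is type-equality (so that \sra\ conditions over it genuinely depend only on the equality type), which suffices since closure under determinization is a universally quantified claim and failure for one structure refutes it; alternatively, I would invoke an automorphism/indistinguishability argument on the data domain, choosing the fresh type of $s_{k+2}$ generically so that no relation in $\mathcal{M}$ separates it from $s_{i_{0}}$ relative to the bounded set of stored values. Either route reduces the problem to the familiar register-automaton bound, and the rest is the pigeonhole counting above. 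The remaining details --- that a \dsra\ has a unique run (noted after the definition of \dsra), and that $\epsilon$-transitions can be eliminated first by Lemma~\ref{lemma:epsilon} so ``consuming the prefix'' is well defined --- are routine.
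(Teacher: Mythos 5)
Your proposal is correct and follows essentially the same route as the paper: a counterexample language requiring a later element to match an arbitrarily chosen earlier one, recognized nondeterministically with one register, refuted for any $k$-register \dsra\ by a pigeonhole argument on a prefix of $k+1$ pairwise distinct data values (the paper uses ``an $a$ followed by a $b$ with equal $\mathit{id}$'' rather than ``some two elements of equal type,'' but the argument is identical). If anything, your treatment is more careful than the paper's, since you make explicit the indistinguishability step for the unstored element and the need to restrict the $\mathcal{V}$-structure (or use an automorphism argument) so that conditions cannot circumvent the equality pattern.
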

\else
\begin{theorem*}
\sra\ are not closed under determinization.
\end{theorem*}
\fi

\begin{proof}
The proof is again by a counter example.
\begin{figure}[t]
\centering
\includegraphics[width=0.75\textwidth]{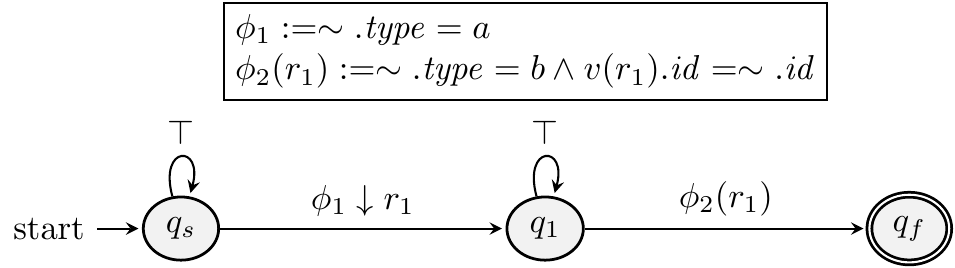}
\caption{\sra\ accepting all strings containing an $a$ element followed by a $b$ element, whose identifiers are the same.}
\label{fig:determinization}
\end{figure}
Let $A$ denote the \sra\ of Figure \ref{fig:determinization}.
This \sra\ reads strings composed of tuples.
Each tuple contains an attribute, 
called $\mathit{type}$, 
taking values from a finite or infinite alphabet.
It also contains another tuple, 
called $\mathit{id}$,
taking integer values.
$A$ thus accepts strings $S$ that contain an $a$ followed by a $b$,
whose ids are equal,
regardless of the length of $S$. 

Assume there exist a \dsra\ $A_{d}$ with $k$ registers which is equivalent to $A$.
Let 
\begin{equation*}
S = (a, 1) (b, 2) 
\end{equation*}
be a string given to $A_{d}$.
After reading $S_{1}=(a,1)$,
$A_{d}$ must store it in a register $r_{1}$ in order to be able to compare it when $(b,2)$ arrives.
Let 
\begin{equation*}
S' = (a, 1) (a, 3) (b, 2) 
\end{equation*}
After reading $S_{1}'=(a,1)$,
$A_{d}$ must store it in the register $r_{1}$,
since $A_{d}$ is deterministic and follows a single run.
Thus, it must have the exact same behavior after reading $s_{1}$ and $S_{1}'$.
But we must also store $S_{2}'=(a,3)$ after reading it.
Additionally,
$S_{2}'$ must be stored in a different register $r_{2}$.
We cannot overwrite $r_{1}$.
If we did this and $S_{1}'$ were $(a,2)$,
then we would not be able to match $(a,2)$ to $S_{3}'=(b,2)$ and $S'=(a,2)(a,3)(b,2)$ would not be accepted.
Now, let
\begin{equation*}
S'' = \underbrace{(a, \cdots) (a, \cdots) \cdots (a, \cdots)}_{k+1 \text{ elements}}  (b, 2) 
\end{equation*}
With a similar reasoning,
all of the first $k+1$ elements of $S''$ must be stored after reading them.
But this is a contradiction,
as $A_{d}$ can store at most $k$ different elements.
Therefore, there does not exist a \dsra\ which is equivalent to $A$.
\end{proof}

\subsection{Structural Properties of \sra}
\label{sec:structural_properties}
In the proofs that follow,
we will need to refer to some structural properties of \sra.
We present them here.
Without loss of generality,
we assume that each state of a \sra\ is accessible from its start state.
Inaccessible states can always be removed without affecting the behavior of an automaton.
Therefore, a \sra,
in terms of its structure,
can be viewed as a weakly connected directed graph.
The usual notions of walks and trails from graph theory also apply for a \sra.
However,
since we are interested in walks and trails from its start state,
and in order to avoid introducing new notation and terminology,
in what follows,
we will stick to the already introduced terms.
We will talk about states, instead of nodes/vertices,
and about transitions, 
instead of edges.

\begin{definition}[Walk over \sra]
A walk $w$ over a \sra\ $A$ is a sequence of transitions \linebreak $w=<\delta_{1},\cdots,\delta_{k}>$,
such that:
\begin{itemize}
	\item $\forall \delta_{i}\ \delta_{i} \in A.\Delta$
	\item $\delta_{1}.\mathit{source} = A.q_{s}$
	\item $\forall \delta_{i},\delta_{i+1}\ \delta_{i}.\mathit{target}=\delta_{i+1}.\mathit{source}$
\end{itemize}
We say that such a walk is of length $k$.
By $W_{A}$ we denote the set of all walks over $A$ and by $W_{A,q}$, we denote the set of walks over $A$ that end in state $q$, i.e.,
$W_{A,q}=\{w: w \in W_{A} \wedge \delta_{k}.\mathit{target}=q \}$.
\end{definition}

\begin{definition}[Trail over \sra]
A trail $t$ over a \sra\ $A$ is a walk $w=<\delta_{1},\cdots,\delta_{k}>$ over $A$, such that:
\begin{itemize}
	\item $\forall \delta_{i},\delta_{j}\ \delta_{i}.\mathit{source} \neq \delta_{j}.\mathit{source}$
	\item $\forall \delta_{i}\ \delta_{i}.\mathit{source} \neq \delta_{i}.\mathit{target}$
\end{itemize}
If $T_{A}$ is the set of all trails over $A$,
$T_{A,q}$ is the set of all trails ending in state $q$, i.e.,
$T_{A,q}=\{t: t \in T_{A} \wedge \delta_{k}.\mathit{target}=q \}$.
\end{definition}
In other words, a trail is a walk without state revisits (and, as a consequence, without transition revisits).

\begin{proposition}[Every walk contains a trail]
\label{proposition:trails_in_walks}
For every walk to a non-start state $q$ ($w=<\delta_{1},\cdots,\delta_{k}> \in W_{A,q}$, $q \in A.Q \setminus \{A.q_{s}\}$),
there exists a trail to $q$ ($t=<\delta_{1}^{'},\cdots,\delta_{l}^{'}> \in T_{A,q}$)
such that all transitions of the trail ($\delta_{i}^{'}$) appear in the walk $w$ in the same order as in the trail, i.e.,
$w$ can be written as $w=<\cdots,\delta_{1}^{'},\cdots,\delta_{l}^{'},\cdots>$.
We say that $t$ is contained in $w$.
\end{proposition}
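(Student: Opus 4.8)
The plan is to prove this by induction on the length $k$ of the walk, or equivalently by a repeated "shortcutting" argument that removes cycles. The key observation is that a walk fails to be a trail precisely when some state is revisited, i.e., when there are indices $i < j$ with $\delta_i.\mathit{source} = \delta_j.\mathit{source}$ (this also covers the self-loop case $\delta_i.\mathit{source} = \delta_i.\mathit{target}$, taking $j = i+1$). Whenever this happens, the sub-walk $<\delta_i, \ldots, \delta_{j-1}>$ forms a closed loop that starts and ends at the same state, and we may delete it from the walk. After deletion, the remaining sequence $<\delta_1, \ldots, \delta_{i-1}, \delta_j, \ldots, \delta_k>$ is still a walk over $A$: the first transition still starts at $A.q_s$, consecutive transitions still match up because $\delta_{i-1}.\mathit{target} = \delta_i.\mathit{source} = \delta_j.\mathit{source}$, and the last transition is unchanged so it still ends at $q$. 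Crucially, this shortened walk is strictly shorter and every one of its transitions already appeared in the original walk $w$, in the same relative order.

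First I would set up the induction on walk length. For the base case, a walk of length $1$ to a non-start state $q$ is $<\delta_1>$ with $\delta_1.\mathit{source} = A.q_s$ and $\delta_1.\mathit{target} = q \neq A.q_s$; since it has only one transition, there are no repeated sources, and since $\delta_1.\mathit{source} = A.q_s \neq q = \delta_1.\mathit{target}$ there is no self-loop, so the walk is already a trail and is trivially contained in itself. For the inductive step, given a walk $w$ of length $k$ to $q$: if $w$ is already a trail, we are done, taking $t = w$. Otherwise, as described above, we excise a cycle to obtain a strictly shorter walk $w'$ to $q$; by the induction hypothesis $w'$ contains a trail $t$ to $q$ all of whose transitions appear in $w'$ in order. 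Since all transitions of $w'$ appear in $w$ in the same order, the same holds for the transitions of $t$, so $t$ is contained in $w$, completing the induction.

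The only point that needs a little care — and the main (mild) obstacle — is checking that the excision genuinely produces a valid walk and that it reduces the length, so that the induction is well-founded; one must verify the source/target matching at the "seam" where $\delta_{i-1}$ is glued to $\delta_j$, which works because all three of $\delta_{i-1}.\mathit{target}$, $\delta_i.\mathit{source}$, and $\delta_j.\mathit{source}$ coincide. One should also note that the resulting trail ends at $q$ and starts at $A.q_s$ (so that the Trail definition is genuinely satisfied): the endpoint $q$ is preserved because we never touch the last transition, and the startpoint is preserved because we only ever delete an internal cycle, never the initial transition itself (if the revisited state were $A.q_s = \delta_1.\mathit{source}$, the excised block would start at index $1$, and after deletion the new first transition is $\delta_j$ with $\delta_j.\mathit{source} = A.q_s$, so this is still fine). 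Nothing here is heavy; it is essentially the classical graph-theoretic fact that every walk contains a path, transported verbatim to the \sra\ setting, and the proof is a routine induction once the shortcutting step is stated precisely.
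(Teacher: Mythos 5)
Your proof is correct and follows essentially the same route as the paper's: induction on the walk length, with the inductive step excising the sub-walk between two visits of a repeated state and verifying that the seam $\delta_{i-1}.\mathit{target}=\delta_j.\mathit{source}$ still yields a valid, strictly shorter walk to $q$. Your additional checks (that the endpoint and start state are preserved, and that containment is transitive through the shortened walk) are exactly the points the paper's proof also relies on, so nothing is missing.
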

\begin{proof}
The proof is by induction on the length of the walk.
The proposition trivially holds for walks of length $k=1$.
For walks of length $k+1$,
if $w$ is already a trail, then the proposition holds for $k+1$.
If $w$ is not a trail, then a state is visited at least twice.
Removing all transitions between these two visits results in a walk
for which, by the induction hypothesis,
the proposition already holds.
Therefore, it holds for the complete walk too.

We give a detailed proof by induction on the length of the walk.
\begin{itemize}
	\item Base case for $k=1$. 
	If $w=<\delta_{1}>$ is a walk to $q$, then, $t=<\delta_{1}>$ is also a trail to $q$,
	since $q \neq q^{s}$.
	\item Assume 
	\begin{equation*}
	w = q_{1} \overset{\delta_{1}}{\rightarrow} q_{2} \cdots q_{i} \overset{\delta_{i}}{\rightarrow} q_{i+1} \cdots
	q_{j} \overset{\delta_{j}}{\rightarrow} q_{j+1} \cdots q_{k} \overset{\delta_{k}}{\rightarrow} q_{k+1} \overset{\delta_{k+1}}{\rightarrow} q_{k+2}
	\end{equation*}		
	is a walk of length $k+1$ (where we use a slightly different notation to explicitly show the visited states).
	\begin{itemize}
		\item If $w$ is already a trail, then the proposition holds for $k+1$.
		\item If $w$ is not a trail, then a state is visited at least twice.
		Assume that it is $q_{i}$, visited again as $q_{j}$, i.e., $q_{i}=q_{j}$.
		Remove from $w$ all transitions $\delta_{l}$, $i \leq l < j$.
		Then we get 
		\begin{equation*}
		w' = q_{1} \overset{\delta_{1}}{\rightarrow} q_{2} \cdots q_{i} \overset{\delta_{j}}{\rightarrow} q_{j+1} \cdots q_{k} \overset{\delta_{k}}{\rightarrow} q_{k+1} \overset{\delta_{k+1}}{\rightarrow} q_{k+2}
		\end{equation*}
		or, equivalently, since $q_{i}=q_{j}$
		\begin{equation*}
		w' = q_{1} \overset{\delta_{1}}{\rightarrow} q_{2} \cdots q_{j} \overset{\delta_{j}}{\rightarrow} q_{j+1} \cdots q_{k} \overset{\delta_{k}}{\rightarrow} q_{k+1} \overset{\delta_{k+1}}{\rightarrow} q_{k+2}
		\end{equation*}
		Notice that $w'$ is indeed a walk, since all its transitions are valid,
		including the one that stitches together the two sub-walks ($q_{j} \overset{\delta_{j}}{\rightarrow} q_{j+1}$).
		Moreover, its length is at most $k$, since we removed at least one transition.
		Therefore, by the induction hypothesis, there exists a trail $t'$ to $q_{k+2}$,
		contained in $w'$. 
		But $t'$ is also contained in $w$. 
		Therefore, $t'$ is a trail to $q_{k+2}$ contained in $w$ and the proposition holds for walks of length $k+1$ as well.
	\end{itemize}
\end{itemize}
\end{proof}

\begin{definition}[Register appearance in a trail]
We say that a register $r$ appears in a trail if there exists at least one transition $\delta$ in the trail such that $r \in \delta.W$.
\end{definition}
In other words,
a trail must write to $r$ to say that it appears in it.

Notice that a run of $A$ over a stream $S$ induces a walk over $A$.
\begin{definition}[Walk induced by a run]
If 
\begin{equation*}
\varrho=[1,q_{s},v_{1}] \overset{\delta_{1}}{\rightarrow} [2,q_{2},v_{2}] \overset{\delta_{2}}{\rightarrow} \cdots \overset{\delta_{n-1}}{\rightarrow} [n,q_{n},v_{n}]
\end{equation*}
is a run of a \sra\ $A$,
then
\begin{equation*}
w_{\varrho} = <\delta_{1},\cdots,\delta_{n-1}>
\end{equation*}
is the walk induced by $\varrho$.
\end{definition}

\subsection{Proof of Lemma \ref{lemma:windowed_srem}}
\label{sec:proof:windowed_srem}
\ifdefined\FI
\begin{theorem}
For every windowed \srem\ there exists an equivalent unrolled \sra\ without any loops, 
i.e., a \sra\ where each state may be visited at most once.
\end{theorem}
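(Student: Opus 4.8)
The plan is to start from an ordinary (possibly loop-ful) \sra\ for $e$ and ``unroll'' it up to depth $w$. First I would build a \sra\ $A_{e,\epsilon}$ with $\mathcal{L}(A_{e,\epsilon}) = \mathcal{L}(e)$ via Theorem~\ref{theorem:srem2sra}, remove its $\epsilon$-transitions using Lemma~\ref{lemma:epsilon} to obtain $A_e$, and, if convenient, convert it to a single-register \sra\ via Lemma~\ref{lemma:multi2single}. The structural fact I would exploit is that, since $A_e$ has no $\epsilon$-transitions, each transition consumes exactly one character, so every run of $A_e$ over a string $S$ has exactly $\lvert S \rvert$ transitions and induces a walk over $A_e$ of length $\lvert S \rvert$ (in the sense of Section~\ref{sec:structural_properties}). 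Because $\mathcal{L}(e') = \{S : S \in \mathcal{L}(A_e) \text{ and } \lvert S \rvert \le w\}$ by Definition~\ref{definition:windowed_srem}, it is enough to capture exactly the strings $A_e$ accepts through an induced walk of length at most $w$.

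\textbf{Construction.} Next I would enumerate the finite set $\mathcal{P}$ of walks $p = \langle \delta_1, \dots, \delta_m \rangle$ over $A_e$ with $m \le w$ and $\delta_m.\mathit{target} \in A_e.Q_f$ (together with the degenerate case $\epsilon \in \mathcal{L}(e)$, handled by making the start state of the result final). For each $p$ I would build a simple directed path whose states $q_0^p, \dots, q_m^p$ are fresh clones of the states visited by $p$, whose transition clones carry the same conditions $\delta_i.\phi$, and whose registers are fresh clones, one per write occurring along $p$; a reference in $\delta_i.\phi$ to a register $r$ is rewired to point at the clone created by the \emph{last} transition $\delta_j$ with $j < i$ and $r \in \delta_j.W$, and is left undefined if no such $j$ exists (matching the empty-valuation start $\sharp$ of $A_e$). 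I would then merge the start states $q_0^p$ of all paths into one start state and collapse common prefixes, so that walks agreeing on an initial segment share the corresponding state/transition/register clones. The outcome $A_{e'}$ is a finite trie rooted at the start state; being acyclic, no state is visited twice in any run, so $A_{e'}$ is the desired loop-free unrolled \sra.

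\textbf{Equivalence.} To prove $\mathcal{L}(A_{e'}) = \mathcal{L}(e')$ I would argue both inclusions by induction on run length. For $\subseteq$: an accepting run of $A_{e'}$ traverses one root-to-leaf path, a (prefix-merged) copy of some $p \in \mathcal{P}$ with $m \le w$ transitions; mapping clones back to originals and reading off the valuations along the run --- which match those of $A_e$ under the clone map precisely because each register read was redirected to the clone holding the most recently written value --- yields an accepting run of $A_e$ over the same $S$ with $\lvert S \rvert = m \le w$, hence $S \in \mathcal{L}(e')$. For $\supseteq$: given $S \in \mathcal{L}(e')$, so $S \in \mathcal{L}(A_e)$ and $\lvert S \rvert \le w$, an accepting run of $A_e$ over $S$ induces some $p \in \mathcal{P}$; the corresponding path exists in $A_{e'}$, and I would lift the run transition-by-transition along it, tracking valuations through the clone map exactly as before, obtaining an accepting run of $A_{e'}$ over $S$.

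\textbf{Main obstacle.} The hard part is the register rewiring and its interaction with prefix collapsing. In $A_e$ a register can be overwritten arbitrarily often inside a cycle, but under the bound $w$ only at most $w$ writes occur on any accepted run, so each visit must become a separate clone to keep the register set finite; the conditions must then be redirected to exactly the clone that carries the value $A_e$ would hold at that point, i.e.\ the clone of the last preceding write along the path. I would need to verify that this predecessor write is well defined (unique, or absent), that it depends only on the prefix up to the transition in question --- which is what makes collapsing shared prefixes consistent --- and that the induced correspondence between length-$\le w$ runs of $A_e$ and runs of $A_{e'}$ preserves both the consumed string and acceptance. Granting that, the finiteness of $\mathcal{P}$, the acyclicity of the trie, and the two language inclusions are routine.
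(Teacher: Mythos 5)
Your proposal is correct and follows essentially the same route as the paper: build the \sra\ for $e$, eliminate $\epsilon$-transitions, convert to single-register form, then unroll into a trie of walks of length at most $w$ with fresh state and register clones, rewiring each register read to the clone of the last preceding write along the (unique) path, and prove equivalence by induction on run length via the clone map. The paper's algorithm constructs the trie incrementally by frontier expansion rather than enumerating walks and collapsing prefixes, but the paper itself presents these as two descriptions of the same construction, and your register-rewiring step is exactly its $\mathit{FindLastAppearance}$ mechanism.
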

\else
\begin{theorem*}
For every windowed \srem\ there exists an equivalent unrolled \sra\ without any loops, 
i.e., a \sra\ where each state may be visited at most once.
\end{theorem*}
\fi

\begin{proof}

Let $e_{w} := e^{[1..w]}$.
Algorithm \ref{algorithm:wsrem2sra} shows how we can construct $A_{e_{w}}$.
The basic idea is that we first construct as usual the \sra\ $A_{e}$ for the sub-expression $e$
(and eliminate $\epsilon$-transitions).
We can then use $A_{e}$ to enumerate all the possible walks of $A_{e}$ of length up to $w$ and then join them in a single \sra\ through disjunction.
Essentially,
we need to remove cycles from every walk of $A_{e}$ by ``unrolling'' them as many times as necessary,
without the length of the walk exceeding $w$.
This ``unrolling'' operation is performed by the (recursive) Algorithm \ref{algorithm:unroll_cycles}.
Because of this ``unrolling'',
a state of $A_{e}$ may appear multiple times as a state in $A_{e_{w}}$.
We keep track of which states of $A_{e_{w}}$ correspond to states of $A_{e}$ through the function $\mathit{CopyOfQ}$ in the algorithm.
For example, if $q_{e}$ is a state of $A_{e}$, $q_{e_{w}}$ a state of $A_{e_{w}}$ and
$\mathit{CopyOfQ}(q_{e_{w}}) = q_{e}$,
this means that $q_{e_{w}}$ was created as a copy of $q_{e}$
(and multiple states of $A_{e_{w}}$ may be copies of the same state of $A_{e}$,
i.e., $\mathit{CopyOfQ}$ is a surjective but not an injective function).
We do the same for the registers as well, 
through the function $\mathit{CopyOfR}$.
The algorithm avoids an explicit enumeration,
by gradually building the automaton as needed,
through an incremental expansion.
Of course, walks that do not end in a final state may be removed,
either after the construction or online,
whenever a non-final state cannot be expanded.

\begin{algorithm}
\SetAlgoNoLine
\KwIn{Windowed \srem\ $e' := e^{[1..w]}$}
\KwOut{\sra\ $A_{e'}$ equivalent to $e'$}
$A_{e,\epsilon} \leftarrow ConstructSRA(e)$; \tcp{{\footnotesize As described in Appendix \ref{sec:proof:srem2sra}.}}\
$A_{e,ms} \leftarrow \mathit{EliminateEpsilon}(A_{e,\epsilon})$; \tcp{{\footnotesize See Algorithm \ref{algorithm:epsilon}. $A_{e,ms}$ might be multi-register.}}\
$A_{e} \leftarrow \mathit{ConvertToSingleRegister}(A_{e,ms})$; \tcp{{\footnotesize As described in Appendix \ref{sec:proof:multi2single}.}}\
$A_{e'} \leftarrow Unroll(A_{e},w)$; \tcp{{\footnotesize See Algorithm \ref{algorithm:unroll_cycles}.}}\ 
$\mathtt{return}\ A_{e'}$\;
\caption{Constructing \sra\ for a windowed \srem\ ($\mathit{ConstructWSRA}$).}
\label{algorithm:wsrem2sra}
\end{algorithm}

\begin{algorithm}
\SetAlgoNoLine
\KwIn{\sra\ $A$ and integer $k \geq 0$}
\KwOut{\sra\ $A_{k}$ with runs of length up to $k$}
\eIf{$k=0$}{
	$(A_{k},\mathit{Frontier},\mathit{CopyOfQ},\mathit{CopyOfR}) \leftarrow \mathit{Unroll0}(A)$;
	\tcp{{\footnotesize Algorithm \ref{algorithm:unroll_cycles0}}}
}
{
	$(A_{k},\mathit{Frontier},\mathit{CopyOfQ},\mathit{CopyOfR}) \leftarrow \mathit{UnrollK}(A,k)$;
	\tcp{{\footnotesize Algorithm \ref{algorithm:unroll_cyclesk}}}
}
$\mathtt{return}\ (A_{k},\mathit{Frontier},\mathit{CopyOfQ},\mathit{CopyOfR})$\;
\caption{Unrolling cycles for windowed \srem\ ($\mathit{Unroll}$).}
\label{algorithm:unroll_cycles}
\end{algorithm}

\begin{algorithm}
\SetAlgoNoLine
\KwIn{\sra\ $A$}
\KwOut{\sra\ $A_{0}$ with runs of length 0}
$q \leftarrow \mathit{CreateNewState}()$\;
$\mathit{CopyOfQ} \leftarrow \{q \rightarrow A.q_{s}\}$\; \label{line:unroll_cycles:qs_descendent}
$\mathit{CopyOfR} \leftarrow \emptyset$\;
$\mathit{Frontier} \leftarrow \{q\}$\;
$Q_{f} \leftarrow \emptyset$\;
\If{$A.q_{s} \in A.Q_{f}$}{
	$Q_{f} \leftarrow Q_{f} \cup \{q\}$\;
}
$A_{0} \leftarrow (\{q\},q,Q_{f},\emptyset,\emptyset)$\;
$\mathtt{return}\ (A_{0},\mathit{Frontier},\mathit{CopyOfQ},\mathit{CopyOfR})$\;
\caption{Unrolling cycles for windowed \srem, base case ($\mathit{Unroll0}$).}
\label{algorithm:unroll_cycles0}
\end{algorithm}

\begin{algorithm}
\KwIn{\sra\ $A$ and integer $k > 0$}
\KwOut{\sra\ $A_{k}$ with runs of length up to $k$}
$(A_{k-1},\mathit{Frontier},\mathit{CopyOfQ},\mathit{CopyOfR}) \leftarrow \mathit{Unroll}(A,k-1)$\;
$\mathit{NextFrontier} \leftarrow \emptyset$\;
$Q_{k} \leftarrow A_{k-1}.Q$; $Q_{k,f} \leftarrow A_{k-1}.Q_{f}$;
$R_{k} \leftarrow A_{k-1}.R$; $\Delta_{k} \leftarrow A_{k-1}.\Delta$\;
\ForEach{$q \in \mathit{Frontier}$}{
	$q_{c} \leftarrow \mathit{CopyOfQ}(q)$\;
	\ForEach{$\delta \in A.\Delta: \delta.\mathit{source} = q_{c}$}{
		$q_{new} \leftarrow \mathit{CreateNewState}()$\;
		$Q_{k} \leftarrow Q_{k} \cup \{q_{new}\}$\;
		$\mathit{CopyOfQ} \leftarrow \mathit{CopyOfQ} \cup \{ q_{new} \rightarrow \delta.\mathit{target} \}$\;
		\If{$\delta.\mathit{target} \in A.Q_{f}$}{
			$Q_{k,f} \leftarrow Q_{k,f} \cup \{q_{new}\}$\;
		}	
		\eIf{$\delta.W = \emptyset$}{
			$R_{new} \leftarrow \emptyset$\;	
		}
		{
			$r_{new} \leftarrow \mathit{CreateNewRegister}()$\;
			$R_{k} \leftarrow R_{k} \cup \{ r_{new} \}$\;
			$R_{new} \leftarrow \{ r_{new} \}$\;
			$\mathit{CopyOfR} \leftarrow \mathit{CopyOfR} \cup \{r_{new} \rightarrow \delta.r \}$; \label{line:unroll_cycles:rn_descendent}
			\tcp{{\footnotesize $\delta.r$ single element of $\delta.W$}}
		}		
		$\phi_{new} \leftarrow \delta.\phi$\;
		$rs_{new} \leftarrow ()$\;
		\tcc{{\footnotesize By $\delta.\phi.rs$ we denote the register selection of $\delta.\phi$, i.e., all the registers referenced by $\delta.\phi$ in its arguments. $rs$ is represented as a list.}}\
		\ForEach{$r \in \delta.\phi.rs$}{
			\tcc{{\footnotesize $\mathit{FindLastAppearance}$ returns a register that is a copy of $r$ and appears last in the trail of $A_{k-1}$ to $q$ (no other copies of $r$ appear after $r_{latest}$). Due to the construction, only a single walk/trail to $q$ exists.}}\
			$r_{latest} \leftarrow \mathit{FindLastAppearance}(r,q,A_{k-1})$\; \label{line:unroll_cycles:latest_appearance}
			\tcc{{\footnotesize $::$ denotes the operation of appending an element at the end of a list. $r_{latest}$ is appended at the end of $rs_{new}$.}}\
			$rs_{new} \leftarrow rs_{new} :: r_{latest}$\;
		}
		$\delta_{new} \leftarrow \mathit{CreateNewTransition}(q,\phi_{new}(rs_{new}) \downarrow R_{new} \rightarrow q_{new})$\;
		$\Delta_{k} \leftarrow \Delta_{k} \cup \{ \delta_{new} \}$\;
		$\mathit{NextFrontier} \leftarrow \mathit{NextFrontier} \cup \{q_{new}\}$\;
	}
}
$A_{k} \leftarrow (Q_{k}, A_{k-1}.q_{s}, Q_{k,f}, R_{k}, \Delta_{k})$\;
$\mathtt{return}\ (A_{k},\mathit{NextFrontier},\mathit{CopyOfQ},\mathit{CopyOfR})$\;
\caption{Unrolling cycles for windowed expressions, $k > 0$ ($\mathit{UnrollK}$).}
\label{algorithm:unroll_cyclesk}
\end{algorithm}

The lemma is a direct consequence of the construction algorithm.
First, note that,
by the construction algorithm,
there is a one-to-one mapping (bijective function) between the walks/runs of $A_{e_{w}}$
and the walks/runs of $A_{e}$ of length up to $w$.
We can show that if $\varrho_{e}$ is a run of $A_{e}$ of length up to $w$ over a string $S$
($\varrho_{e}$ has at most $w$ transitions),
then the corresponding run $\varrho_{e_{w}}$ of $A_{e_{w}}$ is indeed a run and if $\varrho_{e}$ is accepting so is $\varrho_{e_{w}}$.
By definition, 
since the runs have no $\epsilon$-transitions and are at most of length $w$,
$\lvert S \rvert \leq w$.

We first prove the following proposition:

\ifdefined\FI
\begin{proposition}
There exists a run of $A_{e}$ over a string $S$ of length up to $w$
\begin{equation*}
\varrho_{e}=[1,q_{e,1}=A_{e}.q_{s},v_{e,1}] \overset{\delta_{e,1}}{\rightarrow}  \cdots \overset{\delta_{e,i-1}}{\rightarrow} [n,q_{e,i},v_{e,i}] \overset{\delta_{e,i}}{\rightarrow} \cdots \overset{\delta_{e,n-1}}{\rightarrow} [n,q_{e,n},v_{e,n}]
\end{equation*}
iff
there exists a run $\varrho_{e_{w}}$ of $A_{e_{w}}$ 
\begin{equation*}
\varrho_{e_{w}}=[1,q_{e_{w},1}=A_{e_{w}}.q_{s},v_{e_{w},1}] \overset{\delta_{e_{w},1}}{\rightarrow}  \cdots \overset{\delta_{e_{w},i-1}}{\rightarrow} [n,q_{e_{w},i},v_{e_{w},i}] \overset{\delta_{e_{w},i}}{\rightarrow} \cdots \overset{\delta_{e_{w},n-1}}{\rightarrow} [n,q_{e_{w},n},v_{e_{w},n}]
\end{equation*}
such that:
\begin{itemize}
	\item $\mathit{CopyOfQ}(q_{e_{w},i}) = q_{e,i}$
	\item $v_{e,i}(r_{e})=v_{e_{w},i}(r_{e_{w}})$, 
	if
	$\mathit{CopyOfR}(r_{e_{w}})=r_{e}$
	and
	$r_{e_{w}}$ appears last among the registers that are copies of $r_{e}$ in $\varrho_{e_{w}}$.
\end{itemize}
\end{proposition}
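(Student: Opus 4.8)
The plan is to prove the biconditional by induction on the length $n-1$ of the runs, exploiting the recursive structure of Algorithm~\ref{algorithm:unroll_cyclesk} ($\mathit{UnrollK}$). First I would set up the induction carefully: the base case corresponds to runs of length $0$, handled by $\mathit{Unroll0}$ (Algorithm~\ref{algorithm:unroll_cycles0}), where $A_{e_{w}}$ has a single state $q$ with $\mathit{CopyOfQ}(q)=A_{e}.q_{s}$, both valuations are the empty valuation $\sharp$, and there are no registers, so both claimed conditions hold trivially; moreover $q$ is final iff $A_{e}.q_{s}$ is final, which pins down acceptance. For the inductive step I would assume the proposition holds for all runs of length $n-2$ and prove it for runs of length $n-1$, treating the forward direction ($\varrho_{e}$ exists $\Rightarrow$ $\varrho_{e_{w}}$ exists) and the converse separately but symmetrically, since the construction gives a genuine bijection between walks of $A_{e}$ of length $\le w$ and walks of $A_{e_{w}}$.

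For the forward direction, I would truncate $\varrho_{e}$ at position $n-1$ to get a run of length $n-2$, apply the induction hypothesis to obtain a matching prefix run $\varrho_{e_{w}}$ of $A_{e_{w}}$ up to configuration $[n-1,q_{e_{w},n-1},v_{e_{w},n-1}]$ with $\mathit{CopyOfQ}(q_{e_{w},n-1})=q_{e,n-1}$ and the register-equality condition, and then show the final transition $\delta_{e,n-1}$ can be mirrored. Here the key observation is that $q_{e_{w},n-1}$ lies in the frontier at step $k=n-1$, so Algorithm~\ref{algorithm:unroll_cyclesk} creates, for the transition $\delta_{e,n-1}$ out of $q_{e,n-1}=\mathit{CopyOfQ}(q_{e_{w},n-1})$, a new state $q_{new}$ with $\mathit{CopyOfQ}(q_{new})=\delta_{e,n-1}.\mathit{target}$, a fresh transition $\delta_{new}$ with the same condition $\phi_{new}=\delta_{e,n-1}.\phi$, write register $R_{new}$ a copy of $\delta_{e,n-1}.W$, and register selection $rs_{new}$ obtained via $\mathit{FindLastAppearance}$. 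I would then argue that $\delta_{new}$ is triggered exactly when $\delta_{e,n-1}$ is: by the induction hypothesis the registers referenced by $\delta_{e,n-1}$ (which are the ones appearing \emph{last} in $\varrho_{e}$, equivalently in the unique trail to $q_{e_{w},n-1}$) hold the same values as the $r_{latest}$ registers selected in $rs_{new}$, so $(t_{n-1},v_{e_{w},n-1})\models\phi_{new}(rs_{new})$ iff $(t_{n-1},v_{e,n-1})\models\delta_{e,n-1}.\phi$. The write step then updates $r_{new}$ to $t_{n-1}$, which is a copy of $\delta_{e,n-1}.W$ and, being brand new, appears last among copies of that register, re-establishing the register-equality invariant for step $n$; and $q_{new}$ is final iff $\delta_{e,n-1}.\mathit{target}$ is, preserving acceptance. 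The converse direction reverses this reasoning using the surjectivity of $\mathit{CopyOfQ}$ and $\mathit{CopyOfR}$ and the fact that every walk of $A_{e_{w}}$ projects down to a walk of $A_{e}$ of length $\le w$.

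The main obstacle I anticipate is the bookkeeping around register selections, specifically making rigorous the claim that $\mathit{FindLastAppearance}(r,q,A_{k-1})$ returns precisely the copy of $r$ whose content matches $v_{e,i}(r)$. This hinges on two facts I would need to establish carefully: (i) by the unrolling construction each state of $A_{e_{w}}$ is reachable by a \emph{unique} walk/trail (the automaton is a tree), so ``appears last in the trail to $q$'' is well-defined, and (ii) along $\varrho_{e}$, the current content of register $r_{e}$ of $A_{e}$ is whatever was last written to it, and the corresponding last-written copy in $\varrho_{e_{w}}$ is exactly the $r_{latest}$ chosen by the algorithm — so the ``appears last'' qualifier in the invariant is not a technicality but the crux that makes the matching of valuations coherent across the $\epsilon$-free, single-register normal form assumed for $A_{e}$. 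Once this is nailed down, the rest is a routine transfer of the successor-configuration rules from Definition~\ref{definition:configuration} through the bijection, and the bound $\lvert S\rvert\le w$ follows because the runs are $\epsilon$-free and, by construction, of length at most $w$, giving the stated equivalence and, a fortiori, the lemma that the unrolled \sra\ has no loops since every state sits at a fixed depth $\le w$ in the tree.
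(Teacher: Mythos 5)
Your proposal is correct and follows essentially the same route as the paper's own proof: induction on run length with base case given by $\mathit{Unroll0}$, mirroring the last transition via the cloned state/transition/register created by $\mathit{UnrollK}$, and using the ``appears last'' register-selection invariant (together with the tree structure of $A_{e_{w}}$, which makes $\mathit{FindLastAppearance}$ well-defined) to transfer triggering of conditions and equality of valuations. The only cosmetic difference is that you spell out both directions and the acceptance bookkeeping slightly more explicitly, whereas the paper proves one direction and declares the converse similar.
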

\else
\begin{proposition*}
There exists a run of $A_{e}$ over a string $S$ of length up to $w$
\begin{equation*}
\varrho_{e}=[1,q_{e,1}=A_{e}.q_{s},v_{e,1}] \overset{\delta_{e,1}}{\rightarrow}  \cdots \overset{\delta_{e,i-1}}{\rightarrow} [n,q_{e,i},v_{e,i}] \overset{\delta_{e,i}}{\rightarrow} \cdots \overset{\delta_{e,n-1}}{\rightarrow} [n,q_{e,n},v_{e,n}]
\end{equation*}
iff
there exists a run $\varrho_{e_{w}}$ of $A_{e_{w}}$ 
\begin{equation*}
\varrho_{e_{w}}=[1,q_{e_{w},1}=A_{e_{w}}.q_{s},v_{e_{w},1}] \overset{\delta_{e_{w},1}}{\rightarrow}  \cdots \overset{\delta_{e_{w},i-1}}{\rightarrow} [n,q_{e_{w},i},v_{e_{w},i}] \overset{\delta_{e_{w},i}}{\rightarrow} \cdots \overset{\delta_{e_{w},n-1}}{\rightarrow} [n,q_{e_{w},n},v_{e_{w},n}]
\end{equation*}
such that:
\begin{itemize}
	\item $\mathit{CopyOfQ}(q_{e_{w},i}) = q_{e,i}$
	\item $v_{e,i}(r_{e})=v_{e_{w},i}(r_{e_{w}})$, 
	if
	$\mathit{CopyOfR}(r_{e_{w}})=r_{e}$
	and
	$r_{e_{w}}$ appears last among the registers that are copies of $r_{e}$ in $\varrho_{e_{w}}$.
\end{itemize}
\end{proposition*}
\fi

We say that a register $r$ appears in a run at position $i$ if $r \in \delta_{i}.W$,
i.e.,
if the $i^{th}$ transition writes to $r$.
We say that a register $r_{e_{w}}$,
where $\mathit{CopyOfR}(r_{e_{w}})=r_{e}$, 
appears last if no other copies of $r_{e}$ appear after $r_{e_{w}}$ in a run.
The notion of a register's (last) appearance also applies for walks of $A_{e_{w}}$,
since $A_{e_{w}}$ is a directed acyclic graph,
as can be seen by Algorithms \ref{algorithm:unroll_cycles0} and \ref{algorithm:unroll_cyclesk}
(they always expand ``forward'' the \sra, 
without creating any cycles and without converging any paths).

\begin{proof}
The proof is by induction on the length of the runs $k$, with $k \leq w$.
We prove only one direction (assume a run $\varrho_{e}$ exists).
The other is similar.

\textbf{Base case: $k=0$.}
For both \sra, 
only the start state and the initial configuration with all registers empty is possible.
Thus, $v_{e,i}=v_{e_{w},i}=\sharp$ for all registers. 
By Algorithm \ref{algorithm:unroll_cycles0} (line \ref{line:unroll_cycles:qs_descendent}),
we know that $\mathit{CopyOf}(q_{e_{w},s}) = q_{e,s}$.

\textbf{Case for $0 < k+1 \leq w$, assuming the proposition holds for $k$.}
Let
\begin{equation*}
\varrho_{e,k+1} = \cdots [k,q_{e,k},v_{e,k}] \overset{\delta_{e,k}}{\rightarrow} [k+1,q_{e,k+1},v_{e,k+1}]
\end{equation*}
and
\begin{equation*}
\varrho_{e_{w},k+1} = \cdots [k,q_{e_{w},k},v_{e_{w},k}] \overset{\delta_{e_{w},k}}{\rightarrow} [k+1,q_{e_{w},k+1},v_{e_{w},k+1}]
\end{equation*}
be the runs of $A_{e}$ and $A_{e_{w}}$ respectively of length $k+1$ over the same $k+1$ elements of a string $S$.
We know that $\varrho_{e,k+1}$ is an actual run and we need to construct $\varrho_{e_{w},k+1}$, 
knowing, by the induction hypothesis,
that there is an actual run up to $q_{e_{w},i+k}$.
Now, by the construction algorithm,
we can see that if $\delta_{e,k}$ is a transition of $A_{e}$ from $q_{e,k}$ to $q_{e,k+1}$,
there exists a transition $\delta_{e_{w},k}$ with the same condition
from $q_{e_{w},k}$ to a $q_{e_{w},k+1}$ such that $\mathit{CopyOfQ}(q_{e_{w},k+1})=q_{e,k+1}$.
Moreover, if $\delta_{e,k}$ is triggered,
so does $\delta_{e_{w},k}$,
because the registers in the register selection of $\delta_{e_{w},k}$
are copies of the corresponding registers in $\delta_{e,k}.\phi.rs$.
By the induction hypothesis,
we know that the contents of the registers in $\delta_{e,k}.\phi.rs$ will be equal to the contents of their corresponding registers in $\varrho_{e_{w}}$ that appear last.
But these are exactly the registers in $\delta_{e_{w},k}.\phi.rs$
(see line \ref{line:unroll_cycles:latest_appearance} in Algorithm \ref{algorithm:unroll_cyclesk}).
We can also see that the part of the proposition concerning the valuations $v$ also holds.
If $\delta_{e,k}.W = \{ r_{e} \}$ and $\delta_{e_{w},k}.W = \{ r_{e_{w}} \}$,
then we know,
by the construction algorithm
(line \ref{line:unroll_cycles:rn_descendent}),
that $\mathit{CopyOfR}(r_{e_{w}}) = r_{e}$ and $r_{e_{w}}$ will be the last appearance of a copy of $r_{e}$ in $\varrho_{e_{w},k+1}$.
Thus the proposition holds for $0 < k+1 \leq w$ as well.
\end{proof}

The above proposition must necessarily hold for accepting runs as well.
Therefore,
$A_{e}$ accepts the same language as $A_{e_{w}}$.
\end{proof}

We also note that $w$ must be a number greater than (or equal to) 
the minimum length of the walks induced by the accepting runs of $A_{e}$
(which is something that can be computed by the structure of the expression).
Although this is not a formal requirement,
if it is not satisfied,
then $A_{e_{w}}$ won't detect any matches.

\subsection{Proof of Theorem \ref{theorem:wsrem2dsra}}
\label{sec:proof:wsrem2dsra}
\ifdefined\FI
\begin{theorem}
For every windowed \srem\ there exists an equivalent deterministic \sra.
\end{theorem}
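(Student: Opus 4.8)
The plan is to chain together the results already established in this section. By Lemma \ref{lemma:windowed_srem}, for every windowed \srem\ $e' := e^{[1..w]}$ there exists an equivalent \emph{unrolled} \sra\ $A_{e'}$ — a \sra\ with no loops, i.e., every state is visited at most once along any run. The key structural fact to exploit is that such an unrolled \sra\ has \emph{bounded-length runs}: since no state repeats and the graph is a finite directed acyclic graph, every run has length at most $w$, and consequently the number of registers whose contents are ever simultaneously relevant is bounded. This is precisely what fails in the general (non-windowed) case, where Theorem \ref{theorem:determinization} shows determinization is impossible because an adversary can force an unbounded number of distinct stored elements. So the window operator removes the obstruction identified in the proof of Theorem \ref{theorem:determinization}.

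First I would invoke Lemma \ref{lemma:windowed_srem} to obtain the unrolled \sra\ $A_{e'}$ (equivalently, run $\mathit{ConstructWSRA}$ from Algorithm \ref{algorithm:wsrem2sra}, which already eliminates $\epsilon$-transitions via Algorithm \ref{algorithm:epsilon} and converts to single-register form via the construction of Appendix \ref{sec:proof:multi2single}). Next I would apply a powerset construction to $A_{e'}$, analogous to the standard determinization of symbolic automata \cite{DBLP:conf/lpar/VeanesBM10,DBLP:conf/cav/DAntoniV17}: the states of the deterministic \sra\ $A_D$ are subsets of $A_{e'}.Q$, with start state the singleton $\{A_{e'}.q_s\}$ and final states those subsets containing a final state of $A_{e'}$; no registers are added or removed. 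For each powerset state $q_d$, I would collect the conditions of all outgoing transitions of the members of $q_d$, form the set of \emph{minterms} — the mutually exclusive conjunctions in which each original condition appears either positively or negated — and create one transition per (satisfiable) minterm from $q_d$ to the appropriate successor subset, with write registers the union of the write registers of the contributing transitions. Because the minterms are pairwise mutually exclusive, at most one outgoing transition of $q_d$ can fire on any $(u,v)$, so $A_D$ is deterministic by definition. I would then argue $\mathcal{L}(A_D) = \mathcal{L}(A_{e'})$ by the usual subset-tracking invariant: after reading a prefix of $S$, the state of $A_D$ is exactly the set of states reachable in $A_{e'}$ on that prefix (with matching register contents), so $A_D$ reaches a final state iff $A_{e'}$ has some accepting run. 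Composing with Lemma \ref{lemma:windowed_srem}, $\mathcal{L}(A_D) = \mathcal{L}(e')$.

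The main obstacle — and the reason this works only in the windowed case — is the bookkeeping of register contents across the subset construction. When several states of $A_{e'}$ are merged into one powerset state $q_d$, their runs may have written different elements into the registers, so a naive merge would conflate incompatible valuations. The acyclic, bounded structure of the unrolled \sra\ is what lets this be handled: because each state of $A_{e'}$ is reached by a unique trail (a consequence of the unrolling, cf. the discussion in Appendix \ref{sec:proof:windowed_srem}), the register contents along the run to any given state are determined, and distinct merged states correspond to registers that were created as distinct copies during unrolling — so there is no genuine conflict to resolve, only a union of write sets to take. I would make this precise via an invariant stating that in any run of $A_D$, for each member $q$ of the current powerset state the sub-run of $A_{e'}$ to $q$ is consistent with the current valuation; the inductive step uses that a fired minterm of $q_d$ is the conjunction of (possibly negated) original conditions, so it fires exactly when the corresponding transition in $A_{e'}$ would. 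The full details are routine given Lemma \ref{lemma:windowed_srem} and the symbolic-automata determinization template, and are deferred to Appendix \ref{sec:proof:wsrem2dsra}.
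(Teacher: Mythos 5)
Your proposal follows essentially the same route as the paper's own proof: unroll via Lemma \ref{lemma:windowed_srem}, then determinize by a powerset-plus-minterms construction with multi-register writes taken as unions, and justify correctness by an inductive subset-tracking invariant whose register bookkeeping goes through precisely because the unrolled \sra\ is acyclic, so each register is written at most once along any run. The argument is correct and matches the paper's Algorithm \ref{algorithm:determinization} and its accompanying proposition in Appendix \ref{sec:proof:wsrem2dsra}.
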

\else
\begin{theorem*}
For every windowed \srem\ there exists an equivalent deterministic \sra.
\end{theorem*}
\fi

\begin{proof}
The process for constructing a deterministic \sra\ (\dsra) from a windowed \srem\ is shown in Algorithm \ref{algorithm:determinization}.
It first constructs a non-deterministic \sra\ (\nsra) and then uses the power set of this \nsra's states to construct the \dsra.
For each state $q_{d}$ of the \dsra,
it gathers all the conditions from the outgoing transitions of the states of the \nsra\ $q_{n}$ ($q_{n} \in q_{d}$),
it creates the (mutually exclusive) \emph{minterms} of these conditions, 
i.e., the set of maximal satisfiable Boolean combinations of the conditions.
It then creates transitions, 
based on these minterms.
Please, note that we use the ability of a transition to write to more than one registers.
So, from now on,
$\delta.W$ will be a set that is not necessarily a singleton.
This allows us to retain the same set of registers, i.e.,
the set of registers $R$ will be the same for the \nsra\ and the \dsra.
A new transition created for the \dsra\ may write to multiple registers,
if it ``encodes'' multiple transitions of the \nsra,
which may write to different registers.
It is also obvious that the resulting \sra\ is deterministic,
since the various minterms out of every state are mutually exclusive,
i.e., at most one may be triggered.
Intuitively,
having a windowed \sra\ allows us to construct a deterministic \sra\ with as many registers as necessary.
Therefore,
it is always possible to have available all past $w$ elements.
This is not possible in the counter-example of Section \ref{sec:proof:determinization},
where we showed that \sra\ are not in general determinizable.

First, we will prove the following proposition:
\begin{proposition}
There exists a run $\varrho_{n}$ over a string $S$ which $A_{n}$ can follow by reading the first $k$ tuples of $S$,
iff there exists a run $\varrho_{d}$ 
that $A_{d}$ can follow by reading the same first $k$ tuples,
such that, if
\begin{equation*}
\varrho_{n} = [1,q_{n,1},v_{n,1}] \overset{\delta_{n,1}}{\rightarrow} \cdots \overset{\delta_{n,i-1}}{\rightarrow} [i,q_{n,k},v_{n,i}] \overset{\delta_{n,i}}{\rightarrow} \cdots \overset{\delta_{n,k-1}}{\rightarrow} [k,q_{n,k},v_{n,k}]
\end{equation*}
and
\begin{equation*}
\varrho_{d} = [1,q_{d,1},v_{d,1}] \overset{\delta_{d,1}}{\rightarrow}  \cdots \overset{\delta_{d,i-1}}{\rightarrow} [i,q_{d,i},v_{d,i}] \overset{\delta_{d,i}}{\rightarrow} \cdots \overset{\delta_{d,k-1}}{\rightarrow} [k,q_{d,k},v_{d,k}]
\end{equation*}
are the runs of $A_{n}$ and $A_{d}$ respectively, then,
\begin{itemize}
	\item $q_{n,i} \in q_{d,i}\ \forall i: 1 \leq i \leq k$
	\item if $r \in A_{d}.R$ appears in $\varrho_{n}$, then it appears in $\varrho_{d}$
	\item $v_{n,i}(r) = v_{d,i}(r)$  for every $r$ that appears in $\varrho_{n}$ (and $\varrho_{d})$
\end{itemize}
\end{proposition}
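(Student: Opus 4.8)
The plan is to prove the proposition by induction on $k$, the number of tuples consumed, establishing the three invariants (state containment $q_{n,i} \in q_{d,i}$, register appearance, and register-content agreement) simultaneously, since they depend on one another. The statement is an ``iff'', so I would prove both directions in parallel, but the forward direction (a run $\varrho_{n}$ of $A_{n}$ exists $\Rightarrow$ a corresponding run $\varrho_{d}$ of $A_{d}$ exists) is the one requiring care, because in going from nondeterministic to deterministic we must track which minterm is triggered; the converse direction mostly reverses the same bookkeeping, exploiting that each minterm is a Boolean combination of the original conditions, so if the minterm holds then the conjunct selecting a particular original transition holds.

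\textbf{Base case $k=1$.} Both $A_{n}$ and $A_{d}$ start in their initial configurations: $q_{n,1} = A_{n}.q_{s}$, $q_{d,1} = \{A_{n}.q_{s}\} = A_{d}.q_{s}$ (by the powerset construction, the start state of $A_{d}$ is the singleton containing the start state of $A_{n}$, or possibly its $\epsilon$-closure, but since we work with the $\epsilon$-free unrolled \sra\ from Lemma \ref{lemma:windowed_srem}, the singleton suffices). Hence $q_{n,1} \in q_{d,1}$. No transitions have fired yet, so no register appears in either run, and all registers are empty: $v_{n,1}(r) = v_{d,1}(r) = \sharp$ for all $r \in A_{d}.R = A_{n}.R$. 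All three invariants hold vacuously or trivially.

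\textbf{Inductive step.} Assume the invariants hold after $k$ tuples; I would show they hold after $k+1$. Suppose $\varrho_{n}$ extends by a transition $\delta_{n,k} = q_{n,k},\phi_{n} \downarrow W_{n} \rightarrow q_{n,k+1}$, triggered by the tuple $t_{k}$ under valuation $v_{n,k}$, i.e.\ $(t_{k}, v_{n,k}) \models \phi_{n}$. By the induction hypothesis, $q_{n,k} \in q_{d,k}$, so $\phi_{n}$ is among the conditions gathered from the outgoing transitions of the states in $q_{d,k}$ when the determinization algorithm built the transitions out of $q_{d,k}$. Consequently there is a unique minterm $\psi$ of that condition set with $\psi \Rightarrow \phi_{n}$ (as a conjunct) and $(t_{k}, v_{d,k}) \models \psi$; this uses the register-content agreement part of the hypothesis, namely $v_{n,k}(r) = v_{d,k}(r)$ for every register $r$ referenced by $\phi_{n}$ (which must appear in $\varrho_{n}$, having been written before it is read, by the structural properties of \sra\ from Section \ref{sec:structural_properties}). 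The transition $\delta_{d,k}$ of $A_{d}$ carrying minterm $\psi$ then fires, moving to $q_{d,k+1} \ni q_{n,k+1}$, which gives the first invariant. Its write set $\delta_{d,k}.W$ is the union of the write sets of all \nsra\ transitions that $\psi$ encodes, so $W_{n} \subseteq \delta_{d,k}.W$; since $A_{d}$ and $A_{n}$ share the same register set, the register(s) in $W_{n}$ appear in $\varrho_{d}$, giving the second invariant. Finally, both runs write $t_{k}$ into those registers and leave the others unchanged, so the content agreement is preserved: for registers in $W_{n}$, both now hold $t_{k}$; for registers untouched by $\delta_{n,k}$, the agreement follows from the hypothesis, and for registers in $\delta_{d,k}.W \setminus W_{n}$ — written by $A_{d}$ but not by $A_{n}$ — the invariant does not demand agreement (it only covers registers appearing in $\varrho_{n}$). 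This establishes the third invariant and closes the induction.

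\textbf{Main obstacle.} The delicate point is the $A_{d}$-side bookkeeping of registers in the determinization: a single \dsra\ transition may ``encode'' several \nsra\ transitions that write to \emph{different} registers, so $\delta_{d,k}.W$ can be strictly larger than any individual $\delta_{n,k}.W$. One must be careful that this extra writing never corrupts a register that $\varrho_{n}$ later relies on — which is exactly why the proposition restricts the content-agreement claim to registers \emph{appearing in} $\varrho_{n}$, and why the ``appears last'' considerations from Lemma \ref{lemma:windowed_srem} are needed to pin down which copy of a register the \dsra\ reads. Handling the converse direction ($\varrho_{d} \Rightarrow \varrho_{n}$) requires, from a triggered minterm $\psi$, recovering \emph{which} original \nsra\ transition to follow: since $q_{d,k+1} = \bigcup \{\delta.\mathit{target} : \delta.\mathit{source} \in q_{d,k},\ \psi \Rightarrow \delta.\phi\}$, one picks any such $\delta$ with source in $q_{d,k}$, and the agreement of register contents guarantees it is genuinely triggered in $A_{n}$; the nondeterminism of $A_{n}$ means the choice need not be unique, but any choice yields a valid run, which is all that is required.
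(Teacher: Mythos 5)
Your overall strategy coincides with the paper's: induction on the length of the run, carrying the three invariants simultaneously, selecting the unique satisfied minterm via the register-content agreement of the induction hypothesis, and taking unions of write sets. There is, however, a genuine gap in your treatment of the third invariant. In the inductive step you dismiss the registers in $\delta_{d,k}.W \setminus W_{n}$ with the remark that ``the invariant does not demand agreement (it only covers registers appearing in $\varrho_{n}$).'' This is not sufficient: such a register $r$ may very well appear in $\varrho_{n}$ --- namely, it may have been \emph{written at an earlier step} of $\varrho_{n}$ by a transition that the current step does not repeat. In that case the invariant does apply to $r$, and the extra write performed by $\delta_{d,k}$ (which stores $t_{k}$ into $r$ on the $A_{d}$ side only) would destroy the agreement $v_{n,i}(r)=v_{d,i}(r)$ and, worse, corrupt a value that a later condition along $\varrho_{n}$ may read. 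Your ``main obstacle'' paragraph correctly senses this danger but resolves it by appeal to the wrong mechanism: restricting the claim to registers appearing in $\varrho_{n}$ does not exclude this case.

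The paper closes exactly this hole with a structural fact about the unrolled automaton $A_{n}$ produced by Lemma \ref{lemma:windowed_srem}: $A_{n}$ is tree-shaped and every writing transition writes to a freshly created register, so each register of $A_{n}.R$ is the write register of exactly one transition and hence appears at most once in any run, at a fixed depth. Consequently $\delta_{d,k}.W=\bigcup_{j}\delta_{n,k}^{j}.W$ consists only of registers that cannot have appeared before step $k$ in any run of $A_{n}$, so the extra writes of $A_{d}$ never touch a register that has already appeared in $\varrho_{n}$. The paper explicitly flags this as the critical observation, and as the reason the same construction fails for non-windowed \sra\ (Figure \ref{fig:determinization}), where a register can be rewritten along a cycle. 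To repair your proof you must add this freshness argument (or an equivalent invariant on the registers written at each depth); the rest of your reasoning, including the identification of the unique triggered minterm and the sketch of the converse direction, matches the paper's.
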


We say that a register $r$ appears in a run at position $i$ if $r \in \delta_{i}.W$.

\begin{algorithm}
\KwIn{Windowed \srem\ $e' := e^{[1..n]}$}
\KwOut{Deterministic \sra\ $A_{d}$ equivalent to $e'$}
$A_{n} \leftarrow \mathit{ConstructWSRA}(e')$; \tcp{{\footnotesize See Algorithm \ref{algorithm:wsrem2sra}}}\
$Q_{d} \leftarrow \mathit{ConstructPowerSet}(A_{n}.Q)$\;
$\Delta_{d} \leftarrow \emptyset$;	$Q_{f,d} \leftarrow \emptyset$\;
\ForEach{$q_{d} \in Q_{d}$}{
	\If{$q_{d} \cap A_{n}.Q_{f} \neq \emptyset$}{
		$Q_{f,d} \leftarrow Q_{f,d} \cup \{ q_{d} \}$\;
	}
	$\mathit{Conditions} \leftarrow ()$;	$rs_{d} \leftarrow ()$\;
	\ForEach{$q_{n} \in q_{d}$}{
		\ForEach{$\delta_{n} \in A_{n}.\Delta: \delta_{n}.\mathit{source} = q_{n}$ }{
			$\mathit{Conditions} \leftarrow \mathit{Conditions} :: \delta_{n}.\phi$\;
			$rs_{d} \leftarrow rs_{d} :: \delta_{n}.\phi.rs$\; \label{line:determinization:register_selection}
		}
	}
	\tcc{$\mathit{ConstructMinTerms}$ returns the min-terms from a set of conditions. 
	For example, if $\mathit{Conditions} = (\phi_{1},\phi_{2})$, then $\mathit{MinTerms} = (\phi_{1} \wedge \phi_{2}, \neg \phi_{1} \wedge \phi_{2}, \phi_{1} \wedge \neg \phi_{2}, \neg \phi_{1} \wedge \neg \phi_{2})$}\
	$\mathit{MinTerms} \leftarrow \mathit{ConstructMinTerms}(\mathit{Conditions})$\; 
	\ForEach{$mt \in \mathit{MinTerms}$}{
		$p_{d} \leftarrow \emptyset$;	$W_{d} \leftarrow \emptyset$\;
		\ForEach{$q_{n} \in q_{d}$}{
			\ForEach{$\delta_{n} \in A_{n}.\Delta: \delta_{n}.\mathit{source} = q_{n}$ }{
				\tcc{$\phi \vDash \psi$ denotes entailment, i.e., if $\phi$ is true then $\psi$ is necessarily also true. 
				For example, $\phi_{1} \wedge \neg \phi_{2} \vDash \phi_{1}$.}\
				\If{$mt \vDash \delta_{n}.\phi$}{
					$p_{d} \leftarrow p_{d} \cup \{\delta_{n}.\mathit{target}\}$\;
					$W_{d} \leftarrow W_{d} \cup \{\delta_{n}.W\}$\; \label{line:determinization:register_writing}
				}
			}
		}
		$\delta_{d} \leftarrow \mathit{CreateNewTransition}(q_{d},mt(rs_{d}) \downarrow W_{d} \rightarrow p_{d})$\;
		$\Delta_{d} \leftarrow \Delta_{d} \cup \{\delta_{d}\}$\;
	}
}
$q_{d,s} \leftarrow \{A_{n}.q_{s}\}$\; \label{line:determinization:start_state}
$A_{d} \leftarrow (Q_{d},q_{d,s},Q_{f,d},A_{N}.R,\Delta_{d})$\;
$\mathtt{return}\ A_{d}$\;
\caption{Determinization.}
\label{algorithm:determinization}
\end{algorithm}

\begin{proof}
We will prove only direction (the other is similar).
Assume there exists a run $\varrho_{n}$.
We will prove that there exists a run $\varrho_{d}$ by induction on the length $k$ of the run.

\textbf{Base case: $k=0$.}
Then $\varrho_{n}=[1,q_{n,1},\sharp]=[1,q_{n,s},\sharp]$.
The run $\varrho_{d}=[1,q_{d,s},\sharp]$ is indeed a run of the \dsra\
that satisfies the proposition,
since $q_{n,s} \in q_{d,s} = \{q_{n,s}\}$ 
(by the construction algorithm, line \ref{line:determinization:start_state}),
all registers are empty
and no registers appear in the runs. 

\textbf{Case $k>0$.}
Assume the proposition holds for $k$.
We will prove it holds for $k+1$ as well.
Let 
\begin{equation}
\label{run:n}
\varrho_{n,k+1} = \cdots [k,q_{n,k},v_{n,k}] \begin{cases}
\overset{\delta_{n,k}^{1}}{\rightarrow} [k+1,q_{n,k+1}^{1},v_{n,k+1}^{1}] \\
\overset{\delta_{n,k}^{2}}{\rightarrow} [k+1,q_{n,k+1}^{2},v_{n,k+1}^{2}] \\
\cdots \\
\overset{\delta_{n,k}^{m}}{\rightarrow} [k+1,q_{n,k+1}^{m},v_{n,k+1}^{m}]
\end{cases}
\end{equation}
be the possible runs that can follow a run $\varrho_{n,k}$ after the \nsra\ reads the $(k+1)^{th}$ tuple.
Notice that,
typically,
since $A_{n}$ is non-deterministic,
there might be multiple runs $\varrho_{n,k}$ and each such run can spawn its own multiple runs $\varrho_{n,k+1}$.
The same reasoning that we present below applies to all these $\varrho_{n,k}$.

We need to find a run of the \dsra\ like:
\begin{equation*}
\varrho_{d,k+1} = \cdots [k,q_{d,k},v_{d,k}] \overset{\delta_{d,k}}{\rightarrow} [k+1,q_{d,k+1},v_{d,k+1}]
\end{equation*}

\begin{figure}[t]
\centering
\begin{subfigure}[t]{0.25\textwidth}
	\includegraphics[width=0.99\textwidth]{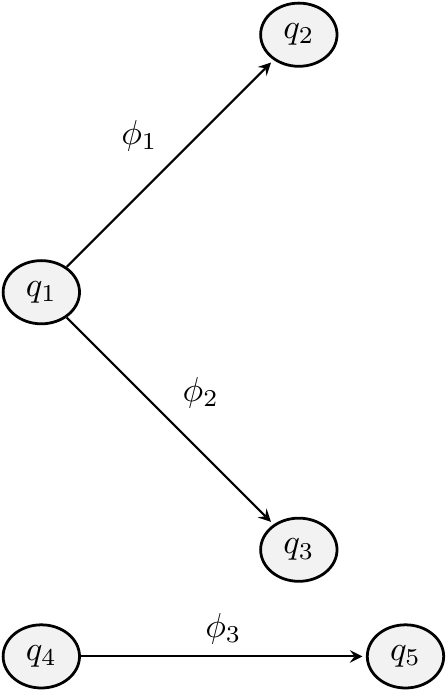}
	\caption{\nsra.}
	\label{fig:determinization_example:nsra}
\end{subfigure}
\begin{subfigure}[t]{0.73\textwidth}
	\includegraphics[width=0.99\textwidth]{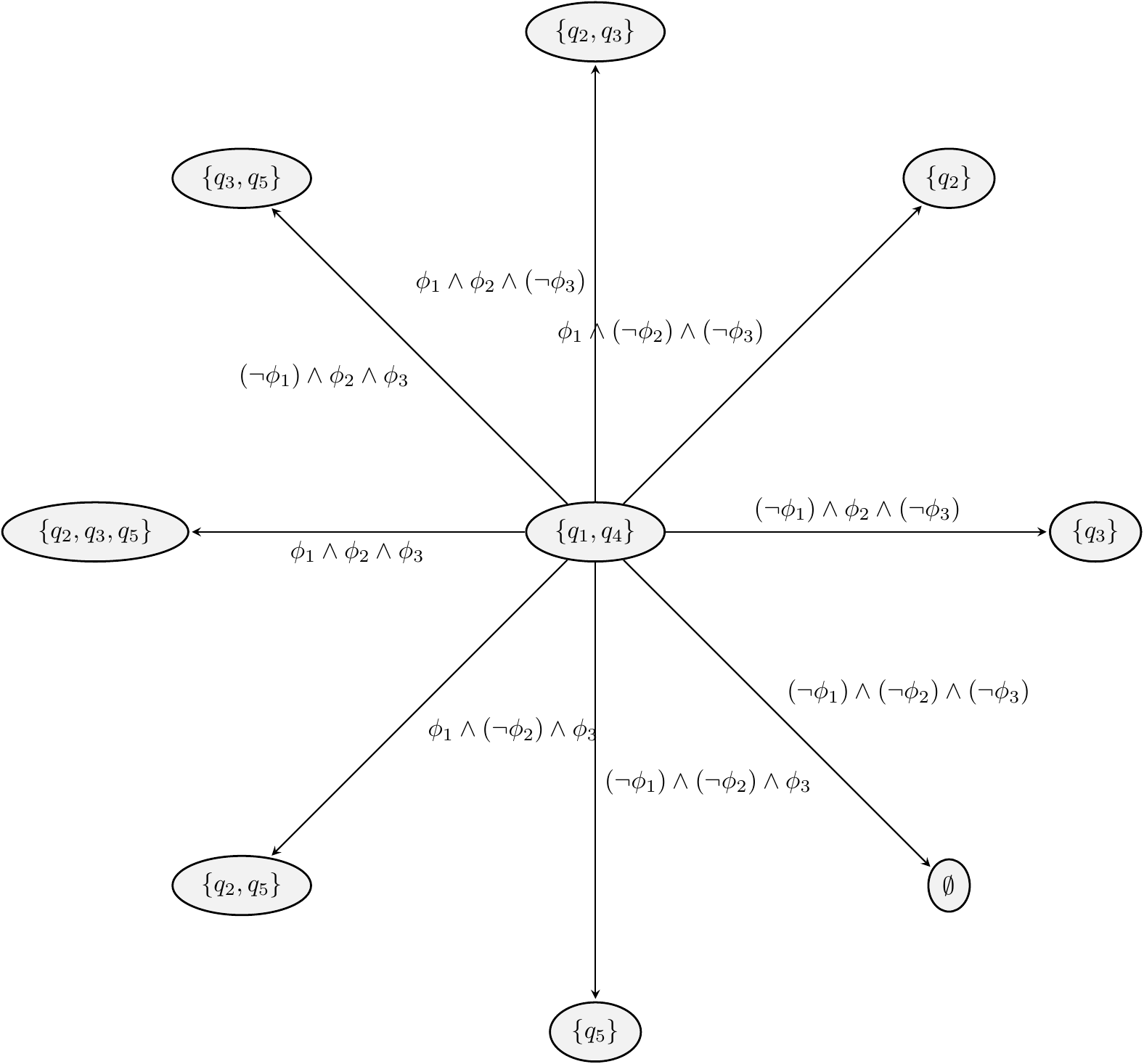}
	\caption{\dsra.}
	\label{fig:determinization_example:dsra}
\end{subfigure}
\caption{Example of converting a \nsra\ to a \dsra.}
\label{fig:determinization_example}
\end{figure}

By the induction hypothesis,
we know that $q_{n,k} \in q_{d,k}$.
By the construction Algorithm \ref{algorithm:determinization},
we then know that,
if $\phi_{n,k}^{j}=\delta_{n,k}^{j}.\phi$ is the condition of a transition that takes the non-deterministic run to $q_{n,k+1}^{j}$,
then there exists a transition $\delta_{d,k}$ in the \dsra\ from $q_{d,k}$ whose condition will be a minterm,
containing all the $\phi_{n,k}$ in their positive form
and all other possible conditions in their negated form.
Moreover, 
the target of that transition, 
$q_{d,k+1}$, 
contains all $q_{n,k+1}^{j}$.
More formally,  $q_{d,k+1} = \bigcup\limits_{j=1}^{m}{q_{n,k+1}^{j}}$.

As an example,
see Figure \ref{fig:determinization_example}.
Figure \ref{fig:determinization_example:nsra} depicts part of a \nsra.
Figure \ref{fig:determinization_example:dsra} depicts part of the \dsra\ that woyld be constructed from that of Figure \ref{fig:determinization_example:nsra}.
The construction algorithm would create the state $\{q_{2},q_{4}\}$,
the minterms from the conditions of all the outgoing transitions of $q_{2}$ and $q_{4}$ and then attempt to determine which minterm would move the \dsra\ to which subset of $\{q_{2},q_{3},q_{5}\}$.
The results is shown in Figure \ref{fig:determinization_example:dsra}.
Now, assume that a run of the \nsra\ has reached $q_{1}$ via one run and $q_{4}$ via another run,
i.e. $q_{n,k} = q_{1}$ in Eq. \eqref{run:n} for the first of these runs and $q_{n,k} = q_{4}$ for the second.
Assume also that both $\phi_{1}$ and $\phi_{2}$ are triggered after reading the $(k+1)^{th}$ element,
but not $\phi_{3}$.
This means that the \nsra\ would move to $q_{2}$ and $q_{3}$.
In Eq. \eqref{run:n},
this would mean that $m=2$ and that $\delta_{n,k}^{1}.\phi = \phi_{1}$ and $\delta_{n,k}^{2}.\phi = \phi_{2}$.
But in the \dsra\ there is a transition that simulates this move of the \nsra.
The minterm $\phi_{1} \wedge \phi_{2} \wedge (\neg \phi_{3})$ moves the \dsra\ to $\{q_{2},q_{3}\}$.
It contains $\delta_{n,k}^{1}.\phi$ and $\delta_{n,k}^{2}.\phi$ in their positive form and all other conditions (here only $\phi_{3}$) in their negated form.
With a similar reasoning,
we see that the \dsra\ can simulate the \nsra\ for every other possible combination of $\{\phi_{1},\phi_{2},\phi_{3}\}$.

What we have proven thus far is a structural similarity between \nsra\ and \dsra.
We also need to prove that $\delta_{d,k}$ applies as well,
i.e., that the minterm on this transition is triggered exactly when its positive conjuncts are triggered.
To prove this,
we need to show that the contents of the registers that a condition $\phi$ of the \nsra\ accesses are the same that this $\phi$ accesses in the \dsra\ when participating in a minterm.

As we said,
the condition on $\delta_{d,k}$ is a conjunct (minterm),
where all $\phi_{n,k}^{j}$ appear in their positive form and all other conditions in their negated form.
But note that the conditions in negated form are those that were not triggered in $\varrho_{n,k+1}$ when reading the $(k+1)^{th}$ tuple.
Additionally,
the arguments passed to each of the conditions of the minterm are the same (registers) as those passed to them in the non-deterministic run
(by the construction algorithm, line \ref{line:determinization:register_selection}). 
To make this point clearer,
consider the following simple example of a minterm:
\begin{equation*}
\phi = \phi_{1}(r_{1,1},\cdots,r_{1,k}) \wedge \neg \phi_{2}(r_{2,1},\cdots,r_{2,l}) \wedge \phi_{3}(r_{3,1},\cdots,r_{3,m})
\end{equation*}
This means that $\phi_{1}(r_{1,1},\cdots,r_{1,k})$,
with the exact same registers as arguments,
will be the formula of a transition of the \nsra that was triggered.
Similarly for $\phi_{3}$.
With respect to $\phi_{2}$,
it will be the condition of a transition that was not triggered.
If we can show that the contents of those registers are the same in the runs of the \nsra\ and \dsra\ when reading the last tuple,
then this will mean that $\delta_{d,k}.\phi$ is indeed triggered.
But this is the case by the induction hypothesis
($v_{n,k}(r) = v_{d,k}(r)$),
since all these registers appear in the run $\varrho_{n,k}$ up to $q_{n,k}$.

The second part of the proposition also holds,
since,
by the construction,
$\delta_{d,k}$ will write to all the registers that the various $\delta_{n,k}^{j}$ write
(see line \ref{line:determinization:register_writing} in the determinization algorithm).

The third part also holds.
This is the part that actually ensures that the contents of the registers are the same.
First, 
note that a register can appear only once in a run of $A_{n}$,
because of its tree-like structure.
Second,
by the construction,
we know that $\delta_{d,k}.W = \bigcup\limits_{j=1}^{m} { \delta_{n,k}^{j}.W }$
(see again line \ref{line:determinization:register_writing} in the algorithm).
Therefore, we know that $\delta_{d,k}$ will write only to registers that had not appeared before in the run of the \nsra\ and will leave every other register that had appeared unaffected.
This observation is critical.
We could not claim the same for non-windowed \sra,
as in Figure \ref{fig:determinization}.
If we attempted to determinize this \nsra,
without unrolling its cycles,
the resulting \sra\ could overwrite $r_{1}$.
Now, since $\delta_{d,k}$ and all the $\delta_{n,k}^{j}$ write the same element
and $\delta_{d,k}$ does not affect any previously appearing registers,
the proposition holds.
\end{proof}

Since the above proposition holds for accepting runs as well,
we can conclude that there exists an accepting run of $A_{n}$ iff there exists an accepting run of $A_{d}$.
According to the above proposition,
the union of the last states over all $\varrho_{n}$ is equal to the last state of $\varrho_{d}$.
Thus, if $\varrho_{n}$ reaches a final state,
then the last state of $\varrho_{d}$ will contain this final state and hence be itself a final state.
Conversely, if $\varrho_{d}$ reaches a final state of $A_{d}$,
it means that this state contains a final state of $A_{n}$.
Then, there must exist a $\varrho_{n}$ that reached this final state.

\end{proof}

\subsection{Proof of Corollary \ref{corollary:wsra_complement}}
\label{sec:proof:wsra_complement}
\ifdefined\FI
\begin{corollary}
Windowed \sra\ are closed under complement.
\end{corollary}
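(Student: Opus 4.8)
The plan is to reduce the statement to the two results already established: determinization of windowed \sra\ (Theorem~\ref{theorem:wsrem2dsra}) and the standard fact that deterministic, \emph{complete} automata can be complemented by swapping final and non-final states. First I would take an arbitrary windowed \sra, say $A = A_{e}^{[1..w]}$ for some \srem\ $e$, and observe that its language $\mathcal{L}(A)$ consists only of strings of length at most $w$. Then I would invoke Theorem~\ref{theorem:wsrem2dsra} to obtain an equivalent deterministic \sra\ $A_{d}$ with $\mathcal{L}(A_{d}) = \mathcal{L}(A)$, produced by the determinization algorithm (Algorithm~\ref{algorithm:determinization}). The key structural feature I would rely on is that $A_{d}$ is built from an \emph{unrolled} \usra\ (via $\mathit{ConstructWSRA}$), so it is acyclic and every run has length at most $w$; moreover the minterm construction makes the outgoing transitions of each state mutually exclusive.

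The main obstacle is \emph{completeness}: determinism alone is not enough, because a deterministic \sra\ may have a state from which, on some $(u,v)$, \emph{no} transition is triggered (the minterms out of a state need not cover the whole universe if the algorithm prunes unsatisfiable or dead-end branches, and in any case runs shorter than the matched prefix simply stop). Swapping final/non-final states on such an incomplete automaton would not give the complement. So the core of the proof is a completion step. I would add a fresh ``sink'' state $q_{\bot}$ to $A_{d}$ and, for every state $q$ (including $q_{\bot}$), add a transition $q, \psi_{q} \downarrow \emptyset \rightarrow q_{\bot}$ where $\psi_{q}$ is the negation of the disjunction of all existing minterms leaving $q$; this $\psi_{q}$ is itself a condition by Definition~\ref{definition:condition} (conditions are closed under $\neg$ and $\vee$). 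Because the original minterms out of $q$ are mutually exclusive and $\psi_{q}$ is their joint negation, the completed automaton $A_{d}^{c}$ is still deterministic and now \emph{complete}: on every $(u,v)$ exactly one outgoing transition of each state fires. A short argument shows $\mathcal{L}(A_{d}^{c}) = \mathcal{L}(A_{d})$, since the only new runs are those that fall into $q_{\bot}$, which is non-final, so no new strings are accepted and no previously accepted string is lost.

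Next I would handle the length restriction. Even after completion, $A_{d}^{c}$ accepts only strings of length $\leq w$, so its naive complement (swap accept/reject) would accept \emph{all} strings of length $> w$ as well, which is not what a windowed complement should be. Here I need to decide what ``complement of a windowed \sra'' means; following Definition~\ref{definition:windowed_srem}, the natural reading is the complement \emph{relative to strings of length at most $w$}, i.e.\ $A_{c}$ should accept exactly those $S$ with $|S| \leq w$ and $S \notin \mathcal{L}(A)$. To get this I would keep the acyclic structure of $A_{d}^{c}$ but note that, since the \usra\ only has runs of length up to $w$, every run of length exactly $\ell \leq w$ ends in some state, and I complement by making final exactly those states that are non-final in $A_{d}^{c}$ and are reachable by a run of length $\le w$; the sink $q_{\bot}$ becomes final. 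Because $A_{d}^{c}$ is deterministic and complete, for each $S$ with $|S|\le w$ there is exactly one run, and it ends in a final state of the complemented automaton iff it ended in a non-final state of $A_{d}^{c}$, i.e.\ iff $S \notin \mathcal{L}(A)$; runs longer than $w$ do not exist in the acyclic structure, so nothing outside the window is accepted. The resulting automaton is a (windowed) \sra\ $A_{c}$ with $S \in \mathcal{L}(A) \Leftrightarrow S \notin \mathcal{L}(A_{c})$ for all $S$ in the window, which is what the corollary claims. The one subtlety worth spelling out carefully is that the completion preserves the acyclicity needed for the windowing to remain meaningful — adding only the single sink $q_{\bot}$ with a self-loop does introduce a cycle, but that cycle is confined to a non-reachable-as-final portion and can equivalently be modelled by unrolling $q_{\bot}$ up to depth $w$, so determinizability-style windowing arguments still apply. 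I would refer to Appendix~\ref{sec:proof:wsra_complement} for these routine verifications.
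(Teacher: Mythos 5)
Your construction is essentially the paper's (Appendix \ref{sec:proof:wsra_complement}, Algorithm \ref{algorithm:complement}): determinize via Theorem \ref{theorem:wsrem2dsra}, complete the resulting \dsra\ by adding a single non-final sink state whose incoming transitions carry the conjunction of the negations of the existing outgoing conditions (your negation-of-the-disjunction is the same condition by De Morgan, and is a legal condition by Definition \ref{definition:condition}), and then swap final and non-final states, using determinism plus completeness to argue that the unique run over any string ends in a final state of one automaton iff it ends in a non-final state of the other. That part is correct and is exactly the paper's argument.

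Where you go astray is the treatment of strings longer than $w$. The paper's notion of complement (Definition \ref{definition:closure}) is absolute: $S \in \mathcal{L}(A) \Leftrightarrow S \notin \mathcal{L}(A_{c})$ for \emph{every} string $S$, so every $S$ with $\lvert S \rvert > w$ --- which is certainly not in $\mathcal{L}(A)$ --- must be accepted by $A_{c}$. You instead redefine the target as the complement relative to strings of length at most $w$ and deliberately arrange, by unrolling the sink to depth $w$, that nothing outside the window is accepted; that proves a different statement from the one the corollary (read against Definition \ref{definition:closure}) makes. The fix is to do less work, not more: keep the $\top$ self-loop on the sink and let the sink become final after the flip, as in the paper's proof. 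Any string of length greater than $w$ then eventually falls into the sink (the unrolled \dsra\ admits no longer runs outside it) and is accepted, as required. Nothing forces the complement automaton to be acyclic or itself windowed --- the corollary only asserts the existence of \emph{some} \sra\ recognizing the complement --- so your concern that the sink's self-loop compromises the windowing is misplaced, and the unrolling you propose to repair it is the step that would actually invalidate the proof.
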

\else
\begin{corollary*}
Windowed \sra\ are closed under complement.
\end{corollary*}
\fi

\begin{proof}
Let $A$ be a windowed \sra.
We first determinize it to obtain $A_{d}$.
Although $A_{d}$ is deterministic,
it might still be incomplete,
i.e., there might be states from which it might be impossible to move to another state.
This may happen if it is possible that the conditions on all of the outgoing transitions of such a state are not triggered.
As in classical automata,
such a behavior implies that the string provided to the automaton is not accepted by it.

\begin{algorithm}
\SetAlgoNoLine
\KwIn{Windowed \sra\ $A$}
\KwOut{\sra\ $A_{complement}$ accepting the complement of $A$'s language}
$A_{d} \leftarrow \mathit{Determinize}(A)$; \tcp{{\footnotesize See Algorithm \ref{algorithm:determinization}.}}\
$q_{dead} \leftarrow \mathit{CreateNewState}()$\;
$\Delta_{dead} \leftarrow \emptyset$\;
\ForEach{$q \in A_{d}.Q$}{
	$\Phi \leftarrow \emptyset$\;
	\ForEach{$\delta \in A_{d}.\Delta: \delta.\mathit{source} = q$}{
		$\Phi \leftarrow \Phi \cup \delta.\phi$\;
	}
	$\phi_{dead} \leftarrow \top$\;
	\ForEach{$\phi_{i} \in \Phi$}{
		$\phi_{dead} \leftarrow \phi_{dead} \wedge  (\neg \phi_{i})$\;
	} 
	$\delta_{dead} \leftarrow \mathit{CreateNewTransition}(q,\phi_{dead} \downarrow \emptyset \rightarrow q_{dead})$\;
	$\Delta_{dead} \leftarrow \Delta_{dead} \cup \delta_{dead}$\;
}
$\delta_{loop} \leftarrow \mathit{CreateNewTransition}(q_{dead},\top \downarrow \emptyset \rightarrow q_{dead})$\;
$\Delta_{dead} \leftarrow \Delta_{dead} \cup \delta_{loop}$\;
$Q_{comp} \leftarrow A.Q \cup \{q_{dead}\}$\;
$q_{comp,s} \leftarrow A.q_{s}$\;
$Q_{comp,f} \leftarrow A.Q \setminus A.Q_{f}$\;
$R_{comp} \leftarrow A.R$\;
$\Delta_{comp} \leftarrow A.\Delta \cup \Delta_{dead}$\;
$A_{complement} \leftarrow (Q_{comp},q_{comp,s},Q_{comp,f},R_{comp},\Delta_{comp})$\; 
$\mathtt{return}\ A_{complement}$\;
\caption{Constructing the complement of a \sra\ ($\mathit{Complement}$).}
\label{algorithm:complement}
\end{algorithm}

We can make $A_{d}$ complete by adding a so-called ``dead'' state $q_{dead}$ (non-final) to $A_{d}$.
See Algorithm \ref{algorithm:complement}.
For each state $q$ of $A_{d}$,
we then gather all the conditions on its outgoing transitions.
Let $\Phi$ denote this set of conditions.
We can then create the conjunction of all the negated conditions in $\Phi$:
$\phi_{dead} := (\neg \phi_{1}) \wedge (\neg \phi_{2}) \wedge \cdots \wedge (\neg \phi_{n})$,
where $\phi_{i} \in \Phi$ and $\bigcup\limits_{i=1}^{n} \phi_{i} = \Phi$.
We then add a transition from $q$ to $q_{dead}$ with $\phi_{dead}$ as its condition and $\emptyset$ as its write registers.
If we do this for every state $q \in A_{d}.Q$,
we will have created a \sra\ that is equivalent to $A_{d}$,
since transitions to $q_{dead}$ are only triggered if none of the other conditions in $\Phi$ are triggered.
If there exists a condition $\phi_{i}$ that is triggered,
the new automaton will behave exactly as $A_{d}$ and if no $\phi$ is triggered it will go to $q_{dead}$.
Now, if we add a self-loop transition on $q_{dead}$ with $\top$ as its condition,
we also ensure that the new automaton will always stay in $q_{dead}$ once it enters it.
$q_{dead}$ thus acts as a sink state.
This new automaton $A_{d,c}$ will therefore be equivalent to $A_{d}$ and it will also be both deterministic and complete. 

The final move is to flip all the states of $A_{d,c}$,
i.e., make all of its final states non-final and all of its non-final states final, to obtain an automaton $A_{complement}$.
This then ensures that if a string $S$ is accepted by $A$ (or $A_{d}$),
it will not be accepted by $A_{complement}$ and if it is accepted by $A_{complement}$ it will not be accepted by $A$.
This is indeed possible because $A$ (and $A_{complement}$) is deterministic and complete.
Therefore, 
for every string $S$,
there exists exactly one run of $A$ (and $A_{complement}$) over $S$.
If $A$, after reading $S$, reaches a final state,
$A_{complement}$ necessarily reaches a non-final state and vice versa.
Therefore,
for every windowed \sra\ $A$ we can indeed construct a \sra\ which accepts the complement of the language of $A$. 

Notice that this trick of flipping the states would not be possible if $A$ were non-deterministic.
To see this,
assume that $A$ is non-deterministic and at the end of $S$ it reaches states $q_{1}$ and $q_{2}$,
where $q_{1}$ is non-final and $q_{2}$ is final.
This means that $S$ is accepted by $A$.
If we flip the states of the non-deterministic $A$ to get its complement $A_{complement}$,
we would again reach $q_{1}$ and $q_{2}$,
where,
in this case,
$q_{1}$ is final and $q_{2}$ is non-final.
$A_{complement}$ would thus again accept $S$,
which is not the desired behavior for $A_{complement}$.
\end{proof}

\subsection{Proof of Proposition \ref{proposition:streamingsrem}}
\label{sec:proof:streamingsrem}
\ifdefined\FI
\begin{proposition}
If $S=t_{1},t_{2},\cdots$ is a stream of elements from a universe $\mathcal{U}$ of a $\mathcal{V}$-structure $\mathcal{M}$, 
where $t_{i} \in \mathcal{U}$, 
and $e$ is a \srem\ over $\mathcal{M}$,
then, for every $S_{m..k}$, $S_{m..k} \in \mathcal{L}(e)$ iff $S_{1..k} \in \mathcal{L}(e_{s})$ (and $S_{1..k} \in \mathcal{L}(A_{e_{s}})$).
\end{proposition}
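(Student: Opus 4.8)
The plan is to prove the biconditional $S_{m..k} \in \mathcal{L}(e) \Leftrightarrow S_{1..k} \in \mathcal{L}(e_s)$ directly from the definition of the semantics of \srem\ (Definition \ref{definition:srem_semantics}) and the definition of the language of a \srem\ (Definition \ref{definition:language_srem}), recalling that $e_s = \top^* \cdot e$. The key observation is that $\top^*$ accepts \emph{every} finite string and, crucially, leaves the valuation untouched: for any string $S'$ and any valuation $v$, we have $(\top^*, S', v) \vdash v$. This follows by a straightforward induction on $|S'|$ from case 6 of Definition \ref{definition:srem_semantics}, using that $(\top, u, v) \vdash v$ for every character $u$ (case 2, since $(u,v) \models \top$ always and $\top$ does not write to any register).

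For the forward direction, suppose $S_{m..k} \in \mathcal{L}(e)$, i.e.\ $(e, S_{m..k}, \sharp) \vdash v$ for some valuation $v$. Write $S_{1..k} = S_{1..m-1} \cdot S_{m..k}$. Since $(\top^*, S_{1..m-1}, \sharp) \vdash \sharp$ by the observation above, and $(e, S_{m..k}, \sharp) \vdash v$ by hypothesis, case 4 (concatenation) of Definition \ref{definition:srem_semantics} gives $(\top^* \cdot e, S_{1..k}, \sharp) \vdash v$, i.e.\ $S_{1..k} \in \mathcal{L}(e_s)$. For the converse, suppose $S_{1..k} \in \mathcal{L}(e_s)$, i.e.\ $(\top^* \cdot e, S_{1..k}, \sharp) \vdash v$ for some $v$. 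By case 4, there is a split $S_{1..k} = S' \cdot S''$ and a valuation $v''$ with $(\top^*, S', \sharp) \vdash v''$ and $(e, S'', v'') \vdash v$. The first relation forces $v'' = \sharp$ (again by the observation: $\top^*$ never changes the valuation, and one can also see that $\top^*$ imposes no constraint on $S'$ so any split is reachable only with unchanged registers). Since $S' \cdot S'' = S_{1..k}$ and $S'$ is a prefix, $S''$ is a suffix $S_{m..k}$ for $m = |S'| + 1$, and $(e, S_{m..k}, \sharp) \vdash v$ shows $S_{m..k} \in \mathcal{L}(e)$. Finally, the parenthetical claim $S_{1..k} \in \mathcal{L}(e_s) \Leftrightarrow S_{1..k} \in \mathcal{L}(A_{e_s})$ is immediate from Theorem \ref{theorem:srem2sra}, since $A_{e_s}$ is by definition a \sra\ equivalent to $e_s$, so $\mathcal{L}(e_s) = \mathcal{L}(A_{e_s})$.

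The only mildly delicate point — and the one I would state as a named sublemma before the main argument — is the fact that $(\top^*, S', v) \vdash v$ for all $S'$ and $v$, together with the fact that \emph{every} valuation reachable from $\sharp$ by reading $S'$ through $\top^*$ is exactly $\sharp$ (not merely that $\sharp$ is reachable). This is what lets us conclude $v'' = \sharp$ in the converse direction rather than having to chase which valuations $\top^*$ could produce. It is a routine induction on the length of $S'$ using cases 1, 2, and 6 of Definition \ref{definition:srem_semantics}, observing that a bare condition $\top$ (without a $\downarrow r_i$) falls under case 2 and hence yields $v' = v$. I do not anticipate any real obstacle here; the proof is essentially bookkeeping with the inductive definition of $\vdash$, and the streaming intuition ("$\top^*$ skips an arbitrary prefix and starts recognition of $e$ at any index") is faithfully captured by the concatenation case.
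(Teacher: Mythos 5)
Your proof is correct and follows essentially the same route as the paper's: split $S_{1..k}$ into the prefix $S_{1..(m-1)}$ accepted by $\top^{*}$ and the suffix $S_{m..k}$ accepted by $e$, and invoke the concatenation case of the semantics in both directions. You are in fact slightly more careful than the paper, which silently uses the identity $\mathcal{L}(\top^{*} \cdot e) = \mathcal{L}(\top^{*}) \cdot \mathcal{L}(e)$; your sublemma that $\top^{*}$ never alters the valuation (so the second factor always starts from $\sharp$) is precisely what justifies that identity in a formalism where concatenation threads valuations from one factor to the next.
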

\else
\begin{proposition*}
If $S=t_{1},t_{2},\cdots$ is a stream of elements from a universe $\mathcal{U}$ of a $\mathcal{V}$-structure $\mathcal{M}$, 
where $t_{i} \in \mathcal{U}$, 
and $e$ is a \srem\ over $\mathcal{M}$,
then, for every $S_{m..k}$, $S_{m..k} \in \mathcal{L}(e)$ iff $S_{1..k} \in \mathcal{L}(e_{s})$ (and $S_{1..k} \in \mathcal{L}(A_{e_{s}})$).
\end{proposition*}
\fi

\begin{proof}
First, assume that $S_{m..k} \in \mathcal{L}(e)$ for some $m, 1 \leq m \leq k$
(we set $S_{1..0} = \epsilon$). 
Then, for $S_{1..k} = S_{1..(m-1)} \cdot S_{m..k}$, $S_{1..(m-1)} \in \mathcal{L}(\top^{*})$, 
since $\top^{*}$ accepts every string (sub-stream),
including $\epsilon$. 
We know that $S_{m..k} \in \mathcal{L}(e)$, 
thus $S_{1..k} \in \mathcal{L}(\top^{*}) \cdot \mathcal{L}(e) = \mathcal{L}(\top^{*} \cdot e) = \mathcal{L}(e_{s})$.
Conversely, assume that $S_{1..k} \in \mathcal{L}(e_{s})$.
Therefore, $S_{1..k} \in \mathcal{L}(\top^{*} \cdot e) = \mathcal{L}(\top^{*}) \cdot \mathcal{L}(e)$.
As a result, 
$S_{1..k}$ may be split as $S_{1..k} = S_{1..(m-1)} \cdot S_{m..k}$ such that $S_{1..(m-1)} \in \mathcal{L}(\top^{*})$ and $S_{m..k} \in \mathcal{L}(e)$. 
Note that $S_{1..(m-1)} = \epsilon$ is also possible, 
in which case the result still holds, 
since $\epsilon \in \mathcal{L}(\top^{*})$.
\end{proof}


\end{document}